\def\UseBibLatex{1}
\def\input@path{{styles/}}
\numberwithin{figure}{section}%
\numberwithin{table}{section}%
\numberwithin{equation}{section}%
\providecommand{\BibLatexMode}[1]{}
\providecommand{\BibTexMode}[1]{}
\renewcommand{\BibLatexMode}[1]{#1}
\renewcommand{\BibTexMode}[1]{}
  \renewcommand{\BibLatexMode}[1]{}
  \renewcommand{\BibTexMode}[1]{#1}
\providecommand{\remove}[1]{}
\renewcommand{\Re}{{\rm I\!\hspace{-0.025em} R}}
\def\R{{\cal R}}
\newcommand{\g}[1]{\gamma_{#1}}
\def\blankline{\hbox{}}
\newcommand{\bi}{\begin{itemize}}
\newcommand{\ei}{\end{itemize}}
\newcommand{\be}{\begin{enumerate}}
\newcommand{\ee}{\end{enumerate}}
\theoremstyle{plain}%
\newtheorem{theorem}{Theorem}[section]
\newtheorem{lemma}[theorem]{Lemma}
\newtheorem{corollary}[theorem]{Corollary}
\theoremstyle{plain}%
\newtheorem*{remark:unnumbered}[theorem]{Remark}%
\newtheorem{remark}[theorem]{Remark}%
\newtheorem{defn}[theorem]{Definition}
\theoremstyle{nonumberplain}%
\newtheorem{proof}{Proof:}%
\newtheorem{proofof}{Proof of\!}%
\newenvironment{dfn}{{\vspace*{1ex} \noindent \bf Definition }}{\vspace*{1ex}}
\newcommand{\choosex}[2]{{\binom{#1}{#2}}}
\newcommand{\DBLprime}{{\prime\prime}}
\newcommand{\DS}{{{Davenport-Schinzel}} }
\newcommand{\pth}[1]{\left( {#1} \right)}
\newcommand{\ceil}[1]{\left\lceil {#1} \right\rceil}
\newcommand{\floor}[1]{\left\lfloor {#1} \right\rfloor}
\newcommand{\MakeBig}{\rule[-.2cm]{0cm}{0.4cm}}
\newcommand{\MakeBigger}{\rule[-.25cm]{0cm}{0.75cm}}
\newcommand{\sep}[1]{\,\left|\, {#1} \MakeBig\right.}
\newcommand{\brc}[1]{\left\{ {#1} \right\}}
\newcommand{\HLink}[2]{\hyperref[#2]{#1~\ref*{#2}}}
\newcommand{\HLinkSuffix}[3]{\hyperref[#2]{#1\ref*{#2}{#3}}}
\newcommand{\figlab}[1]{\label{fig:#1}}
\newcommand{\figref}[1]{\HLink{Figure}{fig:#1}}
\newcommand{\thmlab}[1]{{\label{theo:#1}}}
\newcommand{\thmref}[1]{\HLink{Theorem}{theo:#1}}
\newcommand{\corlab}[1]{\label{cor:#1}}
\newcommand{\lemlab}[1]{\label{lemma:#1}}
\newcommand{\lemref}[1]{\HLink{Lemma}{lemma:#1}}%
\providecommand{\eqlab}[1]{}%
\renewcommand{\eqlab}[1]{\label{equation:#1}}
\newcommand{\Eqref}[1]{\HLinkSuffix{Eq.~(}{equation:#1}{)}}
\providecommand{\etal}{et~al.\xspace}
\renewcommand{\etal}{et~al.\xspace}
\newcommand{\remlab}[1]{\label{rem:#1}}
\newcommand{\remref}[1]{\HLink{Remark}{rem:#1}}%
\newcommand{\seclab}[1]{\label{sec:#1}}
\newcommand{\secref}[1]{\HLink{Section}{sec:#1}}
\newcommand{\tblref}[1]{\HLink{Table}{table:#1}}
\newcommand{\tbllab}[1]{\label{table:#1}}
\newlength{\CharHeight}
   \providecommand{\myqedsymbol}{}
   \renewcommand{\myqedsymbol}{%
      \reflectbox{\includegraphics[height=1.4\CharHeight]%
         {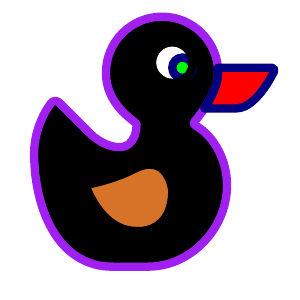}}%
   }%
\begin{document}

\title{The Complexity of One or Many Faces in the Overlay of Many Arrangements\footnote{This article is based on the author's master thesis in Tel-Aviv university under the guidance of Micha Sharir. It appeared in journal form in \cite{h-mcl-99}.}%
}%
\author{Sariel Har-Peled%
   \thanks{School of Mathematical Sciences, Tel-Aviv University.}%
}

\date{May 8, 1995\thanks{\LaTeX{}ed on \today.}}
\maketitle

\begin{abstract}
    We present an extension of the Combination Lemma of \cite{GSS89} that expresses the complexity of one or several faces in the overlay of many arrangements, as a function of the number of arrangements, the number of faces, and the complexities of these faces in the separate arrangements.  Several applications of the new Combination Lemma are presented: We first show that the complexity of a single face in an arrangement of $k$ simple polygons with a total of $n$ sides is $\Theta(n \alpha(k) )$, where $\alpha(\cdot)$ is the inverse of Ackermann's function.  We also give a new and simpler proof of the bound $O \pth{ \sqrt{m} \lambda_{s+2}( n ) }$ on the total number of edges of $m$ faces in an arrangement of $n$ Jordan arcs, each pair of which intersect in at most $s$ points, where $\lambda_{s}(n)$ is the maximum length of a \DS sequence of order $s$ with $n$ symbols. We extend this result, showing that the total number of edges of $m$ faces in a sparse arrangement of $n$ Jordan arcs is $O \left( (n + \sqrt{m}\sqrt{w}) \frac{\lambda_{s+2}(n)}{n} \right)$, where $w$ is the total complexity of the arrangement. Several other applications and variants of the Combination Lemma are also presented.
\end{abstract}

\section{Introduction}
\seclab{introduction:thesis}

Let $\Gamma$ be a given set of $n$ Jordan arcs%
\footnote{A {\em Jordan arc} is the image of a continuous $1\!\!-\!\!1$ mapping of the unit interval $[0,1]$ into the plane.}%
in the plane, so that any pair of arcs from $\Gamma$ intersect at most $s$ times (for some fixed constant $s$).  Let $A(\Gamma)$ denote the {\em arrangement} of $\Gamma$, namely the partition of the plane induced by the arcs of $\Gamma$ into $O(n^{2})$ faces, edges and vertices.  The vertices are the endpoints and the points of intersection of the arcs in $\Gamma$, the edges are the maximal connected portions of the arcs not containing any vertex, and the faces are the connected components of the complement of the union of the arcs in $\Gamma$.  Such an arrangement is depicted in \figref{arrangement}. See \cite{sa-dsstg-95} for more details concerning planar arrangements of arcs.

\begin{figure}
    \centering%
    \includegraphics{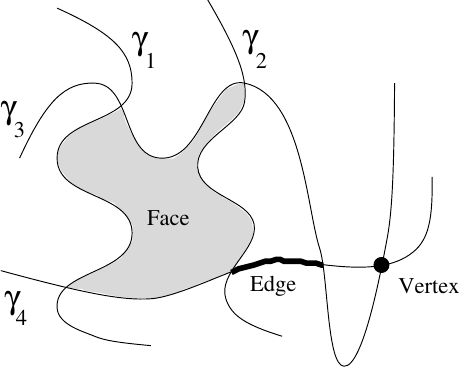}
    \caption{The arrangement of $\brc{ \gamma_1,
      \gamma_2, \gamma_3, \gamma_4 }$.}
    \figlab{arrangement}
\end{figure}

For a face $F$ of the arrangement $A(\Gamma)$, the {\em complexity} of $F$ is the number of vertices and edges (subarcs) along the boundary of $F$.

Let $\Gamma_{1}, \ldots, \Gamma_{t}$ be $t$ sets of Jordan arcs, so that any pair of arcs from $\Gamma = \bigcup_{i=1}^{t} \Gamma_{i}$ intersect at most $s$ times. The arrangement $A( \Gamma )$ is called the {\em overlay} of the arrangements $A(\Gamma_{1}), \ldots, A(\Gamma_{t})$.  Given a set of points $P = \{ p_{1}, \ldots, p_{k} \}$, none lying on any arc of $\Gamma$, we call a face, of any arrangement, {\em marked} if it contains at least one point from $P$.

We are interested in the complexity of the marked faces in $A( \Gamma )$, expressed as a function of the total complexity of the marked faces in the arrangements $A(\Gamma_{1}), \ldots, A(\Gamma_{t})$.  The case $t=2$ has been studied by Guibas \etal.  \cite{GSS89}, where the following result was proved:

\begin{lemma}
    \lemlab{std_combination}%
    Given two sets $\Gamma_{1}, \Gamma_{2}$ of Jordan arcs as above, and a set $P$ of $k$ marking points, the total complexity of all the marked faces in $A(\Gamma_{1} \cup \Gamma_{2})$ is $O( r + b + k)$, where $r, b$ are the total complexities of all marked faces in $A(\Gamma_{1}), A(\Gamma_{2})$, respectively.
\end{lemma}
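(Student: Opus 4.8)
The plan is to reduce the whole estimate to counting red--blue intersection points lying on the boundaries of the marked faces of the overlay, and then to bound that count. Write $\Gamma=\Gamma_1\cup\Gamma_2$, call the arcs of $\Gamma_1$ \emph{red} and those of $\Gamma_2$ \emph{blue}, and (by a standard perturbation argument) assume general position, so that at every vertex of $A(\Gamma)$ at most two arcs meet and they cross transversally; this changes $r,b,k$ and the overlay complexity by at most constant factors. The first observation is that every marked face $F$ of $A(\Gamma)$ is contained in a unique face $F_1$ of $A(\Gamma_1)$ and a unique face $F_2$ of $A(\Gamma_2)$, and since a marking point of $F$ also lies in $F_1$ and in $F_2$, both $F_1$ and $F_2$ are marked. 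The boundary of $F$ thus consists of sub-arcs that are pieces of red edges of $A(\Gamma_1)$ (lying on $\partial F_1$) and of blue edges of $A(\Gamma_2)$ (lying on $\partial F_2$), plus vertices of four kinds: arc endpoints, red--red crossings, blue--blue crossings, and red--blue crossings.

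First I would charge away everything but the red--blue crossings. The complexity of $F$ is $\Theta$ of the number of edge-pieces along $\partial F$. A red edge $e$ of $A(\Gamma_1)$ on $\partial F_1$ is cut by the blue arcs crossing it into subsegments, and on the $F_1$-side each subsegment borders a single face of $A(\Gamma)$; hence the number of red edge-pieces over all marked faces of $A(\Gamma)$ is at most $O(r)$ (the number of red edges of marked faces of $A(\Gamma_1)$) plus $O(1)$ times the number of red--blue crossings incident to marked faces of $A(\Gamma)$ (the left endpoint of each marked red edge-piece is either such a crossing or a genuine red vertex, each charged $O(1)$ times). A red--red crossing on $\partial F$ is locally a corner of a face of $A(\Gamma_1)$, so it lies on $\partial F_1$ and is charged by $O(1)$ marked overlay faces (one per local wedge of $F_1$); so these number $O(r)$, and symmetrically blue--blue crossings $O(b)$, and arc endpoints $O(r+b)$. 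Consequently the total complexity of the marked faces of $A(\Gamma)$ is $O(r+b+X)$, where $X$ is the number of red--blue crossings lying on the boundary of at least one marked face of $A(\Gamma)$, and the remaining task is to prove $X=O(r+b+k)$.

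Along each boundary cycle of a marked face $F$ the edge-pieces group into maximal monochromatic chains, red and blue alternating, whose junctions are exactly the red--blue crossings on that cycle (arc endpoints and monochromatic crossings sit in the relative interior of chains); since each red--blue crossing is incident to $O(1)$ marked faces, $X=\Theta$ of the total number of monochromatic chains over all marked overlay faces. Call a red chain \emph{long} if it contains an arc endpoint or a red--red crossing, and \emph{short} otherwise, so a short red chain is a subsegment of the relative interior of a single red edge of $A(\Gamma_1)$. Every genuine vertex of $\partial F_1$ lies on $O(1)$ red chains of marked overlay faces contained in $F_1$ (one per local wedge of $F_1$), so the long red chains number $O(\textrm{complexity}(F_1))$ per marked $F_1$, hence $O(r)$ in total, and symmetrically there are $O(b)$ long blue chains. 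If one of the two blue chains abutting a short red chain $\sigma$ on $\partial F$ is long, I charge $\sigma$ to it with $O(1)$ congestion; this accounts for all chains except the short ones both of whose neighbours are also short, which string together into maximal all-short runs and all-short boundary cycles.

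The hard part will be bounding the total length of the all-short runs by $O(r+b+k)$; this is where the marking points must finally enter and where I expect the real difficulty to lie. Each all-short boundary cycle $\kappa$ of a marked overlay face $F$ is a closed curve composed of subsegments of $\partial F_1$ alternating with subsegments of $\partial F_2$, its corners being crossings of a red edge of $\partial F_1$ with a blue edge of $\partial F_2$. The point to extract is that such configurations cannot proliferate freely: since any red arc and any blue arc cross at most $s$ times, a long all-short run along $\partial F_1\cap F_2$ must ``use up'' many edges of $\partial F_1$ or of $\partial F_2$, so that a nesting/winding analysis of the all-short cycles lets each one be charged either to a distinct genuine vertex of $\partial F_1$ or $\partial F_2$ (reusing the $O(r+b)$ budget with $O(1)$ congestion) or to a marking point that it encloses or that lies in the hole it bounds (with $O(1)$ congestion, producing the $O(k)$ term). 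Carrying this charging out carefully --- in particular, controlling the congestion at a marking point when many nested all-short cycles surround it, and dealing with holes of marked faces --- is the technical core; once it is in place, combining with the reductions above gives the total complexity bound $O(r+b+k)$ asserted in the lemma.
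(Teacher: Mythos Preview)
This lemma is quoted from \cite{GSS89} and the paper does not supply its own proof of it; the underlying argument does, however, reappear inside the paper in the proof of \lemref{splitting:number:bound}. There, for a fixed marked face $F$ of one colour, one builds a planar bipartite multigraph $K$: one vertex class is the set of marking points $p_1,\ldots,p_{k(F)}$ inside $F$, the other has one vertex $q_m$ per connected component $\zeta_m$ of $\partial F$; for each maximal portion $\delta$ of some $\zeta_m$ that meets the boundary of only a single marked overlay face $E_i$, draw an edge from $q_m$ to $p_i$ through a point of $\delta\cap\partial E_i$. The resulting drawing is planar, bipartite, and every face has at least four sides, so Euler's formula gives at most $2(k(F)+l(F))$ portions $\delta$. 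Summing over all marked faces of both colours yields the $O(r+b+k)$ bound.

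Your reduction to counting red--blue crossings on marked-face boundaries, and your long/short chain dichotomy, are correct and are in fact the same decomposition the \cite{GSS89} proof starts from (the portions $\delta$ above are precisely groupings of your monochromatic chains by which $E_i$ they touch). The gap is in the last step. You explicitly flag bounding the all-short runs and cycles as ``the technical core'' and then offer only a heuristic: charge each all-short cycle either to a genuine vertex it uses up or to a marking point it encloses, with $O(1)$ congestion. That congestion bound is exactly the difficulty, and your sketch does not establish it. A single marked red face $F_1$ can contain many marking points and have many boundary components, and the all-short portions weaving among them can be numerous without any \emph{local} reason limiting how many meet a given component or surround a given point; the statement that a long all-short run must ``use up'' many edges of $\partial F_1$ or $\partial F_2$ is also not true as stated, since the same red edge and blue edge may be revisited up to $s$ times. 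The \cite{GSS89} argument resolves this not by per-point charging but by the \emph{global} planarity of $K$ together with Euler's formula, which directly forces the number of $\delta$-portions to be linear. That planarity/Euler step is the idea missing from your proposal.
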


\begin{figure}[ht]
    \centering
    \includegraphics{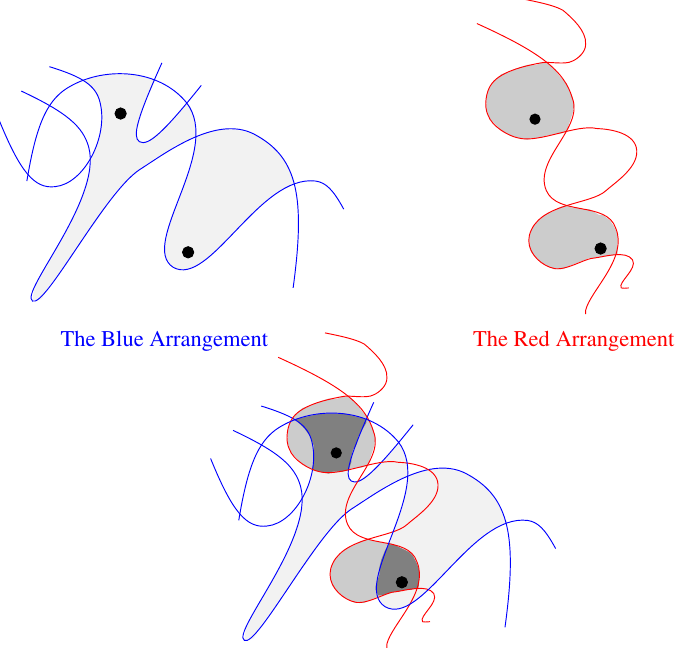}
    \caption{Overlay of two arrangements with two
       marked faces}
    \figlab{overlay}
\end{figure}

See \figref{overlay} for an example of an overlay of two arrangements.

The constant in the bound stated in the lemma is $s+3$.  This is already the case when $P$ contains just a single point (i.e., only a single face is marked), where we have the following more detailed bound:

\begin{lemma}[\cite{GSS89}]
    The complexity of a single marked face $E$ in $A(\Gamma_{1} \cup \Gamma_{2})$ is at most $(s+3)(b + r+ 2u +2v -4t)$ where $b, r$ are as above, $u, v$ are the numbers of connected components of the boundary of the marked faces in $A(\Gamma_{1}), A(\Gamma_{2})$, respectively, and $t$ is the number of connected components of $\partial{E}$.
\end{lemma}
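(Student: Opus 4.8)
The plan is to reduce the complexity of $E$ to a count of one kind of vertex, and then to bound that count by a charging argument backed by a Davenport--Schinzel-type estimate. Write $E_1$ (resp.\ $E_2$) for the face of $A(\Gamma_1)$ (resp.\ $A(\Gamma_2)$) containing the marking point; since $A(\Gamma_1\cup\Gamma_2)$ refines both $A(\Gamma_1)$ and $A(\Gamma_2)$, we have $E\subseteq E_1$ and $E\subseteq E_2$. Classify the vertices of $\partial E$ as \emph{red--red} (an endpoint of an arc of $\Gamma_1$, or a crossing of two arcs of $\Gamma_1$), \emph{blue--blue} (the analogue for $\Gamma_2$), or \emph{red--blue} (a crossing of an arc of $\Gamma_1$ with an arc of $\Gamma_2$). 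Since $\overline E\subseteq\overline{E_1}\cap\overline{E_2}$, a boundary point of $E$ lying on an arc of $\Gamma_i$ must lie on $\partial E_i$; hence every red--red vertex of $\partial E$ lies on $\partial E_1$ and every blue--blue vertex lies on $\partial E_2$, so these two types number at most $r+b$ altogether. Each component of $\partial E$ is a closed curve, on which the number of edges equals the number of vertices, so the complexity of $E$ is $O(N+r+b)$ where $N$ is the number of red--blue vertices of $\partial E$. The whole problem reduces to showing $N=O\!\bigl(s\,(r+b+u+v)\bigr)$ with the right constant and the announced $t$-saving.

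For that I would use a chain decomposition. Walking around a component of $\partial E$, its red--blue vertices cut it into maximal \emph{red chains} --- runs of consecutive subarcs of $\Gamma_1$ that may pass through red--red vertices --- and maximal \emph{blue chains}, which alternate; thus $N=2\cdot(\#\text{red chains})=2\cdot(\#\text{blue chains})$. A red chain containing a red--red vertex is charged to one such vertex, and since a red--red vertex of $\partial E$ lies on at most two red chains, at most $2r$ chains are of this kind. Every other red chain is \emph{edge-confined}: it is a subarc of a single edge $e$ of $A(\Gamma_1)$. As at most $r$ edges of $A(\Gamma_1)$ appear on $\partial E_1$, the number of edge-confined red chains is at most $r$ plus the number of \emph{gaps}, a gap of such an edge $e$ being a maximal subarc of $e$ that is off $\partial E$ but flanked on both sides by subarcs of $e$ lying on $\partial E$ (and symmetrically for blue chains). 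So everything now rests on bounding the total number of gaps.

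The crux --- and the step I expect to be the main obstacle --- is precisely this count, and it is where the factor $s$ enters. The geometry to exploit: a gap of a red edge $e\subseteq\rho$ (with $\rho\in\Gamma_1$) is separated from $E$ by a \emph{detour} of $\partial E$ that leaves $e$ at one end of the gap along an arc of $\Gamma_2$, travels through the interior of $E_1$, and returns to $e$ at the gap's other end; the detour and the gap together bound a \emph{pocket}, a region disjoint from $E$. Since $\partial E\cap\Gamma_1\subseteq\partial E_1$, the detour must run along arcs of $\Gamma_2$ whenever it is in the interior of $E_1$, so every gap is genuinely witnessed by arcs of $\Gamma_2$. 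I would then argue --- by a Jordan-curve / forbidden-configuration argument of the kind used to bound a single face in an arrangement of $n$ arcs, but exploiting here that $E$ is a \emph{single} face and that we compare against $r$ and $b$ rather than against $n$ --- that a fixed pair consisting of an arc of $\Gamma_1$ on $\partial E_1$ and an arc of $\Gamma_2$ on $\partial E_2$ can be responsible for only $O(s)$ gaps, since the two arcs cross at most $s$ times. The difficulty is that the naive count of the interacting pairs refers back to $N$ itself, so the real work is to organize the charging --- using the nesting structure of the pockets, or a global Davenport--Schinzel sequence read off a traversal of $\partial E$ --- so that it telescopes to a genuinely linear $O\!\bigl(s\,(r+b+u+v)\bigr)$ bound rather than a superlinear $\lambda_{s+2}$-type term; resolving this circularity is the technical heart of the lemma. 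The boundary-component contribution $2u+2v-4t$ comes from a refined per-component accounting, in which a component of $\partial E_1$ or of $\partial E_2$ that is reproduced verbatim as a component of $\partial E$ carries no red--blue vertex at all. Combining the pieces --- the $r+b$ monochromatic vertices, $O(s)$ times the $O(r+b+u+v)$ bound from the chains and gaps, and the per-component edge--vertex identity --- and then chasing the constants yields the stated $(s+3)(b+r+2u+2v-4t)$; pinning the constant to exactly $s+3$, rather than a weaker value or a superlinear bound, is the delicate part, and is where the single-face hypothesis and the comparison against $r$ and $b$ are both indispensable.
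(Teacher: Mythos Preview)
The paper does not itself prove this lemma; it is quoted from \cite{GSS89} as background. The natural benchmark is the paper's proof of the generalization, \thmref{complex:Face:In:Overlay}, in \secref{single:face:overlay}: specialized to two colors, that argument recovers exactly this bound with the constant $s+3$.

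Your reduction to counting red--blue vertices is correct, and the chain decomposition is a reasonable start, but the step you yourself flag as the crux is a genuine gap, not just a detail. Charging a gap on a red edge to the red/blue arc pair bounding its pocket and invoking the $\le s$ crossings between them does not close: the number of interacting pairs is essentially the quantity $N$ you are trying to bound, so the argument is circular exactly as you say, and nothing in your outline breaks the circularity. Your guess about the origin of the term $2u+2v-4t$ (components of $\partial E_i$ reproduced verbatim in $\partial E$) is also not how it arises.

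The missing idea --- step~(2) of \secref{single:face:overlay}, and the heart of the \cite{GSS89} proof --- is to abandon gap-counting altogether and bound directly the length of the \emph{restricted} red sequence $R^{(1)}$ read along a component $\zeta$ of $\partial E$. The Consistency Lemma implies that between two consecutive appearances of the same red (directed) edge $a$ in $R^{(1)}$, every intervening red symbol lies on a connected component of $\partial E_1$ \emph{different} from the one containing $a$. Hence the circular sequence of components of $\partial E_1$ visited along $\zeta$ (with adjacent repeats suppressed) is a $DS(u_\zeta,2)$ sequence, of length at most $2u_\zeta-2$, and each duplication in $R^{(1)}$ can be charged injectively to one of its elements; thus $|R^{(1)}|\le r_\zeta+2u_\zeta-2$, and symmetrically $|R^{(2)}|\le b_\zeta+2v_\zeta-2$. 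This is where $2u+2v$ comes from, and it sidesteps your circularity entirely.

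The factor $s+3$ then falls out cleanly. The full boundary sequence $S$ along $\zeta$ is, after the standard orientation and splitting, a $DS(\cdot,s+2)$ sequence. By the two-color case of \thmref{restrict:DS} --- or directly: insert a delimiter after each position of $S$ that is the last one mapping to a given position of $R^{(1)}$ or $R^{(2)}$; each resulting block carries at most one red and one blue symbol and is therefore a $DS(2,s+2)$ sequence of length at most $s+3$ --- one gets $|S|\le(s+3)\bigl(|R^{(1)}|+|R^{(2)}|\bigr)\le(s+3)(r_\zeta+b_\zeta+2u_\zeta+2v_\zeta-4)$. Summing over the $t$ components $\zeta$ of $\partial E$ yields the stated bound.
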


In this paper we consider the more general case of overlaying $t>2$ arrangements, $A( \Gamma_{1} ), \ldots, A(\Gamma_{t})$, as above, where we want to bound the overall complexity of the marked faces in $A( \Gamma_{1} \cup \cdots \cup \Gamma_{t} )$ as a function of the total complexity of the marked faces in $A(\Gamma_{1}), \ldots, A(\Gamma_{t})$, of $t$, the number of arrangements, and of $k$, the number of marking points.  One way of doing this is to overlay the given arrangements two at a time, in a balanced binary-tree fashion, and apply the bounds stated above.  However, this results in a bound that is too large, because the bound is multiplied by a factor of $s+3$ at each overlay step, leading to an overall factor of $(s+3)^{\log_2{t}} = t^{\log_2{(s+3)}}$, which is much too large. We derive an improved, and nearly worst-case tight, bound as follows.

First we deal with the simpler case of lower envelopes. That is, we are given $t$ lower envelopes of $t$ respective collections of continuous functions defined over the reals, any pair of which has at most $s$ intersection points, and we want to bound the complexity of the lower envelope of these $t$ envelopes.  We show, in \secref{lower_envelope_section}, that this complexity is $O \left( \frac{\lambda_{s}(t)}{t} C \right)$, where $C$ is the overall complexity of the individual lower envelopes. Here $\lambda_{s}(t)$ denotes the maximum length of $(t,s)$-\DS sequences ($DS(t, s)$ sequences in short), i.e., sequences composed of $t$ symbols, which do not contain equal adjacent elements and also do not contain any alternating subsequence of two distinct symbols of length $s+2$. It is shown in \cite{HS86}, \cite{ASS89} that $\lambda_{s}(t)$ is almost linear in $t$ for any fixed $s$.  See \cite{a-sdcgp-85, ASS89,HS86, sa-dsstg-95} for a more detailed review of \DS sequences. We also show that this bound is tight in the worst case.

In \secref{restricted:sequences} we prove a bound on the complexity of a \DS sequence as a function of its structure when restricted to subsets of its symbols. This result extends the result on the complexity of the lower envelope of multiple lower envelopes.  We call a symbol $a$ {\em active} in a \DS sequence $S = (s_1, s_2, \ldots, s_q)$ in position $i$, for $1 \leq i \leq q$, if there are $1 \leq u \leq i < v \leq q$, such that $s_u = s_v = a$. We give a new and simpler proof of a result from \cite{hkk-vdrms-92}, stating that the length of a $DS(n, s)$ sequence, such that in each position there are at most $k$ active symbols, is $O \left( \frac{\lambda_{s+2}( k ) }{k} n \right)$.

In \secref{single:face:overlay} we use the result on restricted \DS sequences to show that the complexity of a single face (marked by a point) in an overlay of $t$ arrangements, as above, is $O( \frac{\lambda_{s+2}(t)}{t} C )$, where $C$ is the total complexity of the $t$ marked faces in the $t$ arrangements.  Again, this bound is shown to be tight in the worst case. As a simple application of this bound, we show that the maximum complexity of a single face in an arrangement of $k$ simple polygons with a total of $n$ sides is $\Theta( n\alpha(k))$.

In \secref{line:seg:arrangement} we present new bounds on the complexity of a single face in an arrangement of line segments, for certain special classes of segments. For example, we derive a linear bound on the complexity of a single face, in the case where the endpoints of the line segments all lie on a fixed circle and the line segments are ``large'' relative to the radius of the circle.

In \secref{many:faces:overlay} we prove our main result, showing that the complexity of $k$ marked faces in the overlay of $t$ arrangements is $O \left( \frac{\lambda_{s+2}(t)}{t} C + k \lambda_{s+2}(t) \right)$, where $C$ is the total complexity of all the marked faces in the $t$ arrangements. This bound is shown to be close to tight in the worst case.

We also present some applications of our results. We derive a new proof, of the bound $O \left( \sqrt{m} \lambda_{s+2}(n) \right)$ on the complexity of $m$ faces in the arrangement of $n$ Jordan arcs. Our proof is simpler than the original proof given in \cite{EGPPSS92}.

In \secref{sparse:arrangement} we apply our main theorem to derive the bound \\ $O \left( \frac{\lambda_{s+2}(k)}{k} n + m \lambda_{s+2}(k) \right)$, on the complexity of $m$ faces in an arrangement of $n$ Jordan arcs, where $k$ is the {\em chromatic number} of the arrangement. The chromatic number is the minimum number of colors needed to color the arcs of the arrangement, such that no pair of intersecting arcs are colored by the same color.  As a simple corollary, we show that the complexity of $m$ faces in an arrangement of $n$ Jordan arcs, whose overall complexity $w$, is $O \left( \frac{\lambda_{s+2}(\sqrt{w})}{\sqrt{w}} n + m \lambda_{s+2}(\sqrt{w}) \right)$. Using more sophisticated analysis, we improve this bound to $O \left( (n + \sqrt{m}\sqrt{w}) \frac{\lambda_{s+2}(n)}{n} \right)\,$. If $w =\Omega \pth{ n^{2} }$, this bound coincides with the bound $O \pth{ \sqrt{m} \lambda_{s+2}(n) }$ mentioned above, but it is considerably smaller when $w \ll n^2$.

\section{The Complexity of a Single Face in an Overlay of Arrangements}
\seclab{single_face}

In this section, we derive a tight bound on the maximum complexity of a single face in an overlay of many arrangements of arcs. This bound is used in \secref{many:faces:overlay} to prove our main theorem. We also derive several related results, which we believe to be interesting in their own right.

\subsection{Introduction}
\seclab{face:introduction}

Let $\Gamma$ be a collection of Jordan arcs, any pair of which intersect at most $s$ times. Let $\Gamma_{1}, \ldots, \Gamma_{t}$ be a partition of $\Gamma$ into $t$ disjoint subsets.  Let $A^{1} = A(\Gamma_{1}), \ldots, A^{t} = A(\Gamma_{t})$ denote the arrangements of the arcs in $\Gamma_{1}, \ldots, \Gamma_{t}$, respectively. Given a point $p$, we denote by $f_{i} = f_{i}(p)$ the face in $A^{i}$ that contains $p$, for $1 \leq i \leq t$, and let $C_{i}$ denote the combinatorial complexity of $f_{i}$ (i.e., the number of edges and vertices of $A^{i}$ that appear along $\partial{f_{i}}$).  Let $C = \sum_{i=1}^{t} C_{i}$ denote the total complexity of all these faces. Let $A = A(\Gamma)$ be the overlay arrangement formed by all the arcs in $\Gamma$.

\begin{theorem}[Single Face Combination Theorem]
    \thmlab{complex:Face:In:Overlay}%
    The complexity of the face in $A$ that contains $p$ is $O \pth{ \frac{\lambda_{s+2}(t)}{t} C }$, and this bound is tight in the worst case.
\end{theorem}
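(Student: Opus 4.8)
The plan is to reduce the geometric problem to a combinatorial statement about Davenport–Schinzel sequences, and then apply the restricted-DS-sequence bound announced in \secref{restricted:sequences}, namely that a $DS(n,s)$ sequence with at most $k$ active symbols in every position has length $O\!\left(\frac{\lambda_{s+2}(k)}{k}\,n\right)$. First I would fix the face $F$ of $A=A(\Gamma)$ containing $p$ and walk along each connected component of $\partial F$, reading off the sequence of arcs of $\Gamma$ that supply the successive edges. The edges of $\partial F$ come in two flavors: those lying on a sub-arc that is ``visible'' in the single face $f_i$ of some $A^i$ containing $p$, and those that are not. The first kind can be charged directly: an edge of $\partial F$ contained in $\partial f_i$ corresponds to an edge of the single face $f_i$, and since the Combination Lemma of Guibas \etal{} (\lemref{std_combination}) controls how boundaries of single faces refine under pairwise overlay, the total number of such edges over all $i$ is $O(C)$. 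So the crux is bounding the edges of the second kind — those on arcs $\gamma\in\Gamma_i$ where the portion of $\gamma$ bounding $F$ lies \emph{outside} $f_i$.

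For the second kind I would argue, as in the Guibas–Sharir–Sharir framework, that the boundary of $F$, traversed in order, induces a Davenport–Schinzel sequence of order $s+2$ on the symbols that label the arcs actually appearing: two edges from the same arc $\gamma$ that are adjacent along $\partial F$ can be merged (so no equal adjacent symbols), and a long alternation $\gamma\,\delta\,\gamma\,\delta\cdots$ of length $s+4$ would force $\gamma$ and $\delta$ to cross more than $s$ times, a contradiction. Thus $|\partial F|$ is bounded by the length of a DS sequence of order $s+2$. The new ingredient is the \emph{active-symbol} bound: I claim that at any position along the traversal of $\partial F$, only $O(t)$ arcs are ``active,'' and more precisely one can set up the encoding so that the relevant active count at each position is $O(t)$ while the total alphabet has size $\Theta(C)$ in an amortized sense. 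Concretely, one decomposes $\partial F$ into $O(C/t)$-length blocks or, better, shows that an arc $\gamma\in\Gamma_i$ can be active only while the traversal is ``near'' the face $f_i$, and the number of arrangements $A^i$ whose face $f_i$ is simultaneously relevant is $O(t)$ trivially (there are only $t$ of them), so plugging $n=\Theta(C)$ and $k=\Theta(t)$ into the restricted-DS bound yields $O\!\left(\frac{\lambda_{s+2}(t)}{t}\,C\right)$.

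The hardest step will be making the active-symbol accounting rigorous: one must show that the sequence read off $\partial F$ can be written as a $DS(\Theta(C),s+2)$ sequence in which at most $\Theta(t)$ symbols are active in each position, which requires carefully choosing which edge-to-symbol encoding to use and arguing that ``deactivated'' arcs genuinely cannot reappear — this is exactly where the single-face structure of each $f_i$ (rather than an arbitrary collection of faces) must be exploited, e.g.\ via a charging scheme that associates each block of the traversal to a bounded set of the faces $f_1,\dots,f_t$ and uses the pairwise Combination Lemma within each block to keep the block lengths under control. Once that is in place, summing the restricted-DS bound over the (at most $t$) connected components of $\partial F$, and folding the first-kind edges' $O(C)$ contribution in, gives the stated upper bound.

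For tightness, I would exhibit a construction matching $\Omega\!\left(\frac{\lambda_{s+2}(t)}{t}\,C\right)$: take the known lower-bound construction realizing a $DS(t,s)$ sequence of length $\Theta(\lambda_{s+2}(t))$ as a lower envelope of $t$ families of curves (or line segments, when $s$ is small), scale each family up to complexity $\Theta(C/t)$ by adding parallel copies so each individual face $f_i$ has complexity $\Theta(C/t)$, and check that the single face of the overlay below the common envelope inherits complexity $\Theta\!\left(\frac{\lambda_{s+2}(t)}{t}\,C\right)$; this parallels the lower-bound argument for the lower-envelope case in \secref{lower_envelope_section}.
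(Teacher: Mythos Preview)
Your overall shape is right---encode $\partial F$ as a $DS(\cdot,s+2)$ sequence and then invoke a restricted-DS bound---but the specific restricted-DS tool you reach for is the wrong one, and the gap is exactly at the step you flag as ``hardest.'' The symbols of the boundary sequence are (oriented, possibly split) \emph{arcs} of $\Gamma$, not the arrangements, and there is no reason only $O(t)$ arcs are active at any position: a single $\Gamma_i$ can have many arcs simultaneously active along $\zeta$. Your attempted justification (``the number of arrangements whose face $f_i$ is simultaneously relevant is $O(t)$ trivially'') bounds the number of active \emph{colors}, not active \emph{symbols}, so plugging $k=\Theta(t)$ into the $DSA(n,s,k)$ lemma is not legitimate. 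Relatedly, your ``two flavors of edges'' dichotomy is not well-posed: every edge of $\partial F$ lying on an arc $\gamma\in\Gamma_i$ necessarily lies on $\partial f_i$, since $F\subseteq f_i$; there is no second kind.

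The paper uses the stronger partition form of the restricted-DS theorem (\thmref{restrict:DS}): partition the symbol set by color $\Gamma_1,\ldots,\Gamma_t$ and bound the length of each restricted sequence $R^{(i)}=S_{\Gamma_i}$. The key combinatorial step you are missing is showing $|R^{(i)}|\le C_\zeta^i+2u_\zeta^i-2$, where $u_\zeta^i$ is the number of connected components of $\partial f_i$ that meet $\zeta$. This follows from an extension of the Consistency Lemma: consecutive appearances of the same $\partial f_i$-arc along $\zeta$ are separated only by arcs from \emph{other} components of $\partial f_i$, so extra repetitions can be charged to the sequence of components of $\partial f_i$ along $\zeta$, which is a circular $DS(u_\zeta^i,2)$ sequence of length at most $2u_\zeta^i-2$. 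Summing gives $\sum_i|R^{(i)}|=O(C_\zeta)$, and then \thmref{restrict:DS} with $k=t$ yields $|S|=O\bigl(\frac{\lambda_{s+2}(t)}{t}C_\zeta\bigr)$. Your lower-bound sketch is essentially the paper's.
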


We first prove the lower bound of the theorem.  Let $\Gamma_{0} = \{ \gamma_{1}, \ldots, \gamma_{t} \}$ be a collection of $t$ Jordan arcs, all contained in some fixed vertical strip, such that any pair of arcs of $\Gamma_{0}$ intersect in at most $s$ points, and such that the complexity of the lower envelope of the arcs in $\Gamma_{0}$ is $\lambda_{s+2}(t)$ (see \cite{HS86}).

Let $a$ denote the width of the strip containing $\Gamma_{0}$, and let $\Gamma_{i}$, for $i=1, \ldots, t$, denote the set
\[
    \brc{ \gamma_{i} + j a \sep{ j = 0, \ldots, m - 1 } },
\]
for some integer $m$, where $\gamma + x$ denotes the copy of $\gamma$ obtained by shifting $\gamma$ by $x$ units in the $x$-direction.  Let $\Gamma = \cup_{i=1}^{t} \Gamma_{i}$, and let $p$ be any point lying below the lower envelope of $\Gamma$.

In this case we have $t$ arrangements, each consisting of pairwise-disjoint arcs. Hence the total complexity $C$ of the faces containing $p$ in the separate arrangements is proportional to the total number of arcs, that is, $C = \Theta(m t)$. The complexity of $f$ is
\[
    \Omega \pth{ \lambda_{s+2}(t) m } = \Omega \pth{ \frac{\lambda_{s+2}(t)}{t} C },
\] %
which establishes the lower bound of the lemma.

Before establishing the upper bound, we first deal, in \secref{lower_envelope_section}, with the simpler case of lower envelopes. Motivated by this result, we prove a corresponding bound on the length of certain restricted \DS sequences in \secref{restricted:sequences}.  We then use this latter bound in \secref{single:face:overlay} to prove  \thmref{complex:Face:In:Overlay}.

\subsection{The Complexity of the Overlay of Lower Envelopes}
\seclab{lower_envelope_section}

As a warm-up exercise, we consider the simpler case of lower envelopes.  Let $F_{1}, \ldots, F_{t}$ be $t$ collections of continuous functions over the reals, and let $F = \bigcup_{i=1}^{t} F_{i}$ be their union. Assume that each pair of functions in $F$ intersect at most $s$ times.  Let $S_{i}$ be the lower envelope of the functions in $F_{i}$, and let $C_{i}$ denote the complexity of $S_{i}$, for $1 \leq i \leq t$.  Let $C$ denote the total complexity of the lower envelopes of $F_{1}, \ldots, F_{t}$, i.e., $C = \sum_{i=1}^{t}C_{i}$.

\begin{lemma}
    \lemlab{lower_env}%
    The complexity of the lower envelope of the functions in $F$ is $O \pth{ \frac{\lambda_{s}(t)}{t} C }$, and this bound is tight in the worst case.
\end{lemma}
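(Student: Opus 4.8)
The plan is to encode the lower envelope of $F$ as a Davenport--Schinzel sequence and bound its length by charging breakpoints to the individual envelopes $S_1,\dots,S_t$. Read the lower envelope of $F$ from left to right: at each $x$, the function attaining the minimum belongs to some $F_i$, and within $F_i$ it must be the function realizing $S_i$ at that point (since the minimum over $F$ is a fortiori the minimum over $F_i$). Thus I can produce a sequence $W$ over the symbol set $\{1,\dots,t\}$, where position $x$ gets the index $i$ of the collection currently attaining the envelope of $F$; after suppressing immediate repetitions, $W$ has length equal to the number of maximal intervals along which a single $F_i$ wins, which is what we must bound. The total complexity of the envelope of $F$ is at most proportional to $|W|$ plus the total number of breakpoints contributed \emph{within} such maximal intervals, and each such internal breakpoint is a breakpoint of the corresponding $S_i$, so there are at most $C$ of them. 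Hence it suffices to show $|W| = O\pth{\frac{\lambda_s(t)}{t} C}$.

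The key structural claim is that $W$ is (essentially) a $DS(t,s)$ sequence, so I can directly invoke the restricted/ordinary Davenport--Schinzel bounds. If some pair of indices $i,j$ alternated $s+2$ times in $W$, I would extract $s+2$ points $x_1 < \dots < x_{s+2}$ at which the envelope of $F$ switches between being realized by $F_i$ and by $F_j$; at consecutive such points one has $S_i(x) - S_j(x)$ changing sign, which forces at least $s+1$ crossings between the arc of $S_i$ and the arc of $S_j$ active there, and these arcs are pieces of functions in $F_i$ and $F_j$ respectively, each pair of which crosses at most $s$ times --- a contradiction once one checks that all $s+1$ crossings can be attributed to a single pair of functions (this uses that between two consecutive alternation points only one function of $F_i$ and one of $F_j$ are involved, after refining $W$ so that positions also record which function wins). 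So the honest statement is: refine $W$ to a sequence $W'$ where position $x$ records the actual winning \emph{function}, still over the ground set $F$; then $W'$ is a $DS(|F|,s)$ sequence, and $|W'|$ is exactly (up to constants) the true complexity of the envelope of $F$. But $|F|$ can be huge, so the naive $\lambda_s(|F|)$ bound is useless --- this is precisely where the ``warm-up'' needs the extra idea.

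The main obstacle, and the point of the lemma, is getting the factor $\frac{\lambda_s(t)}{t} C$ rather than $\lambda_s(|F|)$. The fix is a two-level charging: partition the $x$-axis by the sequence $W$ (collections, not functions) into $|W|$ blocks; inside each block a single $S_i$ is being traced, contributing at most (number of $S_i$-breakpoints in that block) $+\,O(1)$ to the complexity, summing to $O(|W| + C)$; and then bound $|W|$ itself. For $|W|$: it is a sequence over $t$ symbols with no equal adjacent elements, and I claim it contains no alternation of length $s+2$ by the crossing argument above (two collections alternating $s+2$ times would, as argued, force a pair of functions from them to cross $s+1$ times). Hence $|W| \le \lambda_s(t)$. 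This already gives $O(\lambda_s(t) + C)$, and since $C \ge t$ (each $S_i$ is nonempty, complexity $\ge 1$), while $\lambda_s(t) \le \frac{\lambda_s(t)}{t} C$, we get the claimed $O\pth{\frac{\lambda_s(t)}{t} C}$. (When $s$ is small enough that $\lambda_s(t)$ is linear or subadditive this is immediate; for the general superlinear regime one should be slightly more careful and split the arcs of the $S_i$'s into $\sim C/t$ groups and apply the bound $|W|\le \lambda_s(t)$ within each group, summing to $\frac{\lambda_s(t)}{t}\cdot O(C)$ --- this grouping is the real trick and mirrors the standard argument for $m$ faces.) The matching lower bound is the one already constructed in the proof of \thmref{complex:Face:In:Overlay} above, specialized to functions, so nothing new is needed there.
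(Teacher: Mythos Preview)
Your central structural claim --- that the collection-level sequence $W$ is a $DS(t,s)$ sequence --- is false, and this is the genuine gap. Take $t=2$, $s=1$: let $F_1$ consist of many short segments whose lower envelope $S_1$ zigzags, and let $F_2$ be a single horizontal line that $S_1$ crosses $\Theta(C_1)$ times. Then $W$ alternates $1,2,1,2,\ldots$ with length $\Theta(C_1)$, while $\lambda_1(2)=2$. The flaw in your alternation argument is exactly the one you half-noticed: the $s+1$ sign changes of $S_i-S_j$ need not all involve the \emph{same} pair of underlying functions, because $S_i$ and $S_j$ can switch pieces at their own breakpoints between the alternation points. You then invoke ``$|W|\le\lambda_s(t)$'' anyway, so the bound $O(\lambda_s(t)+C)$ is unsupported. (Note also that $O(\lambda_s(t)+C)$ and $O(\frac{\lambda_s(t)}{t}C)$ are the same thing once $C\ge t$, so your worry about ``the superlinear regime'' is misdiagnosing the problem.)

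The fix is precisely what you bury in the final parenthetical, and it is the paper's actual proof: mark all breakpoints of all the $S_i$'s (there are at most $C$ of them), cut the $x$-axis into the resulting \emph{atomic} intervals, and group $t$ consecutive atomic intervals together. Across one group at most $t+(t-1)$ distinct \emph{functions} can appear (one per collection at the left end, and each atomic boundary swaps exactly one function), so the function-level envelope sequence over the group is a $DS(O(t),s)$ sequence of length $\le\lambda_s(O(t))=O(\lambda_s(t))$. There are $O(C/t)$ groups, giving $O(\frac{\lambda_s(t)}{t}C)$. So the ``real trick'' you mention is not a refinement for large $s$; it is the whole argument, and it must be carried out at the level of individual functions, not of the collections.
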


\begin{proof}
    The lower bound can be established in a manner similar to that given above, so we only prove the upper bound.  Let $S$ denote the lower envelope of $F$.  Without loss of generality we can assume that all the lower envelopes $S_{1}, \ldots, S_{t}$ appear along $S$, i.e., for every $1 \leq i \leq t$ there exists $x_{i}$ where $S_{i}$ is the lowest of the $t$ envelopes at $x_{i}$.  If this does not hold, we can discard all the unused lower envelopes, and argue that %
    $|S|=O \pth{ \frac{\lambda_{s}(l)}{l} C }$, where $l < t$ is the number of lower envelopes that actually appear along $S$.  Since $\frac{\lambda_{s}(l)}{l} \leq \frac{\lambda_{s}(t)}{t}$, it follows that %
    $|S| = O \pth{ \frac{\lambda_{s}(t)}{t} C }$.

    \begin{figure}[ht]
        \centering
        \includegraphics{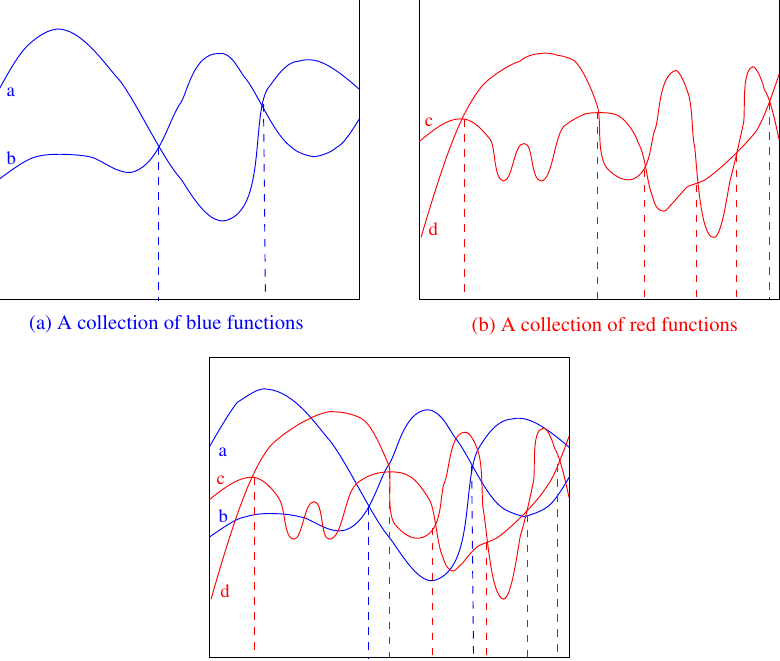}
        \caption{Overlay of the lower envelopes of the red and blue functions}
        \figlab{lower:envelope}
    \end{figure}

    Let $T_{i}$ be the set of transition points (breakpoints) in the lower envelope $S_{i}$, for $1 \leq i \leq t$, where a transition point of $S_{i}$ is the abscissa of an intersection point of two functions of $F_{i}$ that appears along $S_{i}$.

    Let $T = \bigcup_{i=1}^{t} T_{i}$. A maximal interval $I$ whose interior does not contain any point of $T$ is called {\em atomic}.  Thus no lower envelope $S_{1}, \ldots, S_{t}$ has a transition point in the interior of any atomic interval (See \figref{lower:envelope}). It follows that any transition point of $S$ appearing in the interior of an atomic interval is an intersection point between a pair of distinct lower envelopes $S_{i}, S_{j}$.

    Let $I = [a,b]$ be a fixed atomic interval, and let $Y_{a}, Y_{b}$ be the two vertical lines passing through $a, b$, respectively. We wish to bound the complexity of the lower envelope $S$ restricted to the strip between $Y_{a}$ and $Y_{b}$.  By construction, there are at most $t$ functions that may participate in the lower envelope in this strip, one from each collection $F_{i}$, for $1 \leq i \leq t$.

    Consider any two adjacent strips defined by atomic intervals in $T$, and observe that the sets of functions that may participate in the lower envelope over each of these strips differ by exactly one function in the first set being replaced by another function in the second set.  (We assume here general position of the given envelopes; otherwise some straightforward modifications of the following arguments are required.)  Thus, for a group of $k$ adjacent strips, at most $t + k - 1$ functions participate in the lower envelope $S$ over these strips. Thus the complexity of the lower envelope over these strips is at most $\lambda_{s}( t + k -1 )$.

    Now partition the set of strips defined by $T$ into groups, each consisting of $k$ adjacent strips (extending the last group, if necessary, so that it contains at most $2k-1$ strips). We have $\max \pth{ \floor{ \frac{|T| + 1}{k} }, 1 }$ such groups. Thus the complexity of the lower envelope is bounded by
    \[ |S| \leq \max \pth{ \floor{ \frac{|T| + 1}{k} }, 1 } \cdot \lambda_{s}( t + 2k -2 ).
    \]

    Letting $k=t$ and observing that (a) $\lambda_{s}( 3t -2 ) = O(\lambda_{s}(t))$, (b) $|T| + 1 \leq C$, and (c) $C \geq t$, we obtain %
    \[
        |S| = O \pth{ \frac{C}{t} \lambda_{s}(t) } = O \pth{ \frac{\lambda_{s}(t)}{t} C },
    \]
    thus completing the proof of the lemma.
\end{proof}

\begin{remark}
    It is easily verified that the above analysis also extends to the case of partially-defined functions. In this case, the complexity of the lower envelope is at most $O \pth{ \frac{\lambda_{s+2}(t)}{t} C }$.
\end{remark}

\subsection{Restricted \DS Sequences}
\seclab{restricted:sequences}

In this section, we prove a bound on the length of a \DS sequence, expressed in terms of the structure of the sequence restricted to subsets of its symbols. If $S$ is a sequence and $A$ is a subset of its symbols, we denote by $S_{A}$ the subsequence of $S$ obtained by removing all elements not in $A$, and by replacing maximal runs of adjacent equal elements by a single element.

\begin{theorem}
    \thmlab{restrict:DS} Let $S$ be a $DS( n, s )$ sequence whose symbols belong to a set $T$ of size $n$. Suppose that $T$ is the disjoint union of $k$ subsets $T_1, \ldots, T_k$, such that $|S_{T_1}| = C_1, \ldots, |S_{T_k}| = C_k$, Then we have
    \[
        |S| = O \pth{ C \frac{\lambda_{s}(k)}{k} },
    \]
    where $C = \sum_{i=1}^{k} C_i$.
\end{theorem}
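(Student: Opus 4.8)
The plan is to follow the proof of \lemref{lower_env} almost verbatim, with the subsequences $S_{T_1},\ldots,S_{T_k}$ playing the role of the individual lower envelopes there. First I would make the harmless reductions: discard every symbol that does not actually occur in $S$, and every empty class $T_i$. This can only decrease $k$, and since $\lambda_s(\ell)/\ell$ is non-decreasing in $\ell$, a bound in terms of the reduced number of classes implies the stated bound; moreover, after this reduction every $C_i\ge 1$, so $C\ge k$.

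Next I would introduce the analogue of the transition points. Writing $S=(s_1,\ldots,s_q)$, for each $i$ consider the $C_i$ maximal groups of positions of $S$ carrying $T_i$-symbols that collapse to a single element of $S_{T_i}$, and place a \emph{breakpoint} --- a cut between two consecutive positions of $S$ --- between each pair of consecutive such groups. This gives $C_i-1$ breakpoints for class $i$, hence $|B|=\sum_i(C_i-1)=C-k<C$ breakpoints in total, and they partition $\{1,\ldots,q\}$ into at most $C-k+1\le C$ maximal runs of consecutive positions with no breakpoint in their interior; call these the \emph{atomic blocks}.

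The two structural facts needed are: (a) within a single atomic block each class $T_i$ contributes at most one symbol --- since no $T_i$-breakpoint lies inside, all $T_i$-positions there lie in one group and hence carry the same symbol --- so an atomic block sees at most $k$ distinct symbols; and (b) passing from an atomic block to the adjacent one crosses exactly one breakpoint (assuming general position, as in \lemref{lower_env}), which alters the active symbol of exactly one class, so at most one new symbol appears. Consequently a run of $m$ consecutive atomic blocks sees at most $k+m-1$ distinct symbols, and the restriction of $S$ to such a run, being a contiguous subsequence of $S$, is again a Davenport--Schinzel sequence of order $s$, now on at most $k+m-1$ symbols, hence a $DS(k+m-1,s)$ sequence of length at most $\lambda_s(k+m-1)$.

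Finally I would bucket exactly as in \lemref{lower_env}: partition the at most $C$ atomic blocks into $\max\pth{\floor{C/m},1}$ groups of $m$ consecutive blocks each (stretching the last group to at most $2m-1$ blocks), giving $|S|\le\max\pth{\floor{C/m},1}\cdot\lambda_s(k+2m-2)$; taking $m=k$, together with $\lambda_s(3k-2)=O(\lambda_s(k))$ and $C\ge k$, this yields $|S|=O\pth{C\,\lambda_s(k)/k}$. The one delicate point is step (b) --- that an atomic-block boundary changes the active symbol of just a single class --- which is exactly where general position enters; the degenerate case needs only routine changes, e.g.\ perturbing the breakpoint positions, or bucketing the $C-k$ breakpoints themselves instead of the blocks (each group of $m$ breakpoints then spans a portion of $S$ using at most $k+m$ symbols). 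Everything else is pure bookkeeping identical to the lower-envelope argument.
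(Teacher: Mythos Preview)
Your proposal is correct and follows essentially the same argument as the paper: your ``breakpoints'' are the paper's ``delimiters'', your ``atomic blocks'' are the paper's subsequences $\sigma^j$, and the bucketing into groups of $k$ consecutive blocks is identical. The one small difference is that the paper places each delimiter immediately after the \emph{last} position $p(i,j)$ of the $j$th $T_i$-group; since $s_{p(i,j)}\in T_i$ and the $T_i$ are disjoint, these positions are automatically pairwise distinct, so the general-position caveat you raise in step~(b) never actually arises.
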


\begin{proof}
    Without loss of generality, we may assume that all the restricted sequences $S_{T_i}$ are not empty, for $1 \leq i \leq k$.

    Let $S = (s_1, \ldots, s_r)$ be the given sequence. And let $S_1 = S_{T_1}, S_2 = S_{T_2}, \ldots, S_k = S_{T_k}$ denote the restricted sequences, generated by restricting $S$ to $T_1, \ldots, T_k$, respectively. Let $s( i, j )$, for $1 \leq i \leq k, 1 \leq j \leq |S_i|$, denote the symbol appearing at position $j$ in the sequence $S_i$. Let $p = p( i,j)$ denote the last position in $S$ such that $s_p = s(i,j)$ and when restricting $S$ to $T_i$, this position is being mapped to position $j$ of $S_i$.

    We break $S$ into contiguous subsequences by adding delimiters to $S$ as follows. We scan each restricted sequence $S_i$, for $i = 1, \ldots, k$, and insert a delimiter for each symbol $s( i, j )$, for $1 \leq i \leq k, 1 \leq j < |S_i|$, between the positions $p(i,j)$ and $p(i,j)+1$ in $S$.

    Let $\sigma^{1}, \ldots, \sigma^{m}$ denote the maximal contiguous subsequences of $S$ not containing any delimiter, in their order along $S$. Since the total number of delimiters is $C-k$ we have $m= C -k + 1 \leq C$.

    Partition $S$ into disjoint blocks, where each block, except the last one is the concatenation of $k$ consecutive subsequences $\sigma^{j}$; the last block may consist of up to $2k-1$ subsequences. We claim that each block is a $DS(3k, s)$ sequence.  This follows from the fact that only $k$ distinct symbols, one from each subset $T_i$, may appear in any single subsequence $\sigma^{j}$, and that the set of symbols that appear in $\sigma^{j}$ differs from the set of symbols that appear in $\sigma^{j+1}$ by exactly one symbol.  Hence the total length of $S$ is
    \[
        |S| = O \pth{ \floor{ \frac{m}{k} } \lambda_{s}(3k) } = O \pth{ \frac{C}{k} \lambda_s(k) } = O \pth{ C \frac{\lambda_s(k)}{k} }.
    \]
\end{proof}

We present two applications to \thmref{restrict:DS}.  First we give a second proof to \lemref{lower_env}:

\begin{proofof}
    \lemref{lower_env}: Let $S$ be the lower envelope sequence of $F$, and let $S_1, \ldots, S_k$ denote the lower envelope sequences of $F_1, \ldots, F_k$, respectively. It can be easily checked that $S_{F_i}$ is a subsequence of $S_i$, for $i=1, \ldots, k$. Thus the total length of the restricted sequences of $S$ over $F_1, \ldots, F_k$ is bounded by $C$.

    We can therefore apply \thmref{restrict:DS} to $S$, $F$, $F_1, \ldots, F_k$, concluding that \\ $|S| = O \pth{ C \frac{\lambda_{s}(k)}{k} }$ (or, for partially-defined functions, $|S| = O \pth{ C \frac{\lambda_{s+2}(k)}{k} })$, as claimed.
\end{proofof}

\blankline

\begin{defn}
    Let $S = (s_1, \ldots, s_r)$ be a $DS(n, s)$ sequence.  A symbol $a$ is {\em active} in position $j$ in $S$ if there are two indices $1 \leq j^{\prime} \leq j < j^{\DBLprime} \leq r$ such that $s_{j^{\prime}} = s_{j^{\DBLprime}} = a$. Let $v(j)$ denote the number of active symbols in position $j$ of $S$. We call a sequence $S$ a {\em $DSA( n, s, k)$ sequence} if $S$ is a $DS( n, s)$ sequence and, for each $1 \leq j \leq |S|$, we have $v(j) \leq k$. Let $\lambda_{s,k}(n)$ denote the maximum length of a $DSA(n, s, k)$ sequence.
   \end{defn}

We give a new proof to the following lemma, originally established in \cite{hkk-vdrms-92}:

\begin{lemma}
    \lemlab{klara}%
    $\lambda_{s,k}(n) = O( n \frac{\lambda_s(k)}{k} )$.
\end{lemma}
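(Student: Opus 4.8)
The plan is to reduce \lemref{klara} to \thmref{restrict:DS} by exhibiting a partition of the symbol set into $k$ (or roughly $k$) subsets, each of which induces a short restricted subsequence. The key observation is that, since at most $k$ symbols are active at any position, the sequence can be split into contiguous blocks on whose interior only $k$ ``currently active'' symbols live, and within each block the restricted subsequences behave well. First I would scan $S$ from left to right, maintaining the set of active symbols; whenever this set changes (a symbol becomes active for the first time, or a symbol's last occurrence is passed and it becomes inactive), I mark a breakpoint. Between consecutive breakpoints the active set is constant and has size at most $k$, so each such maximal block is itself a $DS(k,s)$ sequence of length at most $\lambda_s(k)$; moreover the number of such blocks is $O(n)$, since each of the $n$ symbols contributes at most one ``activation'' and one ``deactivation'' event. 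This already gives the crude bound $|S| = O(n\lambda_s(k))$, but that is off by the desired factor of $k$ in the denominator.

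To get the sharper bound I would instead group the $O(n)$ blocks into super-blocks of $k$ consecutive blocks each, and argue that across one super-block only $O(k)$ distinct symbols appear in total (each block contributes at most $k$ active symbols, but consecutive blocks' active sets differ by one element, so over $k$ blocks the union has size $O(k)$); then each super-block is a $DS(O(k),s)$ sequence of length $O(\lambda_s(k))$, and there are $O(n/k)$ super-blocks, yielding $|S| = O\pth{\frac{n}{k}\lambda_s(k)}$ as claimed. Alternatively — and this is the cleaner route matching the earlier proofs — I would define the partition $T_1,\ldots,T_k$ not of all symbols but implicitly through the block structure: color each block with one of $k$ colors cyclically, let $T_i$ be the symbols whose ``home block'' has color $i$, and check that $S_{T_i}$ has length $O(\lambda_s(\text{blocks of color }i))$ summing to $C = O(n\lambda_s(k)/k \cdot k)$… but this double-counts, so I expect the honest argument to go directly through the super-block grouping rather than forcing it into the exact shape of \thmref{restrict:DS}.

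The main obstacle is pinning down the claim that consecutive active-set blocks differ by exactly one symbol. This is true for lower envelopes (geometric continuity) but for abstract \DS sequences the active set can in principle change by a single symbol at a time only if we refine the breakpoints finely enough: a symbol becomes inactive at the position just after its last occurrence, and becomes active at its first occurrence, and these events can be ordered so that at each event the set changes by one element — so I would insert a separate breakpoint for each activation and each deactivation event individually (there are at most $2n$ of them, hence $O(n)$ blocks), ensuring the one-element-difference property holds by construction. With that in place, the super-block of $k$ consecutive blocks sees at most $k + (k-1) = O(k)$ distinct symbols, each super-block has length $\le \lambda_s(2k-1) = O(\lambda_s(k))$, and summing over the $\le \lceil 2n/k\rceil$ super-blocks gives the bound. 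I would then remark that, exactly as in \lemref{lower_env}, the whole argument carries over verbatim (with $\lambda_{s+2}$ in place of $\lambda_s$) if one wants the partially-defined / arc version, though that is not needed for the statement as given.
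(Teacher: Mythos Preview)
Your super-block argument is correct and yields the bound, but it is not how the paper proceeds; in effect you are re-deriving the proof of \thmref{restrict:DS} inside the proof of \lemref{klara} rather than invoking \thmref{restrict:DS} as a black box. The paper's proof is shorter and stays on the ``cleaner route'' you sketched and then abandoned: it actually builds a partition $T_1,\ldots,T_k$ of the symbol set and applies \thmref{restrict:DS} directly. The partition is obtained by a greedy slot-allocation while scanning $S$ left to right: maintain $k$ slots; when a symbol is seen for the first time, place it in any currently free slot and mark that slot busy; when the last occurrence of a symbol is passed, free its slot. The $DSA$ hypothesis (at most $k$ active symbols at any position) guarantees a free slot is always available. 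The crucial observation --- which replaces your whole super-block bookkeeping --- is that the symbols assigned to a single slot $T_i$ have pairwise disjoint lifetime intervals in $S$, so in the restricted sequence $S_{T_i}$ each symbol appears exactly once; hence $\sum_i |S_{T_i}| = n$, and \thmref{restrict:DS} immediately gives $|S| = O\bigl(n\,\lambda_s(k)/k\bigr)$.

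What each approach buys: your argument is self-contained and would work even without \thmref{restrict:DS} stated separately, but it duplicates that theorem's mechanics (atomic intervals, one-element-difference property, grouping into chunks of size $k$). The paper's argument isolates the combinatorial content in the slot-allocation step and then gets the bound in one line; it also makes transparent why the answer is $C=n$ in the framework of \thmref{restrict:DS}. Your worry about ``double-counting'' in the coloring approach was unfounded --- the point you missed is that symbols sharing a slot have \emph{non-overlapping} occurrence intervals, so $|S_{T_i}|=|T_i|$ exactly.
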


\begin{proof}
    Let $S$ be a given $DSA(n,s, k)$ sequence, and let $T$ denote the set of its symbols. Assume without loss of generality that each symbol of $T$ appears in $S$ at least twice. We partition $T$ into $k$ subsets. We initialize all subsets $T_i$ to the empty set, and flag each of them as `free'.

    We scan $S$ from left to right. Each time we encounter a symbol $a$ for the first time, we allocate a free subset $T_i$, store $a$ in $T_i$ and flag $T_i$ as `active'.  Each time we encounter the last appearance of a symbol $a$ in $S$, we free the subset that has stored $a$. Since there are at most $k$ active symbols, at any position of $S$, it is clear that this process will not get stuck.

    Clearly, each symbol of any of the restricted sequences $S_{T_1}, \ldots, S_{T_k}$, appears there exactly once. Thus, the total length of these sequences is $n$. Applying \thmref{restrict:DS}, it follows that $|S| = O \pth{ n \frac{\lambda_s(k)}{k} }$.
\end{proof}

\subsection{The Complexity of a Single Face in an Overlay of %
   Arrangements}
\seclab{single:face:overlay}

In this section, we finally prove the upper bound of \thmref{complex:Face:In:Overlay}.

Let $\Gamma = \{ \gamma_1, \ldots, \gamma_n \}$ be a collection of $n$ Jordan arcs in the plane, such that any pair of arcs from $\Gamma$ has at most $s$ intersection points.

Let $f$ be any face of $A(\Gamma)$ and let $\zeta$ be any boundary component of $f$. Without loss of generality, we may assume that $\zeta$ is the external boundary component of $f$.  Traverse $\zeta$ in counterclockwise direction (so that $f$ lies to our left) and let $S = (s_1, s_2, \ldots, s_t)$ be the circular sequence of oriented curves in $\Gamma$ in the order in which they appear along $\zeta$.  More precisely, for each $\gamma_i \in \Gamma$, let $u_i$ and $v_i$ be the endpoints of $\gamma_i$.  Assume that $\gamma_i$ is positively oriented from $u_i$ to $v_i$ (and negatively oriented in the other direction).  Denote the positively oriented curve by $\gamma_{i}^{+}$ and the negatively oriented curve by $\gamma_{i}^{-}$.  If during our traversal of $\zeta$ we encounter a curve $\gamma_{i}$ and follow it in the direction from $u_i$ to $v_i$ (respectively, from $v_i$ to $u_i$), then we add $\gamma_{i}^+$ (respectively, $\gamma_i^-$) to $S$. As an example, if the endpoint $u_i$ of $\gamma_i$ is on $\zeta$ and is not incident to any other arc, then traversing $\zeta$ past $u_i$ will add the pair of elements $\gamma_i^-, \gamma_i^+$ to $S$, and symmetrically for $v_i$.  Note that in this example both sides of an arc $\gamma_i$ might belong to our connected component.

We denote the oriented arcs of $\Gamma$ as $\xi_1, \ldots, \xi_{2n}$. For each $\xi_i$ we denote by $|\xi_i|$ the non-oriented arc $\gamma_j$ coinciding with $\xi_{i}$.  We will use the following results of \cite{GSS89}:

\begin{lemma}[The Consistency Lemma \cite{GSS89}]

    The portions of each arc $\xi_{i}$ appear in $S$ in a circular order that is consistent with their order along the oriented $\xi_{i}$; that is, there exists a starting point in $S$ (which depends on $\xi_{i}$) such that if we read $S$ in circular order starting from that point, we encounter these portions in their order along $\xi_{i}$.  \lemlab{consist}
\end{lemma}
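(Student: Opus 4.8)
The plan is to argue topologically, using the Jordan Curve Theorem as the engine. First, to avoid treating the unbounded face separately and to make ``external boundary component'' unambiguous, I would pass to the one-point compactification $\Re^{2}\cup\{\infty\}$ of the plane (a sphere), which does not change the circular sequence $S$. Parametrize the oriented arc $\xi=\xi_{i}$ as a simple arc $\xi\colon[0,1]\to\Re^{2}\cup\{\infty\}$ from its source $\xi(0)$ to its sink $\xi(1)$, and let $\pi_{1},\dots,\pi_{m}$ be the portions of $\xi$ that occur along $\zeta$, indexed so that $\pi_{r}=\xi([a_{r},b_{r}])$ with $b_{1}\le a_{2}\le b_{2}\le\cdots\le a_{m}$; this is exactly their order along the oriented $\xi$. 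Traversing $\zeta$ with $f$ on the left, these portions occur in some cyclic order, and since a cyclic order on a finite set is determined by the orientations of its triples, it suffices to prove: no triple $a<b<c$ has $\pi_{a},\pi_{c},\pi_{b}$ occurring in this cyclic order along $\zeta$ (``no inverted triple''). Moreover, among all inverted triples I may pass to one minimizing an appropriate measure, so that two of the three portions are consecutive among \emph{all} portions of $\xi$ along $\zeta$.

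Assume, for contradiction, such a minimal inverted triple; say $\pi$ and $\pi'$ are its two portions that are consecutive along $\zeta$. Let $W$ be the sub-walk of $\zeta$ between them --- it carries no further portion of $\xi$ --- and let $\rho$ be the sub-arc of $\gamma_{i}$ joining the corresponding endpoints. Because $\zeta$ is the boundary of the \emph{single} connected open region $f$, one can trace $\zeta$ by a simple closed curve lying in $f$ arbitrarily near it; the part corresponding to $W$ gives a simple arc $\widetilde W$ disjoint from every arc of $\Gamma$, and together with a slight push-off of $\rho$ on its $f$-side this closes up into a Jordan curve $C$ bounding an open disk $D$. The orientation data --- ``$f$ on the left'', and the fact that all portions of $\xi$ are traversed with increasing parameter --- pin down which of the two sides $D$ is. I would then run the standard confinement argument: the third portion of the triple, together with the continuations of $\gamma_{i}$ beyond the two endpoints of $\rho$, is forced onto a definite side of $C$; since $\widetilde W$ lies in the open face $f$ and is disjoint from all arcs, while $\gamma_{i}$ is simple and so never re-meets $\rho$, $\gamma_{i}$ cannot escape that side except along a portion of $\xi$ that is \emph{not} on $\zeta$ --- and chasing this around $\zeta$ contradicts the triple being inverted. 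The degenerate variants ($\rho$ a single point, coincident endpoints, or an endpoint $u_{i}$ or $v_{i}$ of $\gamma_{i}$ lying on $\zeta$, so that consecutive entries $\gamma_{i}^{-},\gamma_{i}^{+}$ enter $S$ --- exactly the situation flagged just before the lemma) are absorbed into the same picture with only local modifications.

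The step I expect to be the real obstacle is precisely this Jordan-curve bookkeeping, and for two structural reasons already visible above. First, $\zeta$ need not be a simple curve --- it may touch itself at vertices of the arrangement --- so I must perturb the relevant sub-walk of $\zeta$ into an honest simple closed curve, pushing it into $f$ consistently with the ``$f$ on the left'' orientation, both to apply the Jordan Curve Theorem at all and to know which of the two sides of $C$ carries the parts of $f$ adjacent to $C$. Second, an arc does \emph{not} separate the plane, so there is no global notion of ``the side of $\gamma_{i}$ facing $f$'' --- indeed, as the text itself remarks, both banks of $\gamma_{i}$ may lie in $f$ --- and the whole argument must therefore be run locally, following $\gamma_{i}$ and tracking its interaction with the single Jordan curve $C$, rather than via any global statement about $f$ and $\gamma_{i}$. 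Once the perturbation, the confinement claim, and the degenerate cases are all made rigorous and uniform, the combinatorial wrapper (minimal inverted triple $\Rightarrow$ contradiction $\Rightarrow$ cyclic order is a rotation of $1,\dots,m$) is routine.
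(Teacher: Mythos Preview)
The paper does not prove this lemma at all: it is quoted verbatim from \cite{GSS89} and used as a black box, so there is no ``paper's own proof'' to compare against. Your proposal is therefore not a reconstruction of anything in the present paper but an attempt at the original Guibas--Sharir--Sifrony argument.

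That said, your outline is broadly in the right spirit. The original proof in \cite{GSS89} (see also \cite{sa-dsstg-95}) is a direct topological argument rather than a triple-based one: it takes two portions $\pi_{r},\pi_{r+1}$ of $\xi_{i}$ that are \emph{consecutive along $\xi_{i}$}, lets $\rho$ be the sub-arc of $\xi_{i}$ between them, and shows that as one continues along $\zeta$ past the far endpoint of $\pi_{r}$, the very next portion of $\xi_{i}$ encountered must be $\pi_{r+1}$. The Jordan-curve step is essentially the one you describe --- close up $\rho$ with a push-off of the relevant stretch of $\zeta$ into $f$ --- but applied once per consecutive pair rather than to a hypothetical inverted triple. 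This avoids the ``minimal inverted triple'' wrapper and the associated case analysis; conversely, your formulation makes the reduction to a single local picture explicit. The genuine technical content (perturbing the non-simple boundary walk into $f$, and tracking sides locally because an arc does not globally separate the plane) is the same in both, and you have identified those difficulties correctly.
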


The following theorem, taken from \cite{GSS89}, gives a tight bound on the complexity of a single face in an arrangement of $n$ arcs:

\begin{theorem}[The Complexity of a Single Face]
    Given a collection \\ $\Gamma= \{ \gamma_{1}, \ldots, \gamma_{n} \}$ of $n$ Jordan arcs, so that every pair of arcs from $\Gamma$ have at most $s$ intersection points, the complexity of a single face in the arrangement $A(\Gamma)$ is $O(\lambda_{s+2}(n))$.  \thmlab{single_Face_Complex}
\end{theorem}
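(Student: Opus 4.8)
The plan is to bound, by $O(\lambda_{s+2}(n))$, the combinatorial complexity of a single boundary component $\zeta$ of the face $f$; the extension to a face with several boundary components follows by a standard device (join the $O(n)$ components of $\partial f$ by infinitesimal ``bridges'', each a pair of almost-coincident auxiliary arcs chosen so as to meet every other arc $O(1)$ times, thereby turning $\partial f$ into a single cycle without changing the order of the arc count; alternatively one uses the separate charging argument of \cite{GSS89}). So fix the external component $\zeta$, traverse it as above, and let $S=(s_1,\dots,s_t)$ be the resulting circular sequence of maximal oriented sub-arcs. In general position every vertex of $A(\Gamma)$ lying on $\zeta$ sits between two consecutive sub-arcs, so the complexity contributed by $\zeta$ is $\Theta(t)$, and it suffices to prove $t = O(\lambda_{s+2}(2n)) = O(\lambda_{s+2}(n))$, where $2n$ is the number of oriented arcs $\gamma_1^{\pm},\dots,\gamma_n^{\pm}$.

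Next I would record the two routine structural facts about $S$. Cutting the circle at an arbitrary place produces a linear word $S'$ of length $t$ over the $2n$-letter alphabet of oriented arcs. First, $S'$ has no two equal adjacent letters: two consecutive maximal sub-arcs of $\zeta$ meet at a vertex $v$ of $A(\Gamma)$, and in general position $v$ is either an endpoint of some $\gamma_i$ (where the traversal switches the orientation of $\gamma_i$, giving the two distinct letters $\gamma_i^{-},\gamma_i^{+}$ in some order) or a transversal crossing of two distinct arcs $\gamma_i,\gamma_j$, at which $\partial f$ necessarily turns from an edge of one arc to an edge of the other (the face occupies one of the four local sectors at $v$, bounded by one edge of $\gamma_i$ and one of $\gamma_j$); a genuine repetition would contradict the maximality of the sub-arcs. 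Second, the Consistency Lemma (\lemref{consist}) tells us that the occurrences of any fixed oriented arc $\xi$ in $S$ appear in the cyclic order induced by the order of the corresponding sub-arcs along $\xi$ itself.

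The heart of the proof is the alternation bound: for any two distinct oriented arcs $\xi,\xi'$, the longest alternating subword $\xi\,\xi'\,\xi\,\xi'\cdots$ of $S'$ has length at most $s+3$, so $S'$ contains no forbidden alternation of length $s+4=(s+2)+2$ and is therefore a $DS(2n,s+2)$ sequence. Assume first $|\xi|\ne|\xi'|$. In a maximal alternation, take two $\xi$-occurrences that are consecutive among all $\{\xi,\xi'\}$-occurrences, with only $\xi'$-occurrences in between, and let $a,a'$ be the corresponding sub-arcs of $\zeta$ on $\xi$. Since $\xi$ carries a fixed orientation throughout, $f$ lies on the same side of $\xi$ along both $a$ and $a'$; by consistency, the portion of the curve $\xi$ from the end of $a$ to the start of $a'$ is a sub-arc of $\xi$ disjoint from $\partial f$, and together with the piece of $\zeta$ that runs from the end of $a$ to the start of $a'$ it forms a closed Jordan curve $C$ that separates $f$ locally from the intervening sub-arc on $\xi'$. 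The classical planarity argument of \cite{GSS89} then forces $|\xi'|$ to cross $|\xi|$ strictly between $a$ and $a'$ (or to use up one of the finitely many endpoints of $\xi,\xi'$), and distinct consecutive pairs yield distinct such crossings; since $|\xi|$ and $|\xi'|$ meet in at most $s$ points and the endpoints account for an $O(1)$ slack, the alternation length is at most $s+3$. The remaining case $|\xi|=|\xi'|$ (the two orientations of a single arc) is handled by the same reasoning and yields an even smaller bound. Pinning down the exact constant and dispatching the various degenerate configurations (self-touchings of $\zeta$, pinch-vertices of $f$, coincidences of endpoints) is the one genuinely delicate step, and is where I expect to spend most of the effort.

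Once $S'$ is known to be a $DS(2n,s+2)$ sequence we are done: $t=|S'|\le\lambda_{s+2}(2n)=O(\lambda_{s+2}(n))$, using the standard fact that $\lambda_{s+2}$ is almost linear, so that doubling the number of symbols costs only a constant factor. Re-summing the single-component bound over the $O(n)$ boundary components of $f$ (via the bridge construction above) then gives the claimed bound $O(\lambda_{s+2}(n))$ for the whole face, completing the proof of \thmref{single_Face_Complex}.
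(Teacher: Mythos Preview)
The paper does not actually prove this theorem; it is quoted from \cite{GSS89} as a known result. Your outline is precisely the GSS89 argument, and the paper itself summarizes that same machinery a few lines later (in the proof of \thmref{complex:Face:In:Overlay}), so in spirit you match the intended proof.

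There is, however, one genuine technical gap in your write-up. You linearize the circular sequence $S$ at an arbitrary point and then work over the $2n$-letter alphabet of oriented arcs. But the Consistency Lemma only guarantees a \emph{cyclic} order of the occurrences of each $\xi$ along $S$; after an arbitrary linear cut, the occurrences of $\xi$ in $S'$ need not appear in the linear order along $\xi$ (they may ``wrap around'' the cut), and your alternation argument silently uses linear consistency when you write ``the portion of the curve $\xi$ from the end of $a$ to the start of $a'$ is a sub-arc of $\xi$.'' The standard fix, which the paper spells out explicitly when it invokes the GSS89 machinery, is to split each oriented arc $\xi_i$ into two further symbols according to whether an occurrence lies before or after $\xi_i$'s cyclic starting index in $S'$; this yields at most $4n$ symbols (the paper's ``$l$ is at most four times the number of original arcs''), after which linear consistency genuinely holds and your alternation/Jordan-curve argument goes through. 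The asymptotic conclusion is unaffected since $\lambda_{s+2}(4n)=O(\lambda_{s+2}(n))$, but without this extra splitting the $DS(2n,s+2)$ claim is not justified.
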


First, here is an outline of the proof of the upper bound of \thmref{complex:Face:In:Overlay}, which somewhat resembles the proof of Theorem 3.1 of \cite{GSS89}:

\begin{itemize}
    \item We analyze each component $\zeta$ of the boundary of $f$ separately.  As in the proof of \lemref{lower_env}, we can assume that arcs from all the $t$ arrangements appear along $\zeta$.  Let $S=(s_{1}, \ldots, s_{q})$ be the circular sequence of arcs appearing along $\zeta$, as defined above.  We apply to $S$ the transformations described in the proof of Theorem 3.1 of \cite{GSS89} (see below for details), after which $S$ becomes a $DS(l, s+2)$ sequence, where $l$ is at most four times the number of original arcs appearing along $\zeta$.

    \item In step (1) below, an extension of the Consistency Lemma shows, for each $1 \leq i \leq t$, that the restricted sequences $R^{(i)} = S_{\Gamma_i}$ of arcs of $\Gamma_{i}$ along $\zeta$ is consistent with the sequences of arcs of $\Gamma_{i}$, along each component of the boundary of the face $f_i$ that contains $f$ in the arrangement $A^{i}$.

    \item Step (2) below shows that the length of $R^{(i)}$ is at most $C_{\zeta}^{i} + 2u_{\zeta}^{i} - 2$, where $C_{\zeta}^{i}$ is the total complexity of the connected components of $\partial{f_{i}}$ appearing along $\zeta$, and $u_{\zeta}^{i}$ is the number of such connected components.  The total length of all the $R^{(i)}$'s is at most $C_{\zeta} + 2u_{\zeta} - 2t = O(C_{\zeta})$, where $C_{\zeta} = \sum_{i=1}^{t} C_{\zeta}^{i}$ and $u_{\zeta} = \sum_{i=1}^{t} u_{\zeta}^{i}$.

    \item Now \thmref{restrict:DS} implies that $|S| = O \pth{\frac{\lambda_{s+2}(t)}{t} C_{\zeta} }$. summing this over all boundary components $\zeta$ completes the proof of the theorem.
\end{itemize}

We now give the proof in detail.  Let us fix a single component $\zeta$ of $\partial{f}$, and assume, without loss of generality, that it is the exterior component. Trace $\zeta$ in counterclockwise direction (with $f$ lying to the left), and let $S=(s_{1}, \ldots, s_{q})$ be the (circular) sequence of the subarcs of $\partial{f_{1}}, \ldots, \partial{f_{t}}$ as they appear along $\zeta$.

\begin{figure}[ht]
    \centering%
    \includegraphics{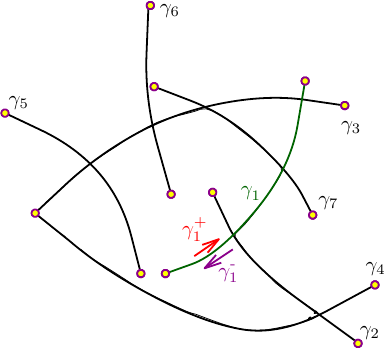}
    \caption{Traversing a face boundary and the resulting sequence %
       $S = \{ \g{1}^{+} ${}$ \g{2}^{-}$\allowbreak{}%
       $\g{2}^{+}$\allowbreak{}$\g{1}^{+}$\allowbreak{}$\g{7}^{-}$%
       \allowbreak{}$\g{3}^{-}$\allowbreak{}$\g{6}^{+}$\allowbreak{}%
       $\g{6}^{-}$\allowbreak{}%
       $\g{3}^{-}$\allowbreak{}%
       $\g{5}^{+}$\allowbreak{}%
       $\g{5}^{-}$\allowbreak{}%
       $\g{3}^{-}$\allowbreak{}%
       $\g{4}^{+} \g{2}^{-} \g{1}^{-} \}$.  }
    \figlab{simp:sequence}
\end{figure}

We apply two transformations to $S$. First we use different symbols for the subarcs traced in the direction of $\gamma_i^{+}$ and for the subarcs traced in the direction of $\gamma^{-}_{i}$ (that is, we use the arcs $\xi_1, \ldots, \xi_2n$ as the symbols of $S$). Second, we apply \lemref{consist} to each $\xi_i$ (See \figref{simp:sequence}). We linearize $S$ so that it starts at $s_1$ and ends at $s_q$. Suppose that the first appearance of $\xi_i$ (as described in \lemref{consist}) is at $s_{\alpha_i}$ and the last appearance is at $s_{\beta_i}$. If $\alpha_i \leq \beta_i$ we do nothing. Otherwise we use two new symbols, one to denote all appearances of $\xi_i$ between $s_{\alpha_i}$ and $s_{\beta_i}$, and one to denote the appearances between $s_1$ and $s_{b_i}$. This done exactly as in the analysis of \cite{GSS89}. With these modifications, $S$ is now a (linear) $DS(l, s+2)$ where $l$ is as above.

By definition, the sum of the lengths of the sequences $S$, over all connected portions $\zeta$ of $\partial{f}$, is the complexity of $f$.

The proof consists of the following steps:

(1) Let $a$ be a subarc of $\partial{f_{i}}$ which appears along $\zeta$. Let $a_{1}, a_{2}$ be two connected portions of $a \cap \zeta$, consecutive along $a$, such that when $a$ is traversed with $f_{i}$ lying to its left, $a_{1}$ precedes $a_{2}$. It follows from the Consistency \lemref{consist} that $a_{1}$ and $a_{2}$ are also adjacent along $\zeta$, in the strong sense that the portion of $\zeta$ between $a_{1}$ and $a_{2}$ does not intersect the connected component of $\partial{f_{i}}$ containing $a$.

(2) We claim that the length of $R^{(i)}$ is at most $C_{\zeta}^{i} + 2u_{\zeta}^{i} - 2$, for $i=1, \ldots, t$.

To prove the claim, assume $a$ is an arc of $\partial{f_{i}}$ appearing more than once in $R^{(i)} = (r^{1}, \ldots, r^{q})$.  By (1), all elements of $R^{(i)}$ lying between two consecutive appearances, $r^{j}, r^{k}$, of $a$ (arranged in this order along $a$) must belong to other components of $\partial{f_{i}}$.  We charge the second appearance of $a$ to the component of $\partial{f_{i}}$ containing $r^{j+1}$. Let $\sigma^{i}$ be the (circular) sequence of the connected components of $\partial{f_{i}}$ in the order that they appear along $\zeta$ (so that no two adjacent elements of $\sigma^{i}$ are equal). As in the proof \cite{EGS90} of the standard Combination \lemref{std_combination}, it is fairly easy to show that $\sigma^{i}$ is a circular $(u_{\zeta}^{i}, 2)$-{Davenport-Schinzel} sequence (i.e., it is composed of $u_{\zeta}^{i}$ symbols, no two adjacent elements of it are equal, and it does not contain a subcycle of the form $(a \cdots b \cdots a \cdots b)$, for any two distinct elements $a,b$).  Hence its length is at most $2u^{i}_{\zeta} - 2$ (see \cite{ES90}). Moreover, it is easily checked that the charging scheme described above never charges an element of $\sigma^{i}$ more than once.  Hence the total number of duplications of elements in $R^{(i)}$ is at most $2u_{\zeta}^{i} - 2$, from which the claim follows.

The total length of the restricted sequences is thus
\begin{eqnarray*} { \sum_{j=1}^{t} |R^{(j)}| } \leq { \sum_{j=1}^{t} {(C_{\zeta}^{j} + 2 u_{\zeta}^{j} -2 )} } \leq C_{\zeta} + 2u_{\zeta} - 2t = O(C_{\zeta}).
\end{eqnarray*}

(3) Now, \thmref{restrict:DS} implies that $|S| = O \pth{ \frac{\lambda_{s+2}(t)}{t} C_{\zeta} }$.

Finally, we sum these inequalities over all components $\zeta$ of $\partial{f}$, and observe that
\[
    \sum_{\zeta \in \partial{f}} C_{\zeta} = C,
\]
because no arc of any sub-arrangement can appear along two distinct components of $\partial{f}$.  This completes the proof of \thmref{complex:Face:In:Overlay}.  {\hfill\myqedsymbol}

\blankline

Here is a simple but useful consequence of \thmref{complex:Face:In:Overlay}:

\begin{lemma}
    The maximum complexity of a single face in the arrangement of $k$ simple polygons having a total of $n$ vertices is $\Theta \pth{ n \alpha(k) }$.

    \lemlab{single:face:polygons}
\end{lemma}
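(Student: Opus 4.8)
The plan is to derive both the upper and lower bounds from \thmref{complex:Face:In:Overlay}, viewing each simple polygon as one of the $t=k$ sub-arrangements. Let $P_1, \ldots, P_k$ be the given simple polygons, with $n_i$ the number of vertices (= number of sides) of $P_i$, so $\sum_{i=1}^k n_i = n$. Set $\Gamma_i$ to be the collection of (closed) edges of $P_i$, so that $\Gamma = \bigcup_i \Gamma_i$ is the set of all $n$ segments, and any pair of segments meets in at most $s=1$ point. Fix a face $f$ of $A(\Gamma)$ and a point $p \in f$. The face $f_i$ of $A(\Gamma_i) = A(P_i)$ containing $p$ is a single face in the arrangement of the edges of one simple polygon; since $P_i$ is a simple closed curve, $f_i$ is either the (unique) bounded interior face or a face of the exterior, and in either case its complexity $C_i$ is $O(n_i)$ — in fact at most $2n_i$ or so, counting each edge of $P_i$ at most twice (once from each side). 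Hence $C = \sum_i C_i = O(n)$, and \thmref{complex:Face:In:Overlay} with $s=1$ gives that the complexity of $f$ is $O\pth{\frac{\lambda_3(k)}{k}\, C} = O\pth{\frac{\lambda_3(k)}{k}\, n}$. Since $\lambda_3(k) = \Theta(k\alpha(k))$ (this is the classical bound on Davenport--Schinzel sequences of order $3$, cf. \cite{HS86}), we conclude that the complexity of $f$ is $O(n\alpha(k))$, establishing the upper bound.

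For the lower bound, I would reuse (a mild variant of) the lower-bound construction already given for \thmref{complex:Face:In:Overlay}. Recall that there one takes $t$ arcs $\gamma_1, \ldots, \gamma_t$ in a vertical strip whose lower envelope has complexity $\lambda_{s+2}(t)$, and for each $i$ replaces $\gamma_i$ by $m$ horizontally shifted copies. For our setting we need the arcs to be edges of simple polygons and we need $s=1$ (segments), so the target complexity is $\lambda_3(k) = \Theta(k\alpha(k))$. Concretely: take $k$ convex $x$-monotone polygonal chains $c_1, \ldots, c_k$, realizing a Davenport--Schinzel sequence of order $1$ of near-maximal length on $k$ symbols whose lower envelope has $\Theta(k\alpha(k))$ breakpoints (the standard construction of \cite{HS86} already produces chains of segments); close each chain into a thin simple polygon $P_i$ by adding a high-up return path, and take $p$ to be a point far below all the $P_i$'s. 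Then the face of $A(\bigcup_i P_i)$ containing $p$ is bounded from above by the lower envelope of all the chains, which has $\Omega(k\alpha(k))$ complexity. If one wants the bound as a function of a general $n$ rather than $n = \Theta(k)$, one replaces each $P_i$ by $m = \lfloor n/k\rfloor$ translated copies exactly as in the proof of \thmref{complex:Face:In:Overlay}; the total number of sides is $\Theta(n)$ and the lower envelope of the union still has $\Omega\pth{\frac{\lambda_3(k)}{k}\, m k} = \Omega(n\alpha(k))$ complexity. This matches the upper bound, giving $\Theta(n\alpha(k))$.

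The main obstacle, and the only place requiring care, is the lower-bound construction: one must exhibit simple polygons (not merely unbounded $x$-monotone chains) whose union has a face of the claimed complexity, and one must check that closing each chain into a simple polygon neither destroys simplicity of the individual polygons nor introduces unwanted intersections that would shortcut the face boundary near $p$. This is handled in the usual way by making the ``return'' portion of each polygon a tall narrow detour placed well above the region where the lower envelope lives, and by taking $p$ sufficiently low; one also notes that the pairwise intersection bound $s=1$ for segments is automatic. The upper-bound direction is essentially immediate from \thmref{complex:Face:In:Overlay} together with the two elementary observations that (i) a single face in the arrangement of one simple polygon's edges has linear complexity, and (ii) $\lambda_3(k) = \Theta(k\alpha(k))$.
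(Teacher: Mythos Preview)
Your upper bound is essentially identical to the paper's: apply \thmref{complex:Face:In:Overlay} with each polygon as one sub-arrangement, note that a single face in $A(P_i)$ has complexity $O(n_i)$, and use $\lambda_3(k)/k = O(\alpha(k))$. Nothing to add there.

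For the lower bound, the paper simply cites \cite{AS94} rather than spelling out a construction, whereas you sketch one. Your idea is the right one (start from the Wiernik--Sharir lower-envelope construction for segments and close each segment into a thin simple polygon), but the step for general $n$ has a gap. When you ``replace each $P_i$ by $m=\lfloor n/k\rfloor$ translated copies,'' you now have $mk$ simple polygons, not $k$; the lemma, however, fixes the number of polygons at $k$. A bound for $mk$ polygons does not automatically yield one for $k$ polygons, since the upper bound $O(n\alpha(\cdot))$ is monotone in the number of polygons. The fix is easy but must be stated: for each $i$, connect the $m$ translated copies of the $i$th segment into a \emph{single} simple polygon by routing thin ``bridges'' far above the strip where the lower envelope lives (so they do not interfere with the face containing $p$). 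This yields exactly $k$ simple polygons with $\Theta(mk)=\Theta(n)$ vertices and preserves the $\Omega(n\alpha(k))$ complexity of the face below.

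Two minor wording issues in your sketch: the phrase ``Davenport--Schinzel sequence of order~1'' is not what you mean (a $DS(k,1)$ sequence has length $\le k$); you want $s=1$ so that the lower-envelope sequence is $DS(k,3)$ of length $\Theta(k\alpha(k))$. Also, plain segments suffice---there is no need for multi-edge ``convex chains,'' and using them would risk letting two chains intersect more than once.
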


\begin{proof}
    The lower bound has been observed in \cite{AS94}. Concerning the upper bound, let $P_1, \ldots, P_{k}$ be the given polygons. Let $A_i = A( P_i)$, for $i=1, \ldots, k$; that is, $A_i$ is just the polygon $P_i$. Let $A$ denote the overlay arrangement of the arrangements (i.e., polygons) $A_{1}, \ldots, A_{k}$. Note that the sum of the complexities of the individual arrangements is $O(n)$. Also, $s=1$ for collections of segments, and $\frac{\lambda_{s+2}(k)}{k} = \frac{\lambda_{3}{(k)}}{k} = O( \alpha(k) )$. Hence the upper bound of the lemma follows immediately from \thmref{complex:Face:In:Overlay}.
\end{proof}

\begin{remark}
    \lemref{single:face:polygons} has been used in \cite{AS94} for the special case of a single face in the complement of the union of $k$ convex polygons with a total of $n$ vertices.
\end{remark}

\begin{defn}
    Let $A$ and $B$ be two sets in the plane.  The {\em Minkowski sum} (or vector sum) of $A$ and $B$, denoted $A \oplus B$, is the set $\brc{ a + b \sep{ a \in A, b \in B }}$.
\end{defn}

A related consequence of \thmref{complex:Face:In:Overlay} is the following result of \cite{HCAHS95}:

\begin{theorem}
    Let $P$ and $Q$ be polygonal sets with $k$ and $n$ vertices respectively, where $k \le n$. The complexity of a face of the complement of the Minkowski sum $P \oplus Q$ is $O \pth{ nk \alpha(k) }$.

    \thmlab{mink}
\end{theorem}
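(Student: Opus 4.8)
The plan is to realize an arbitrary face of the complement of $P\oplus Q$ as a \emph{single} face in the overlay of $O(k)$ arrangements, each consisting of $O(n)$ line segments, and then to invoke \thmref{complex:Face:In:Overlay} with $s=1$.

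First I would reduce to segments via a triangulation. Since $k\le n$, keep the smaller polygonal set $P$ and triangulate it into $t=O(k)$ triangles $\Delta_1,\ldots,\Delta_t$; then $P=\bigcup_{i=1}^{t}\Delta_i$, and hence $P\oplus Q=\bigcup_{i=1}^{t}\pth{\Delta_i\oplus Q}$. Each $\Delta_i\oplus Q$ is the Minkowski sum of a triangle — a convex set of constant complexity — with a polygonal set of $n$ vertices, and is therefore a polygonal region whose boundary is composed of $O(n)$ straight segments. Let $\Gamma_i$ be this set of boundary segments, let $A^i=A(\Gamma_i)$, put $\Gamma=\bigcup_{i=1}^{t}\Gamma_i$, and let $A=A(\Gamma)$ be the overlay of $A^1,\ldots,A^t$. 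All arcs are segments, so $s=1$, and a generic infinitesimal perturbation of the triangulation (needed only so that the $\Gamma_i$ are pairwise disjoint and no two segments of $\Gamma$ overlap collinearly, neither of which can decrease the quantity being bounded) ensures that every pair of arcs of $\Gamma$ meets in at most one point.

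Next comes the topological heart of the argument: every face of the complement of $P\oplus Q$ is exactly a face of the overlay $A$. Fix such a face $F$ and a point $p\in F$. For each $i$ the face of $A^i$ containing $p$ is the face $f_i$ of the complement of $\Delta_i\oplus Q$ containing $p$; since $F$ is connected, contains $p$, and misses $P\oplus Q\supseteq\Delta_i\oplus Q$, we get $F\subseteq f_i$, hence $F\subseteq\bigcap_i f_i$. Every point of the arcs of $\Gamma$ lies on some $\partial(\Delta_i\oplus Q)\subseteq P\oplus Q$, so $F$ contains no point of $\Gamma$; being open, connected and containing $p$, the face $F$ is therefore contained in the face $\widehat f$ of $A$ containing $p$. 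Finally, $\partial(P\oplus Q)\subseteq\bigcup_i\partial(\Delta_i\oplus Q)$, so $\widehat f$, which misses $\Gamma$, also misses $\partial(P\oplus Q)$; being connected and containing $p\notin P\oplus Q$ it lies entirely in the complement, so $\widehat f\subseteq F$. Thus $F=\widehat f$, and $f_i$ is the face of $A^i$ containing $p$, which is precisely the situation addressed by \thmref{complex:Face:In:Overlay}.

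Finally I would plug in the parameters. By \thmref{complex:Face:In:Overlay} the complexity of $F=\widehat f$ is $O\pth{\frac{\lambda_{s+2}(t)}{t}\,C}$ with $C=\sum_{i=1}^{t}C_i$ and $C_i$ the complexity of $f_i$ in $A^i$. The entire arrangement $A^i$ is the planar subdivision induced by the $O(n)$ boundary segments of $\Delta_i\oplus Q$, so its total complexity — and in particular $C_i$ — is $O(n)$, whence $C=O(nk)$. With $s=1$ we have $\lambda_{s+2}(t)=\lambda_3(t)=\Theta(t\,\alpha(t))$, so $\frac{\lambda_3(t)}{t}=O(\alpha(t))=O(\alpha(k))$, and the complexity of $F$ is $O(nk\,\alpha(k))$, as claimed. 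The step I expect to require the most care is the combinatorial fact that $\Delta_i\oplus Q$ has complexity $O(n)$ when $Q$ is an arbitrary (possibly non-convex or multiply connected) polygonal set — the standard bound for the Minkowski sum of a constant-size convex set with an $n$-vertex polygonal region, which must be invoked or reproved cleanly — together with a careful treatment of the degeneracies removed by the perturbation; the identification $F=\widehat f$, although the conceptual crux, is routine once stated correctly.
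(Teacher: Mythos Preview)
Your proof is correct but takes a genuinely different route from the paper's. Both arguments reduce to an application of \thmref{complex:Face:In:Overlay} with $t=O(k)$, $s=1$, and $C=O(nk)$; the difference is in how the $O(k)$ subarrangements are manufactured.

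The paper does not triangulate. Instead it observes that every segment bounding $P\oplus Q$ is of the form (vertex of $P$)+(edge of $Q$) or (edge of $P$)+(vertex of $Q$), and groups them accordingly: for each vertex $p_i$ of $P$ the \emph{vertex set} $V_{Q,p_i}=\{p_i+e: e\text{ edge of }Q\}$ is a translated copy of $\partial Q$, and for each edge $e_j$ of $P$ the \emph{edge set} $V_{Q,e_j}=\{e_j+q: q\text{ vertex of }Q\}$ is a collection of $n$ pairwise-parallel segments. This yields $2k$ arrangements, each of total complexity $O(n)$ for entirely elementary reasons (a translated polygon boundary; parallel segments never cross), so $C=O(nk)$ with no external input.

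Your triangulation approach is equally valid, but it leans on the nontrivial fact that the Minkowski sum of a constant-size convex polygon with an $n$-vertex polygonal set has complexity $O(n)$ --- essentially the Kedem--Livne--Pach--Sharir bound --- which you rightly flag as the delicate step. The paper's decomposition sidesteps this entirely: its per-arrangement $O(n)$ bound is immediate. On the other hand, your topological identification $F=\widehat f$ is spelled out more carefully than the paper's one-line assertion that ``a face of the complement of $P\oplus Q$ is identical to a face of the overlay,'' and the same argument justifies that line. In short: your route works, but the paper's choice of subarrangements is more economical because it avoids importing a Minkowski-sum complexity theorem.
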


\begin{proof}
    Each segment that bounds the Minkowski sum $P \oplus Q$ is the Minkowski sums of a vertex of one polygonal set with an edge of the other (See \cite{GRS83}). We treat these asymmetrically, and define a {\em vertex set} to be the sum of a fixed vertex of $P$ with all the edges of $Q$ and an {\em edge set} to be the sum of a fixed edge of $P$ and all the vertices of $Q$.

    Let $p_1, \ldots, p_k$ be the vertices of $P$ and let $e_1, \ldots, e_k$ be the edges of $P$.

    Let $V_{Q,p_i}$ be the vertex set resulting from the contact of $p_i \in P$, for $i = 1, \ldots, k$, with the edges of $Q$. Clearly, $V_{Q, p_i}$ is just a translated copy of $Q$.

    Let $V_{Q, e_j}$ be the edge set resulting from the contact of the edge $e_j \in P$, for $j=1, \ldots, k$, with the vertices of $Q$.  Clearly, $V_{Q, e_j}$ is a collection of $n$ parallel segments.

    A face of the complement of $P \oplus Q$ is identical to a face of the overlay of
    \begin{equation*}
        V_{Q, p_1}, \ldots, V_{Q, p_k}, V_{Q,e_1}, \ldots, V_{Q, e_k}.
    \end{equation*}
    The total complexity of these $2k$ arrangements is $O(nk)$. Thus, by \thmref{complex:Face:In:Overlay}, the complexity of such a face is
    \[
        O \pth{ nk \frac{\lambda_{3}(k)}{k} } = O \pth{ nk \alpha(k) }.
    \]
\end{proof}

\begin{remark}
    See \cite{HCAHS95} for an alternative proof of \thmref{mink} and for a lower bound construction which shows that this bound is tight in the worst case.
\end{remark}

An algorithmic consequence of \thmref{complex:Face:In:Overlay} is:

\begin{lemma}
    Given $t$ arrangements $A(\Gamma_{1}), \ldots, A(\Gamma_{t})$, and a point $p$, satisfying the assumptions made above, we can calculate the face containing $p$ in the overlay arrangement $A( \Gamma)$ in (deterministic) time
    \[ O \pth{ \frac{ \lambda_{s+2}(t) } { t } C \log{ \pth{ \frac{ \lambda_{s+2}(t) } { t } C } } \log{t} },
    \]
    where $C$ is the total complexity of all the marked faces (containing $p$) in the given arrangements.  \lemlab{stupid}
\end{lemma}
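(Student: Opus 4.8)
The plan is to combine a divide-and-conquer over the $t$ arrangements with a careful accounting of the size of the face being maintained at each level, so that the logarithmic sweep/overlay subroutine is applied to instances whose total size never greatly exceeds the final bound. First I would recall the standard sweep-based algorithm for computing a single face in an arrangement of $m$ Jordan arcs with at most $s$ pairwise intersections: using a randomized-incremental or deterministic sweep one can compute the face containing a given point in time $O(N \log N)$, where $N$ is the output complexity of that face (or, more conservatively, $O(\lambda_{s+2}(m) \log m)$); I would cite the relevant result from \cite{GSS89} (or the deterministic version, e.g. via \cite{EGS90}) as the black box. The key structural input is \thmref{complex:Face:In:Overlay} itself: for \emph{any} subfamily $\Gamma_{i_1} \cup \cdots \cup \Gamma_{i_r}$ the face containing $p$ in the overlay of the corresponding arrangements has complexity $O\!\pth{\frac{\lambda_{s+2}(r)}{r} C'}$, where $C'$ is the sum of the complexities of the faces of $p$ in those $r$ arrangements. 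Since $\frac{\lambda_{s+2}(r)}{r} \le \frac{\lambda_{s+2}(t)}{t}$ is \emph{not} monotone in the right direction—wait, it \emph{is} nondecreasing in $r$, so a subfamily of size $r<t$ only helps—this gives us a clean bound at every node of a recursion tree on $\{1,\dots,t\}$.

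Concretely, I would build a balanced binary tree whose leaves are $A(\Gamma_1),\dots,A(\Gamma_t)$ and whose internal nodes correspond to overlays of the leaves in their subtree. At a leaf I compute the face $f_i$ containing $p$ directly, in time $O(C_i \log C_i)$, which sums to $O(C \log C)$ over all leaves. At an internal node $v$ with children $u_1,u_2$, I already have (the arcs bounding) the face $F_{u_1}$ of $p$ in the overlay of $u_1$'s subtree and likewise $F_{u_2}$; I overlay just these two collections of arcs and extract the face of $p$, using the single-face-in-an-arrangement subroutine on the $|F_{u_1}| + |F_{u_2}|$ arcs. By \thmref{complex:Face:In:Overlay} (applied to the subfamily under $v$), $|F_v| = O\!\pth{\frac{\lambda_{s+2}(t)}{t} C_v}$, where $C_v \le C$ is the total complexity restricted to the leaves under $v$; so the work at $v$ is $O\!\pth{\frac{\lambda_{s+2}(t)}{t} C_v \log\!\pth{\frac{\lambda_{s+2}(t)}{t} C_v}}$. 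Summing over all nodes at one fixed level of the tree, the $C_v$'s are disjoint and sum to at most $C$, and the logarithmic factor is at most $\log\!\pth{\frac{\lambda_{s+2}(t)}{t} C}$; multiplying by the $O(\log t)$ levels yields exactly the claimed bound
\[
    O \pth{ \frac{\lambda_{s+2}(t)}{t} C \log\!\pth{ \frac{\lambda_{s+2}(t)}{t} C } \log t }.
\]

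The main obstacle is justifying the overlay step at an internal node: I must argue that the face of $p$ in the overlay of $u_1$'s subtree and $u_2$'s subtree is obtained correctly by overlaying only the boundary arcs $F_{u_1}$ and $F_{u_2}$ (rather than all arcs in those subtrees), i.e.\ that no arc outside $F_{u_1} \cup F_{u_2}$ can touch the cell of $p$ in the combined arrangement. This is the familiar "the face of $p$ in $A(\Delta_1 \cup \Delta_2)$ depends only on the faces of $p$ in $A(\Delta_1)$ and $A(\Delta_2)$" observation used in \cite{GSS89, EGS90}, and I would invoke it with a one-line justification: any arc meeting the $p$-cell of the union already separates $p$ from part of the plane within its own sub-arrangement, hence already appears on $\partial F_{u_1}$ or $\partial F_{u_2}$. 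A secondary, purely bookkeeping point is that the single-face subroutine must be fed the arcs as honest Jordan arcs with $\le s$ pairwise crossings (true, since all arcs come from $\Gamma$) and that its running time is sensitive to output size; if one only has the $O(\lambda_{s+2}(m)\log m)$ bound for a face in an $m$-arc arrangement, the same summation still goes through with $m = |F_{u_1}| + |F_{u_2}| = O\!\pth{\frac{\lambda_{s+2}(t)}{t}C_v}$, so the stated bound is unaffected. With these two points dispatched, the time analysis is the routine geometric-series-over-levels computation sketched above.
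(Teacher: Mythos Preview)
Your proposal is correct and follows essentially the same approach as the paper: merge the $t$ arrangements pairwise in a balanced binary tree, use the two-arrangement overlay algorithm of \cite{GSS89} at each internal node, and invoke \thmref{complex:Face:In:Overlay} to bound the total face complexity at every level by $O\!\pth{\frac{\lambda_{s+2}(t)}{t} C}$, giving $O(\log t)$ levels of work $O\!\pth{\frac{\lambda_{s+2}(t)}{t} C \log\!\pth{\frac{\lambda_{s+2}(t)}{t} C}}$ each. The paper's proof is terser (it simply cites \cite{GSS89} for the merge step and its $O(k\log k)$ cost), while you spell out the per-node bookkeeping and the justification that only the boundary arcs of the current faces need be overlaid; both arrive at the same bound by the same mechanism.
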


\begin{proof}
    We use the algorithm described in \cite{GSS89} which calculates the face containing the point $p$ in the overlay of two arrangements, and we merge repeatedly two arrangements at a time in a balanced binary-tree fashion.  By \thmref{complex:Face:In:Overlay}, the total complexity of all the marked faces in each level of the tree is at most $O( \frac{\lambda_{s+2}(t)}{t} C )$, and the tree depth is $\lceil \log{t} \rceil$. The result then follows from the analysis of \cite{GSS89}, since the cost of a single application of the merging algorithm of \cite{GSS89} is $O \pth{ k \log{k} }$, where $k$ is the overall complexity of the input and output faces participating in the merge.
\end{proof}

\blankline

This algorithm slightly improves the best known solutions for the problem of translational motion planning of a simple polygon $P$ with $k$ sides amidst a collection of $n$ point-obstacles (imagine that $P$ is translating on a board amidst a collection of pins or pegs tacked to the board). Let $s, e$ be the given starting and finish points, respectively. Let $c$ be an arbitrary point inside $P$.

A solution to the above motion planning problem is a path from $s$ to $e$, such that $P$ should not intersect any obstacle when its center is located on any point of the path. To do so, we reduce the problem into the calculation of a regular path between $e$ and $s$. We notice that each obstacle point $p_i$ induces a forbidden region that the path should avoid, namely the polygon $P$ rotated by $180$ degrees, and centered at $p_i$.  We replace each obstacle point by its forbidden region, $R_i$.  Let $\R = \{ R_1, \ldots, R_n \}$. Any path in the arrangement $FP = A(\R)$ between the points $s$ and $e$, which is contained in the complement of the union of the $R_i$'s yields an obstacle-avoiding motion of the required kind. Such a path exists iff $e$ and $s$ are contained in the same face of $A(\R)$.  We calculate the face $F$ containing $e$ in the arrangement $FP$.  Now, in linear time in the size of $F$, we can check if $s$ is contained in $F$ and if so to calculate the required path.

For the given problem, we can calculate a connected component (a face) of $F P$ in time
\[
    O( n k \alpha(k) \log{( n k \alpha(k) )} \log{k} ) = O( n k \alpha(k) \log{n k} \log{k} ).
\]

Indeed, all the expanded obstacles are translated copies of the same polygon, and this polygon has $k$ sides: $s_1, \ldots, s_k$. Let $T_i$ be the set of the $n$ translated copies of $s_i$ appearing in $R_1, \ldots, R_n$, for $i=1, \ldots, k$. Let $T = \cup_{i=1}^{k} T_i$. Clearly, $A(T) = FP$.

So we calculate each $A(T_i)$ separately (in $O(n\log{n} )$ time, since this arrangement is made of $n$ translated copies of the same segment), and apply the algorithm described in \lemref{stupid} to calculate the face containing $s$ in $FP = A(T)$.

This slightly improves the bound $O( n k \alpha(n k) \log{n k} \log{n} )$, given in \cite{GSS89}, assuming $k \ll n$.

\subsection{The Complexity of a Face %
   in Certain Arrangements of Line Segments}
\seclab{line:seg:arrangement}

In this section, we present new bounds, for some special cases, on the complexity of a single face in an arrangement of line segments.  The following lemma is the main tool in proving these new results.

\begin{lemma}
    Let $S = \{ s_1, \ldots, s_n \}$ be a collection of $n$ line segments.  Let $\gamma$ be a Jordan arc in the plane intersecting exactly once each segment of $S$. Then the complexity of a face in the arrangement $A = A(S \cup \{ \gamma \} )$ is $O(n)$.

    \lemlab{arc:and:segments}
\end{lemma}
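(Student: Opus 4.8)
The plan is to reduce the problem to the Single Face Combination Theorem (\thmref{complex:Face:In:Overlay}) applied to $t = n+1$ arrangements, each consisting of a single arc, in a way that exploits the hypothesis that $\gamma$ crosses every segment exactly once. Write $\Gamma_i = \{s_i\}$ for $i = 1, \ldots, n$ and $\Gamma_{n+1} = \{\gamma\}$; then $A = A(\Gamma)$ is the overlay of these $n+1$ trivial arrangements, and for the face $f$ of $A$ we care about and the point $p$ it contains, the face $f_i$ in $A^i = A(\Gamma_i)$ containing $p$ has complexity $C_i = O(1)$ for $i \le n$ (a single segment partitions the plane into two faces, each of bounded complexity) and $C_{n+1} = O(1)$ as well. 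So $C = \sum_{i=1}^{n+1} C_i = O(n)$, and \thmref{complex:Face:In:Overlay} with $s = 1$ would give complexity $O\pth{\frac{\lambda_{3}(n+1)}{n+1}\cdot n} = O(n\alpha(n))$ — which is \emph{not} quite the claimed $O(n)$. Hence the crude application is not enough; the extra hypothesis that $\gamma$ meets each $s_i$ exactly once must be used to shave the $\alpha(n)$ factor.

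The way to do this is to revisit the proof of \thmref{complex:Face:In:Overlay} and check what the restricted sequences $R^{(i)} = S_{\Gamma_i}$ look like in this special case. Fix a boundary component $\zeta$ of $f$ and form the circular sequence $S$ of oriented subarcs along $\zeta$; after the two transformations (splitting $\gamma_i^{+}/\gamma_i^{-}$ and applying the Consistency Lemma) it becomes a $DS(l, s+2) = DS(l, 3)$ sequence. The key observation is that for a \emph{segment} $s_i$, a face of an arrangement can have $s_i$ appear along its boundary at most a bounded number of times — and more precisely, since $\gamma$ crosses $s_i$ exactly once, each of the two pieces into which $s_i$ is cut (or each side of $s_i$) contributes $O(1)$ to $R^{(i)}$; likewise $R^{(n+1)}$, the restricted sequence of $\gamma$, satisfies $|R^{(n+1)}| = O(1)$ because in the trivial arrangement $A(\{\gamma\})$ the face $f_{n+1}$ containing $p$ has exactly one boundary component and $\gamma$ crosses it $O(1)$ times in the relevant orientation. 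So each $|R^{(i)}| = O(1)$ and $\sum_i |R^{(i)}| = O(n)$. But now, instead of invoking \thmref{restrict:DS} (which loses the $\lambda_3/\,$-factor), I want to argue directly: a $DS(l,3)$ sequence in which the restriction to each of $n+1$ disjoint symbol-classes has bounded length, \emph{and} in which consecutive "atomic" blocks differ in at most one symbol, has length $O(n)$ — because a $DS(\cdot,3)$ sequence on a set of $m$ symbols has length only $O(m\alpha(m))$ in general, but with the strong structural constraint coming from line segments (order-3 realized by segments, each contributing $O(1)$) we can cover $S$ by $O(n)$ blocks each of bounded length.

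Concretely, the cleanest route is: partition $\zeta$ at the transition points, note that between consecutive transition points the same single arc appears, and that consecutive blocks differ by one arc; since each arc appears only $O(1)$ times along $\zeta$ (this is where "segments" and "$\gamma$ crosses each once" are used — a segment's two endpoints and its single crossing with $\gamma$ bound how often it can re-appear on $\partial f$), the total number of blocks is $O(n)$, giving $|S| = O(n)$. Summing over the $O(1)$ boundary components $\zeta$ of $f$ — or, more carefully, noting $\sum_\zeta C_\zeta = C = O(n)$ as in the proof of \thmref{complex:Face:In:Overlay} — yields complexity $O(n)$ for $f$. The main obstacle is making precise the claim that each arc (segment or $\gamma$) appears only a bounded number of times along $\partial f$: for a single segment this follows from the fact that a line crosses the boundary of a face a bounded number of times, but for $\gamma$ it requires the exactly-once hypothesis together with a Consistency-Lemma-style argument to rule out $\gamma$ weaving in and out of $f$ many times; this is where I expect to spend most of the work, and it is essentially a degenerate instance of the $O(\lambda_{s+2}(n))$ single-face bound (\thmref{single_Face_Complex}) specialized to the case where one "arc" is $\gamma$ and the rest are lines, exploiting $s = 1$.
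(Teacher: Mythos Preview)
Your proposal has a real gap. You correctly observe that the naive application of \thmref{complex:Face:In:Overlay} to the $n+1$ singleton arrangements gives only $O(n\alpha(n))$, and you correctly note that each restricted sequence $R^{(i)} = S_{\Gamma_i}$ has length $O(1)$ (it is a $DS(O(1),3)$ sequence on at most four symbols). But you then slide from ``$|R^{(i)}| = O(1)$'' to ``each arc appears only $O(1)$ times along $\zeta$,'' and these are \emph{not} the same statement: the former is the length of $S$ restricted to $\Gamma_i$ \emph{after collapsing consecutive duplicates}, while the latter counts raw appearances of the arc in $S$. Take $\gamma$ horizontal and the $s_i$ short vertical segments crossing it: the boundary sequence of the unbounded face reads $\gamma^+\,s_1\,\gamma^+\,s_2\,\gamma^+\cdots$, so $\gamma$ appears $\Theta(n)$ times in $S$ while $|R^{(n+1)}|$ collapses to $O(1)$. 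Likewise, your justification that ``a segment's two endpoints and its single crossing with $\gamma$ bound how often it can re-appear'' ignores the crossings of $s_i$ with the \emph{other} segments $s_j$, which are precisely what let a segment reappear many times in the Wiernik--Sharir construction. So ``each arc appears $O(1)$ times'' is equivalent to the conclusion $|S|=O(n)$, and nothing in your outline bridges the gap between $\sum_i |R^{(i)}|=O(n)$ and $|S|=O(n)$; \thmref{restrict:DS} only gives the $\alpha$-factor you are trying to remove.

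The paper's argument is genuinely different and uses the hypothesis geometrically. Cut each $s_i$ at its unique intersection with $\gamma$, obtaining half-segments $t_i$ on one side of $\gamma$ and $r_i$ on the other; by \thmref{complex:Face:In:Overlay} with $t=2$ it suffices to bound the unbounded face of $A(\{t_1,\ldots,t_n,\gamma\})$. Now every $t_i$ has one endpoint on $\gamma$; orient it outward from $\gamma$ and partition the symbols of the boundary sequence $\sigma$ into left-side symbols $S_l$ and right-side symbols $S_r$. The crux is a trapping argument: a pattern $p\cdots q\cdots p\cdots q$ in $\sigma_{S_l}$ would force $t_p$ and $t_q$ to intersect, with the two $p$-appearances on opposite sides of the intersection, and then the second left side of $t_q$ would be enclosed in the bounded region cut off by $t_p$, $t_q$, and $\gamma$ --- impossible on the unbounded face. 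Hence $\sigma_{S_l}$ and $\sigma_{S_r}$ are $DS(2n,2)$ sequences, each of length $O(n)$, and \thmref{restrict:DS} with $k=2$ gives $|\sigma|=O(n)$. The fact that all half-segments share the ``baseline'' $\gamma$ is exactly what kills the $\alpha$-factor, and this geometric structure is invisible to the purely combinatorial partition you set up.
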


\begin{proof}
    Without loss of generality, we assume that the face of interest is the unbounded face of $A$. We split each segment $s_i$, for $i=1, \ldots, n,$ into two line segments, $t_i,r_i$, at its intersection point with $\gamma$.  Let $S_1 = \{ t_1, \ldots, t_n \}$ be the set of segments lying on one side of $\gamma$ and let $S_2 = \{r_1, \ldots, r_n \}$ be the set of segments lying on the other side.

    We derive a linear bound on the complexity of the unbounded face of $A(S_1 \cup \{ \gamma \})$, and symmetrically for $A \pth{ S_2 \cup \{ \gamma \} }$, which implies the asserted bound, by applying the single face combination \thmref{complex:Face:In:Overlay} to $A( S_1 \cup \{\gamma \} )$ and $A( S_2 \cup \{ \gamma \} )$.

    Let $p_i$ and $q_i$ be the endpoints of $t_i$, for $i =1, \ldots, n$, where $q_i \in \gamma$.  We orient $t_i$ from $q_i$ to $p_i$.

    Let $C$ be the boundary of the unbounded face of $A \pth{ S_1 \cup \{ \gamma \} }$. Clearly, $C$ is connected. Let $v \in C$ be an arbitrary point. As in the proof of \thmref{single_Face_Complex}, we split each segment of $S_1$ into at most four symbols, such that the sequence $\sigma$ of symbols encountered as we trace $C$ from $v$ in a counterclockwise direction, is a $DS(4n, 3)$ sequence.

    We partition $S_1$ into the set $S_{l}$ of left-side symbols (under the orientation assumed above), and the set $S_{r}$ of right-side symbols.

    We show that the restricted sequences $\sigma_{S_{r}}, \sigma_{S_{l}}$ are $DS(2n, 2)$ sequences, implying that their lengths are linear. The combination \thmref{restrict:DS} for \DS sequences then implies that the length of $S$ is linear.

    Suppose, for the sake of contradiction, that there exists a subsequence of the form $\langle p \cdots q \cdots p \cdots q \rangle$ in $\sigma_{S_{l}}$. Let $t_p, t_q$ denote the segments of $S_1$ containing $p,q$, respectively.  Using the Consistency Lemma, and reasoning as in the proof of \thmref{single_Face_Complex}, one can show that $t_p$ and $t_q$ must intersect, and that the two appearances of $t_p$ in the subsequence correspond to two subintervals lying on different sides of the intersection point, and similarly for $t_q$. This however is impossible, because the left side of the second appearance of $t_q$ must then be trapped in the internal face created by $t_p,t_q, \gamma$; see \figref{trap}.

    Thus $\sigma_{S_{l}}$ is a $DS(2n, 2)$ sequence, and the same holds for $\sigma_{S_r}$. This completes the proof of the lemma.
\end{proof}

\begin{figure}[ht]
    \centering
    \includegraphics{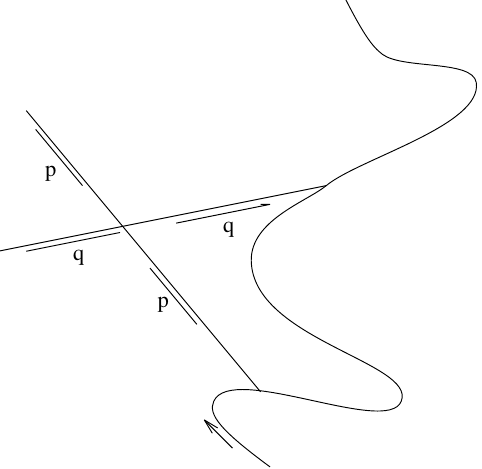}
    \caption{The subsequence $\cdots p \cdots q \cdots p \cdots q \cdots$ is impossible}
    \figlab{trap}
\end{figure}

\begin{remark}
    \lemref{arc:and:segments} is also implied by a similar observation in \cite{AHKMN94}.
\end{remark}

\begin{remark}
    \lemref{arc:and:segments} also holds when $\gamma$ intersects each segment of $S$ at least once and at most some constant number of times.
\end{remark}

\begin{remark}
    \remlab{seg:arr:intersect:seg}%
    Given a set $S = \{ s_1, \ldots, s_n \}$ of segments, all of which intersect a segment $s_k$, for some fixed $1 \leq k \leq n$. \lemref{arc:and:segments} implies that the complexity of a face in $A(S)$ is linear. This is somewhat surprising, considering that the complexity of a face in the arrangement $A(S \setminus \{ s_k \} )$ can be $\Omega( n \alpha(n ) )$, as follows from the construction of \cite{WS88}.
\end{remark}

\begin{defn}
    Given a set $S$ of $n$ Jordan arcs, we define the {\em covering number} of $S$ to be the minimal size of a subset $T \subseteq S$, such that each arc of $S$ intersects at least one arc of $T$.
\end{defn}

\begin{lemma}
    Given a set $S$ of $n$ line segments, with a covering number $k$.  The complexity of any face in the arrangement $A(S)$ is $O( n \alpha(k) )$.
\end{lemma}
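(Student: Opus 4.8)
The plan is to mimic the proof of \lemref{single:face:polygons}, using the covering set $T$ to induce a partition of $S$ into subsets, each of which is ``anchored'' to one arc of $T$, and then apply \thmref{complex:Face:In:Overlay} together with the linear bound of \lemref{arc:and:segments}. Let $T = \{ \tau_1, \ldots, \tau_k \}$ be a covering set of minimum size $k$. For each segment $s \in S$ pick one arc $\tau_{j(s)} \in T$ that $s$ intersects, and define $S_j = \setof{ s \in S }{ j(s) = j } \cup \{ \tau_j \}$ for $j = 1, \ldots, k$. Then $S = \bigcup_{j=1}^k S_j$ (with the $\tau_j$'s possibly shared, but this only adds $O(k)$ to the bookkeeping), and inside each $S_j$ every segment crosses the single fixed segment $\tau_j$.

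First I would invoke \remref{seg:arr:intersect:seg}: since each $S_j$ is a set of segments all of which intersect the single segment $\tau_j \in S_j$, the complexity of any face of $A(S_j)$ is $O(|S_j|)$. Hence, if $f$ is the face of $A(S)$ whose complexity we want to bound, and $f_j$ is the face of $A(S_j)$ containing a fixed point $p \in f$, then the combined complexity $C = \sum_{j=1}^k |f_j| = \sum_{j=1}^k O(|S_j|) = O(n + k) = O(n)$ (using $k \le n$). Now $A(S)$ is exactly the overlay of the arrangements $A(S_1), \ldots, A(S_k)$ — here every pair of segments meets in at most $s = 1$ point, so the relevant Davenport--Schinzel parameter is $\lambda_{s+2}(k) = \lambda_3(k) = O(k \alpha(k))$. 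Applying \thmref{complex:Face:In:Overlay} to these $k$ arrangements yields that the complexity of the face of $A(S)$ containing $p$ is
\[
    O \pth{ \frac{\lambda_3(k)}{k} \, C } = O \pth{ \alpha(k) \cdot n } = O( n \alpha(k) ),
\]
which is the claimed bound.

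The main obstacle is the minor double-counting caused by the fact that the covering arcs $\tau_j$ are themselves members of $S$ and are duplicated across the $S_j$'s: strictly speaking $A(S)$ is the overlay of the $A(S_j)$'s only after we agree that the arcs $\tau_1, \ldots, \tau_k$ each appear once in $S$ but once in each $S_j$ they belong to. This is harmless — adding $\tau_j$ to $S_j$ at most replicates $k$ arcs, contributing $O(k) = O(n)$ extra complexity overall — but it must be stated carefully so that the hypothesis of \thmref{complex:Face:In:Overlay} (that $\Gamma_1, \ldots, \Gamma_t$ partition $\Gamma$) is honestly met; the cleanest fix is to note that one may instead put each $\tau_j$ in exactly one of the $S_j$'s containing it, and observe that \remref{seg:arr:intersect:seg}/\lemref{arc:and:segments} still applies to every $S_j$ that actually needs it, while those $S_j$ that lost their anchor are still sets of segments all crossing some common segment lying in a neighboring part — or, even more simply, allow the $O(k)$ slack and absorb it into the $O(n)$ term since $k \le n$. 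Once this cosmetic point is dispatched, the rest is a direct substitution into the two quoted results.
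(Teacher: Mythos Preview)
Your proposal is correct and follows essentially the same approach as the paper: partition $S$ into $k$ groups, each anchored to one covering segment $\tau_j$, invoke \remref{seg:arr:intersect:seg} to get a linear bound on the marked face in each $A(S_j)$, and finish with \thmref{complex:Face:In:Overlay}. The only difference is cosmetic: where you hand-wave the disjointness issue and absorb the $O(k)$ duplication into the $O(n)$ term, the paper writes down an explicit disjoint partition
\[
    S_i = S_i' \setminus \Bigl( \bigcup_{j=1}^{i-1} S_j' \cup \bigcup_{j=i+1}^{k} \{t_j\} \Bigr),
\]
which guarantees $t_i \in S_i$, $\bigcup_i S_i = S$, and pairwise disjointness simultaneously, so that the hypothesis of \thmref{complex:Face:In:Overlay} is met on the nose.
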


\begin{proof}
    Let $p$ be a marking point inside a face $f$ of $A(S)$.  Let $T = \{ t_1, \ldots, t_k \}$ be a subset of $k$ segments realizing the covering number of $S$. We define $S_i^{\prime}$ to be the set of segments of $S$ intersecting $t_i$, for $i=1, \ldots, k$. Let
    \[
        S_i = S_i^{\prime} \setminus \pth{ \bigcup_{j=1}^{i-1} S_j^{\prime} \cup \bigcup_{j=i+1}^{k} \{ t_j \} },
    \]
    for $i=1, \ldots, k$.  Thus $t_i \in S_i$ and $\cup_{i=1}^{k} S_i = S$, for $i=1, \ldots, k$.  Also $S_i \cap S_j = \emptyset$, for $1 \leq i < j \leq k$.  By \remref{seg:arr:intersect:seg} it follows that the complexity of the face containing $p$ in the arrangement $A(S_i)$, for $i=1, \ldots, k$, is $O( |S_i| )$.

    Applying the single face Combination \thmref{complex:Face:In:Overlay} to $A(S_1), \ldots, A(S_k)$, we conclude that the complexity of $f$ is $O\pth{ n\alpha(k) }$, as claimed.
\end{proof}

\begin{lemma}
    Let $\brc{ \gamma_1, \ldots, \gamma_k }$ be a collection of $k$ Jordan arcs, disjoint except at their endpoints, such that $\gamma = \cup_{i=1}^{k} \gamma_i$ is a simple close Jordan curve. Let $S = \{ s_1, \ldots, s_n \}$ be a collection of $n$ line segments, such that the following holds:
    \begin{itemize}
        \item The relative interior of $s_i$ is contained in the interior of $\gamma$.

        \item The endpoints of $s_i$ lie on $\gamma$.

        \item The endpoints of any segment $s_i = p_i q_i$ lie in the relative interiors of two distinct arcs $\gamma_{l_i}$ and $\gamma_{k_i}$.
    \end{itemize}
    Then the complexity of a face in the arrangement $A(S)$ is $O(n \alpha(k ))$.  \lemlab{closed:arc:segments}
\end{lemma}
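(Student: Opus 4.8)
The plan is to partition $S$ according to which boundary arc carries a segment endpoint, bound one face in each piece by a linear function of the piece's size using \lemref{arc:and:segments}, and then glue everything together with the Single Face Combination \thmref{complex:Face:In:Overlay}. Fix a point $p$ in the face $f$ of $A(S)$ whose complexity is to be bounded. For a segment $s_i = p_i q_i$ with $p_i$ in the relative interior of $\gamma_{l_i}$ and $q_i$ in the relative interior of $\gamma_{k_i}$ (so $l_i \ne k_i$ by hypothesis), assign $s_i$ to the arc of smaller index, and let $S_j$ be the set of segments assigned to $\gamma_j$, for $j = 1, \ldots, k$. Then $S = S_1 \cup \cdots \cup S_k$ is a partition, and every segment of $S_j$ has one endpoint in the relative interior of $\gamma_j$; since the relative interior of every segment lies in the open region bounded by $\gamma$ while $\gamma_j \subseteq \gamma$, and its other endpoint lies in the relative interior of a different arc, this endpoint is the unique point of intersection of the segment with $\gamma_j$. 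Thus $\gamma_j$ is a Jordan arc meeting each segment of $S_j$ exactly once.

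Since $S = S_1 \cup \cdots \cup S_k$, the overlay arrangement of $A(S_1), \ldots, A(S_k)$ is exactly $A(S)$, and any two line segments meet at most once, so here $s = 1$ and $\lambda_{s+2}(k)/k = \lambda_3(k)/k = O(\alpha(k))$. Hence \thmref{complex:Face:In:Overlay} gives $|f| = O\pth{ \alpha(k) \sum_{j=1}^{k} C_j }$, where $C_j$ is the complexity of the face of $A(S_j)$ containing $p$. It therefore suffices to prove $C_j = O(|S_j|)$ for each $j$, since then $\sum_j C_j = O(n)$ and the claimed bound $O(n\alpha(k))$ follows, exactly as in the proof of \lemref{single:face:polygons}.

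To bound $C_j$, I would reuse the argument behind \lemref{arc:and:segments}, with the catch that $\gamma_j$ is not one of the arcs of $A(S_j)$: applying \lemref{arc:and:segments} verbatim only bounds a face of $A(S_j \cup \{\gamma_j\})$, and the face of $A(S_j)$ containing $p$ may be the union of many faces of $A(S_j \cup \{\gamma_j\})$ glued along subarcs of $\gamma_j$, so a crude summation would give only $O(|S_j|^2)$. Instead one should rerun the proof of \lemref{arc:and:segments} with $\gamma_j$ serving only as an analytic device. Let $f_j$ be the face of $A(S_j)$ containing $p$; trace $\partial f_j$ from an arbitrary point, split each segment of $S_j$ into at most four symbols as in the proof of \thmref{single_Face_Complex} so that the traversal sequence $\sigma$ is a $DS(4|S_j|,3)$ sequence, and orient each segment of $S_j$ from its endpoint on $\gamma_j$ to its other endpoint. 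Split the symbols into those whose left side faces $f_j$ and those whose right side faces $f_j$. The Consistency \lemref{consist}, applied exactly as in the ``trap'' argument (see \figref{trap}), shows that each restricted subsequence is a $DS(\cdot,2)$ sequence: a forbidden alternation $\langle p \cdots q \cdots p \cdots q \rangle$ would force the two corresponding segments to cross, with their two occurrences lying on opposite sides of the crossing point, and then a connected piece of $\partial f_j$ would be trapped in the region bounded by these two segments and the subarc of $\gamma_j$ joining their $\gamma_j$-endpoints; this is impossible because $\partial f_j$ consists only of segment portions, which meet $\gamma_j$ at most at isolated segment endpoints, so the trapped piece cannot reconnect to the part of $\partial f_j$ surrounding $p$. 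Both restricted subsequences are therefore linear, and \thmref{restrict:DS} yields $C_j = |\sigma| = O(|S_j|)$.

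The main obstacle is precisely this last step: carrying out the ``trap'' geometry of \lemref{arc:and:segments} when $\gamma_j$ is only a Jordan arc touched by the segments of $S_j$ at one endpoint each, rather than an arc actually present in the arrangement. The observation that makes it go through is that the boundary of any face of $A(S_j)$ meets $\gamma_j$ only in a finite set of segment endpoints, so $\gamma_j$ still acts as an impenetrable wall for the relevant portions of $\partial f_j$. Everything else is routine — the partition is immediate, and the combination via \thmref{complex:Face:In:Overlay} with $s = 1$ mirrors the preceding lemmas; degenerate cases (several segments sharing an endpoint on some $\gamma_j$, or $f_j$ being unbounded) are absorbed by the same local obstruction argument or by a routine perturbation.
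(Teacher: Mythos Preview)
Your overall strategy — partition the chords by their boundary arcs, prove a linear single-face bound on each piece, and combine via \thmref{complex:Face:In:Overlay} — is exactly the paper's. The difference lies in the partition and, more substantively, in how you establish linearity for each piece.

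The paper partitions into the $\binom{k}{2}$ sets $S_{ij}$ of segments running from $\gamma_i$ to $\gamma_j$, and then uses \lemref{arc:and:segments} purely as a black box: if $p$ lies in a bounded face of $A(S_{ij})$ then that face is contained in the interior of $\gamma$, so adding $\gamma_i$ to the arrangement does not touch it and \lemref{arc:and:segments} applies directly; if $p$ lies in the unbounded face, the paper observes that every boundary vertex already appears on the unbounded face of $A(S_{ij}\cup\{\gamma_i\})$ or of $A(S_{ij}\cup\{\gamma_j\})$, and two invocations of \lemref{arc:and:segments} finish it. Combining the $O(k^2)$ pieces costs only $\alpha(k^2)=O(\alpha(k))$.

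You instead take $k$ sets $S_j$ and reopen the proof of \lemref{arc:and:segments}, running the trap argument with $\gamma_j$ as a wall that is not in the arrangement. Your key observation — that $\partial f_j$ consists of chord portions and therefore stays inside $\bar D$, so it cannot escape the trap region through the $\gamma_j$-side (whose other side is the exterior of $D$) — is correct and is what makes the argument go through. So the approach is sound, but it requires redoing the geometric obstruction argument rather than invoking the lemma. Note, incidentally, that your coarser partition already has both endpoints of every segment in $S_j$ on two fixed arcs, namely $\gamma_j$ and the single complementary arc $\gamma\setminus\gamma_j$; hence the paper's bounded/unbounded two-arc trick applies verbatim to your $S_j$, and you could have avoided reopening the trap argument altogether.
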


\begin{proof}
    Let $p$ be a marking point inside a face $f$ of $A(S)$.  Let $S_{i j}$, for $i \ne j$, denote all the segments of $S$ having one endpoint in $\gamma_i$ and the other endpoint in $\gamma_j$. Clearly $\cup_{1 \leq i < j \leq k} S_{i j} = S$ and these sets are pairwise disjoint.  We claim that the complexity of the face containing $p$ in the arrangement $A(S_{i j})$ is $O(|S_{i j}|)$.

    There are two cases. If $p$ lies in an internal face of $A(S_{i j})$, then $f$ is also the faces containing $p$ in the arrangement $A(S_{i j} \cup \{ \gamma_i \} )$.  By \lemref{arc:and:segments} the complexity of this face is $O( |S_{i j}| )$.

    If $p$ lies in the unbounded face of $A(S_{i j})$, then it is easily seen that each vertex (an intersection point of two segments or an endpoint of a segment) is either a vertex of the unbounded face of $A_1 = A(S_{i j} \cup \{ \gamma_i \} )$ or of $A_2 = A(S_{i j} \cup \{ \gamma_j \})$.  By \lemref{arc:and:segments} the complexity of the unbounded faces of $A_1$ and of $A_2$ is $O( |S_{i j}| )$, implying that the complexity of the unbounded face of $A(S_{i j})$ is $O(|S_{i j}| )$.

    Applying \thmref{complex:Face:In:Overlay} to the arrangements $A(S_{i j})$, for $1 \leq i < j \leq k$, we conclude that the complexity of $f$ is
    \[
        O \pth{ \pth{\sum_{1 \leq i < j \leq k} |S_{i j}| } \alpha(k^{2}) } = O \pth{ n \alpha(k) },
    \]
    since $\alpha( k^{2} ) = O( \alpha(k) )$.
\end{proof}

\begin{lemma}
    Let $C$ be a circle of radius $r$ and let $S = \{ s_1, \ldots, s_n \}$ be a collection of $n$ chords of $C$, such that $length(s_i) \geq c r$, for some constant $c$ and for $i=1, \ldots, n$. Then the complexity of a face in the arrangement of $A(S)$ is $O(n)$, where the constant of proportionality depends on $c$.  \lemlab{circle:chords}
\end{lemma}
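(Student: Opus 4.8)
The plan is to reduce this statement to \lemref{closed:arc:segments} by cutting the circle $C$ into a constant number of arcs, none of which can contain both endpoints of any chord of $S$. The one geometric fact I would use is that a chord of a circle of radius $r$ subtending a central angle $\theta$ has length $2r\sin(\theta/2)$; hence the hypothesis $length(s_i)\ge cr$ forces the central angle subtended by $s_i$ to be at least $\theta_0 := 2\arcsin(c/2)$, which is a positive constant depending only on $c$. (We may assume $0<c\le 2$, since otherwise no chord of $C$ has length $\ge cr$ and there is nothing to prove.)

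First I would fix an integer $k := \ceil{2\pi/\theta_0} + 1$, so that $2\pi/k < \theta_0$, and place $k$ points on $C$ splitting it into $k$ arcs $\gamma_1,\ldots,\gamma_k$, disjoint except at their common endpoints, each of central angle strictly less than $\theta_0$. Because $2\pi/k<\theta_0$ there is slack, so I can choose these $k$ division points so that none of them coincides with any of the $2n$ endpoints of the chords in $S$. Then $\gamma=\cup_{i=1}^{k}\gamma_i$ is the circle $C$, a simple closed Jordan curve, the relative interior of each chord $s_i$ lies in the open disk bounded by $C$, and each endpoint of each $s_i$ lies in the relative interior of one of the arcs $\gamma_j$.

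Next I would check that the two endpoints of any $s_i$ lie in \emph{distinct} arcs. Indeed, if both endpoints of $s_i$ lay in the same arc $\gamma_j$, then the shorter sub-arc of $C$ joining them would be contained in $\gamma_j$ and hence subtend a central angle strictly less than $\theta_0$, giving $length(s_i) < 2r\sin(\theta_0/2) = cr$, contradicting the hypothesis. Therefore $S$ together with $\gamma_1,\ldots,\gamma_k$ satisfies all the hypotheses of \lemref{closed:arc:segments}, so the complexity of a face of $A(S)$ is $O\pth{n\,\alpha(k)}$. Since $k$ depends only on $c$, we have $\alpha(k)=O(1)$, and the bound becomes $O(n)$ with a constant of proportionality depending on $c$, as claimed.

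I do not expect any genuine obstacle here: the argument is essentially a packing observation. The only points needing (routine) care are the trigonometric conversion of the length bound into a lower bound on the subtended central angle, and the remark that the finitely many division points can be perturbed off the $2n$ chord endpoints so that \lemref{closed:arc:segments} applies verbatim.
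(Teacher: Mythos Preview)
Your proposal is correct and follows essentially the same approach as the paper's proof: both partition the circle $C$ into a constant number $k=k(c)$ of arcs short enough that no chord in $S$ can have both endpoints in a single arc, and then invoke \lemref{closed:arc:segments} to obtain the bound $O(n\alpha(k))=O(n)$. The only cosmetic difference is that the paper works directly with arc length (choosing $k=4\ceil{\pi/c}$ so that each arc has length at most $cr/2$, hence cannot contain both endpoints of a chord of length $\ge cr$), whereas you translate the length condition into a lower bound $\theta_0=2\arcsin(c/2)$ on the subtended central angle; your version is slightly more careful in explicitly perturbing the division points off the chord endpoints so that the hypotheses of \lemref{closed:arc:segments} hold verbatim.
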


\begin{proof}
    We break $C$ into $k = 4\ceil{ \frac{\pi}{c} }$ arcs, $C_1, \ldots, C_k$, each of length $\frac{2\pi r}{k} \approx \frac{c r}{2}$. This implies that no segment of $S$ has both endpoints on the same arc $C_i$, for any $1 \leq i \leq k$.  By \lemref{closed:arc:segments} the complexity of any face of $A(S)$ is $O(n \alpha(k)) = O(n)$, since $k$ is a constant.
\end{proof}

\begin{remark}
    Clearly, \lemref{circle:chords} can be extended to other curves and appropriate collections of chords.
\end{remark}

\begin{lemma}
    Given a set $S = \{ s_1, \ldots, s_n \}$ of $n$ line segments, and a set \\ $L = \{ l_1, \ldots, l_k \}$ of $k$ lines, such that the endpoints of each $s_i$ lie on $\cup_{i=1}^{k} l_i$, then the complexity of a face of $A(S)$ is $\Theta(n \alpha(k) )$.
\end{lemma}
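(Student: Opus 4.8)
The plan is to get the upper bound by partitioning $S$ into $O(k^{2})$ subfamilies, each of which is a family of chords of a convex region bounded by only a constant number of arcs, so that \lemref{closed:arc:segments} bounds each face of each subfamily's arrangement linearly, and then to glue the pieces with \thmref{complex:Face:In:Overlay}; the lower bound will come from the Wiernik--Sharir lower-envelope construction via horizontal translation. For the upper bound, assume general position (otherwise perturb $L$ and $S$ slightly); in particular each segment endpoint lies on exactly one line of $L$ and not at a crossing of two lines. Partition $S$ as follows. Segments contained in some line $l_a$ form a family $S_{aa}$ of collinear segments, whose arrangement has a single face of complexity $O\pth{\abs{S_{aa}}}$. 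For a pair $a\neq b$ with $l_a\parallel l_b$, every segment of $S$ with one endpoint on $l_a$ and one on $l_b$ lies in the closed strip between the two lines (a transversal cannot re-meet either line). For a pair $a\neq b$ with $l_a,l_b$ crossing at a point $x$, the two lines cut the plane into four convex wedges, and every such segment $s=pq$ ($p\in l_a,\ q\in l_b$) lies in the closure of exactly one wedge: $s$ leaves $p$ into one of the two wedges incident to $l_a$ at $p$, and it cannot escape that wedge, whose boundary lies on $l_a\cup l_b$, because it meets each of $l_a,l_b$ only once (at $p$ and at $q$). We split the segments of such a pair into at most four subfamilies, one per wedge. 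In total $S$ is partitioned into $O(k^{2})$ subfamilies.

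Now fix one subfamily $G$ of the last two kinds; its segments are chords of a strip or of a convex wedge. Enclose all of $G$ in a huge disk $D$; the intersection of the strip (resp.\ wedge) with $\operatorname{int} D$ is a region bounded by a simple closed curve made of two straight edges (pieces of $l_a,l_b$) and, respectively, two or one circular arcs of $\partial D$ --- at most four arcs, pairwise disjoint except at shared endpoints. Each segment of $G$ is a chord of this region, with relative interior inside the curve and both endpoints in the relative interiors of the two distinct straight edges (here we use that $D$ is large and the general-position assumptions). Hence \lemref{closed:arc:segments}, applied with at most four arcs, shows that every face of $A(G)$ --- in particular the one containing $p$ --- has complexity $O\pth{\abs{G}}$; the same trivially holds for each $S_{aa}$. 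Applying \thmref{complex:Face:In:Overlay} to these $O(k^{2})$ arrangements (with $s=1$, so $\lambda_{s+2}=\lambda_{3}$ and $\lambda_{3}(m)/m=O(\alpha(m))$), the face of $A(S)$ containing $p$ has complexity
\[
   O\pth{\frac{\lambda_{3}\pth{O(k^{2})}}{O(k^{2})}\sum_{G}O\pth{\abs{G}}}=O\pth{\alpha(k^{2})\cdot n}=O\pth{n\,\alpha(k)},
\]
using $\alpha(k^{2})=O(\alpha(k))$.

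For the lower bound, specialize the construction behind the lower-bound half of \thmref{complex:Face:In:Overlay} to segments. Take a Wiernik--Sharir family $\Gamma_{0}=\{\gamma_{1},\dots,\gamma_{t}\}$ of $t$ line segments inside a vertical strip of width $a$ whose lower envelope has complexity $\Theta(\lambda_{3}(t))=\Theta(t\,\alpha(t))$. For each $i$ put into $S$ the $m$ horizontal translates $\gamma_{i}+ja$, $j=0,\dots,m-1$; since horizontal translation preserves the $y$-coordinates of endpoints, all translates of $\gamma_{i}$ have their two endpoints on two fixed horizontal lines, so $S$ (of size $mt$) uses at most $2t$ distinct lines. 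Any point below the lower envelope of $S$ lies in a single face of complexity $\Omega\pth{m\,\lambda_{3}(t)}=\Omega\pth{mt\,\alpha(t)}$. Taking $t=\floor{k/2}$ and $m=\floor{n/t}$ (and padding with dummy far-away segments and unused lines to reach exactly $n$ segments and $k$ lines) produces, for $n=\Omega(k)$, an instance of $n$ segments with endpoints on $k$ lines whose arrangement has a face of complexity $\Omega(n\,\alpha(k))$, matching the upper bound.

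The main obstacle is the decomposition in the first paragraph: recognizing that segments joining two crossing lines fall, one per angular sector, into families of chords of a convex wedge, and that capping such a wedge with a large circle turns it into exactly the hypothesis of \lemref{closed:arc:segments}; after that the proof is a routine invocation of the Combination Theorem, just as in the preceding lemmas of this section. Secondary points needing care are the general-position reduction together with the degenerate cases (parallel lines, a segment lying inside a line), and, on the lower-bound side, checking that the periodic pattern of translates indeed multiplies the single-face complexity by the number of translates.
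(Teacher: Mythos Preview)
Your proof is correct and follows essentially the same route as the paper: partition $S$ into $O(k^{2})$ subfamilies according to the pair of lines carrying the endpoints, further split each pair by the four quadrants, show each piece contributes a linear-complexity face, and finish with \thmref{complex:Face:In:Overlay}. The only differences are cosmetic: the paper obtains the linear bound for each quadrant by taking the minimal transversal subsegments $s\subset l_i$, $t\subset l_j$ and ``arguing as in the proof of \lemref{closed:arc:segments}'', whereas you cap the wedge (or strip) with a large circle and invoke \lemref{closed:arc:segments} directly; you also treat the parallel-line case explicitly, which the paper glosses over. Your lower-bound argument is exactly the construction the paper alludes to in \secref{face:introduction}, just spelled out in more detail.
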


\begin{proof}
    The construction described in \secref{face:introduction} implies the lower bound.

    The upper bound proof is similar to the proof of \lemref{closed:arc:segments}.

    Let $p$ be a marking point inside the given face $F$ of $A(S)$. Let $S_{i j}$, for $1 \leq i \leq j \leq k$, denote the set of all the line segments of $S$ having one endpoint on $l_i$ and the other on $l_j$.  Assume, without loss of generality, that these sets are disjoint.

    Clearly, the complexity of the face containing $p$ in the arrangement $A(S_{i{} i})$, for any $i=1,\ldots, k$, is $O(|S_{i{} i}|)$, since this arrangement is a union of segments lying on a common line.

    We next show that the complexity of the face containing $p$ in $A(S_{i j})$, for $1 \leq i < j \leq k$, is linear. We partition $S_{i j}$ into $4$ subsets of line segments, and prove the linear bound for each subset separately. The combination \thmref{complex:Face:In:Overlay}, applied to these four arrangements, implies the asserted linear bound.

    The lines $l_i, l_j$ divide the plane into $4$ quadrants. We consider separately the line segments of $S_{i j}$ contained in each quadrant. Let $S_1, S_2, S_3, S_4$ denote the resulting partition of $S_{i j}$.

    Let $s$ (respectively, $t$) denote the smallest line segment contained in $l_i$ (respectively, $l_j$) that intersects all the segments of $S_1$. Arguing as in the proof of \lemref{closed:arc:segments}, we conclude that the complexity of the face of $A(S_1)$ that contains $p$ is $O(|S_1|)$.

    In a similar manner we prove that the complexities of the face containing $p$ in the arrangements $A(S_2), A(S_3), A(S_4)$ are also linear. \thmref{complex:Face:In:Overlay} then implies that the complexity of the face containing $p$ in the arrangement $A(S_{i j})$ is $O(|S_{i j}|)$.

    Applying \thmref{complex:Face:In:Overlay} once again to the arrangements $A(S_{i j})$, for $1 \leq i \leq j \leq k$, it follows that the complexity of $F$ is
    \[
        O \pth{ (\sum_{1 \leq i \leq j \leq k} |S_{i j}| ) \alpha( k^{2} + k ) } = O \pth{ n \alpha(k ) },
    \]
    as claimed.
\end{proof}

\section{Complexity of Many Faces in Arrangements}

\subsection{Many Faces in the Overlay of Many %
   Arrangements}
\seclab{many:faces:overlay}

Given $k$ points $p_{1}, \ldots, p_{k}$ and $t$ arrangements $A^{1} = A(\Gamma_{1}), \ldots, A^{t}=A(\Gamma_{t})$, let $A$ denote, as above, the arrangement resulting from the overlay of these $t$ arrangements. For each point $p_{i}$, for $1 \leq i \leq t$, let $E_{i}$ denote the face of $A$ containing $p_{i}$.

Let $F$ be a marked face (containing one of the points $p_j$) in $A^{i}$, for $1 \leq i \leq t$, and let $\zeta$ be a connected component of the boundary of $F$.

We trace $\zeta$ in a counterclockwise direction, and split the arcs of $\Gamma_i$ that we encounter into subarcs at points where one of the following events happens:

\begin{itemize}
    \item We enter the boundary of a new face $E_{i}$ (i.e., a face different from the {\em last} face $E_j$ that we have encountered).

    \item We reach a vertex of $A^{i}$ lying on $\zeta$ (i.e., an intersection point of two distinct arcs of $\Gamma_i$).
\end{itemize}

Let $G_{\zeta}$ denote the set of the resulting split subarcs of $\zeta$.  The set $G_{\zeta}$ is called a {\em refinement} of $\zeta$.  We now replace all the arcs appearing along $\zeta$, in $A^{i}$, by the arcs of $G_{\zeta}$. We perform this replacement process over all the boundary components of all the marked faces in $A^{1}, \ldots, A^{t}$.  Let $G_{1}, \ldots, G_{t}$ be the sets of new arcs generated in this manner for $A^{1}, \ldots, A^{t}$, respectively. We call the arrangement $A(G_{i})$, for $1 \leq i \leq t$, the {\em splitting arrangement} of $A^{i}$ and the set $G_{i}$ the {\em refinement} of $\Gamma_i$.

Having constructed the splitting arrangements $A(G_{i})$, we can replace each $A^{i}$ by $A(G_{i})$ in the overlay arrangement $A$ without reducing the combinatorial complexity of the marked faces in $A$.  The complexity of the marked faces in $A(G_{i})$ is at most $2|G_{i}|$ (a subarc may appear on the boundary of two marked faces, one on each side). Let $L_{i}$ be the number of splitting points created for $G_{i}$.  We denote by $L = \sum_{i=1}^{t} L_{i}$ the {\em splitting number} of $A$. It follows immediately from the construction that:

\begin{corollary}
    \corlab{input:complexity}%
    Let $C_{i}$ be the total complexity of the marked faces in $A^{i}$.  Then $|G_{i}| \leq 2L_{i} + 2C_{i}$ and the total complexity of the marked faces in $A(G_{1}), \ldots, A(G_{t})$ is at most $2L + 2C$.
\end{corollary}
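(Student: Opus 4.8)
The plan is to obtain both inequalities directly from the refinement construction by a single incidence count; the only work is careful bookkeeping of multiplicities, which is why the paper can call the result immediate. There is no real obstacle here, just a couple of routine checks that I flag at the end.

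First I would bound $|G_i|$. Fix $i$ and a marked face $F$ of $A^i$, and trace a boundary component $\zeta$ of $\partial F$ once counterclockwise. Along $\zeta$ the arcs of $\Gamma_i$ occur as a cyclic sequence of maximal sub-arcs separated by vertices of $A^i$; call these the \emph{edge appearances} of $A^i$ along $\zeta$. Their number, summed over the boundary components of $F$, is the edge count of $F$, hence at most $C_F$. In the refinement we cut an edge appearance only at the type-(1) events (entering a new overlay face) lying strictly inside it — no type-(2) event (an $A^i$-vertex) lies in the interior of an edge of $A^i$ — so it contributes to $G_\zeta$ exactly one sub-arc plus one more per interior type-(1) cut. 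Summing over all edge appearances along all boundary components of all marked faces of $A^i$, and using the fact noted earlier that no arc of $\Gamma_i$ can lie on two distinct boundary components of one face (so that $|G_i|\le\sum_\zeta|G_\zeta|$ with no further loss than a sub-arc possibly lying on two different faces' boundaries, which is immaterial for an upper bound), we get
\[
    |G_i| \;\le\; \sum_{\text{marked }F}C_F \;+\; L_i \;=\; C_i+L_i \;\le\; 2L_i+2C_i ,
\]
which is the first assertion.

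For the second assertion I would invoke the bound recorded just before the corollary: the total complexity of the marked faces of $A(G_i)$ is at most $2|G_i|$, since the sub-arcs of $G_i$ are precisely the edges of $A(G_i)$ appearing along its marked faces and each such edge bounds at most two of them, one on each side. Summing over $i$ and substituting the sharper estimate $|G_i|\le C_i+L_i$ from the previous paragraph,
\[
    \sum_{i=1}^{t}\bigl(\text{complexity of the marked faces of }A(G_i)\bigr)
    \;\le\; 2\sum_{i=1}^{t}|G_i| \;\le\; 2\sum_{i=1}^{t}(C_i+L_i) \;=\; 2C+2L .
\]

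The only step that rewards a little attention is the inequality $|G_\zeta|\le(\#\,\text{edge appearances along }\zeta)+(\#\,\text{interior type-(1) cuts along }\zeta)$: one should check that type-(2) cuts never genuinely subdivide an edge of $A^i$ (they sit at its endpoints), that cutting a closed curve at $N$ points yields exactly $N$ arcs, and that the degenerate components (e.g.\ one traced out by a single arc whose only cut points are its own endpoints) are accounted for correctly. All of these are routine, and the generous constants in the stated corollary leave ample slack.
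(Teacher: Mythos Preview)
Your proposal is correct and is exactly the kind of counting the paper has in mind: the paper gives no proof at all, stating only that the corollary ``follows immediately from the construction,'' and your edge-appearance bookkeeping (edge appearances bounded by $C_i$, each further subdivided only by type-(1) cuts, hence $|G_i|\le C_i+L_i$) together with the already-recorded observation that the marked-face complexity of $A(G_i)$ is at most $2|G_i|$ is precisely the routine verification being elided. The degenerate cases you flag (closed single-arc components, cuts at vertices not genuinely subdividing an edge) are handled by the slack in the stated constants, as you note.
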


\begin{lemma}{\bf{(General Combination Lemma for Arrangements of Curves)}}{}
    \lemlab{alt:Many:Arrange:Many:Faces} \\ Given $k$ points, $p_{1}, \ldots, p_{k}$, and $t$ arrangements, $A^{1}, \ldots, A^{t}$, let $A$ denote the arrangement resulting from the overlay of $A^{1}, \ldots, A^{t}$. For each point $p_{i}$, for $1 \leq i \leq t$, let $E_{i}$ denote the face of $A$ containing $p_{i}$, and let $C$ denote the total complexity of the faces of each of the $t$ arrangements, which contain at least one of the given points.  Let $L$ denote the splitting number of $A$.  If every pair of arcs in the given arrangements intersect at most $s$ times, then the total complexity of all the faces $E_{i}$, for $1 \leq i \leq k$, is $O( \frac{\lambda_{s+2}(t)}{t} ( C + L ))$.
\end{lemma}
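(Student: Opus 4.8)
The plan is to run the proof of the Single Face Combination \thmref{complex:Face:In:Overlay} once for every marked face $E_j$, but carried out on the \emph{refined} arrangements $A(G_1), \ldots, A(G_t)$ instead of $A^1, \ldots, A^t$, and then to add the resulting estimates using \corref{input:complexity}.

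First I would replace each $A^i$ by its splitting arrangement $A(G_i)$; as observed in the construction this does not shrink any marked face of the overlay, and by \corref{input:complexity} the total complexity of all faces of $A(G_1), \ldots, A(G_t)$ containing one of the points $p_j$ is at most $2C + 2L$. From now on $A$ denotes the overlay of the $A(G_i)$, and it suffices to bound $\sum_{j=1}^{k}|\partial{E_j}|$. Observe that for every $1 \le i \le t$ and $1 \le j \le k$ the overlay face $E_j$ is contained in the marked face $f_i^{(j)}$ of $A(G_i)$ that contains $p_j$, and that any subarc of $G_i$ appearing along $\partial{E_j}$ must lie on $\partial{f_i^{(j)}}$ (since $f_i^{(j)} \supseteq E_j$ meets that subarc and is a face of $A(G_i)$).

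Next, fix $j$ and, say, the exterior component $\zeta$ of $\partial{E_j}$. Tracing $\zeta$ and applying the two symbol-splitting transformations together with \lemref{consist}, exactly as in the proof of \thmref{complex:Face:In:Overlay}, I obtain a linear $DS(l, s+2)$ sequence $S_\zeta$ whose symbols split into $t$ groups $T_1, \ldots, T_t$ according to which $G_i$ the underlying subarc belongs to (the at most four symbols produced from a single subarc all stay in one group). Steps (1) and (2) of that proof then apply with $f_i^{(j)}$ in the role of $f_i$: the extension of the Consistency Lemma shows that $(S_\zeta)_{T_i}$ is consistent with the boundary sequences of the components of $\partial{f_i^{(j)}}$ appearing along $\zeta$, and the charging argument of step (2), through a circular order-$2$ \DS sequence on these components, bounds $|(S_\zeta)_{T_i}|$ by $C^{(i)}_\zeta + 2u^{(i)}_\zeta - 2$, where $C^{(i)}_\zeta$ and $u^{(i)}_\zeta$ are their total complexity and their number. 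Summing over $i$ and using $u^{(i)}_\zeta \le C^{(i)}_\zeta$, the total length of the restricted sequences is $O(C_\zeta)$, where $C_\zeta = \sum_{i=1}^{t} C^{(i)}_\zeta$; \thmref{restrict:DS}, applied with $k = t$ and order $s+2$ (monotonicity of $\lambda_{s+2}(m)/m$ handles the groups that happen to be empty), now yields $|S_\zeta| = O\!\pth{\frac{\lambda_{s+2}(t)}{t}\,C_\zeta}$.

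It remains to sum over all $\zeta$ and $j$. Since any subarc of $G_i$ borders at most two faces of the overlay, it lies on at most two boundary components of $E_j$, so $\sum_{\zeta \subseteq \partial{E_j}} C^{(i)}_\zeta = O\!\pth{C^{(i)}_j}$, where $C^{(i)}_j$ is the complexity of $f_i^{(j)}$ in $A(G_i)$. Therefore
\[
    \sum_{j=1}^{k}|\partial{E_j}| = \sum_{j=1}^{k}\sum_{\zeta\subseteq\partial{E_j}}|S_\zeta| = O\!\pth{\frac{\lambda_{s+2}(t)}{t}\sum_{j=1}^{k}\sum_{i=1}^{t}C^{(i)}_j},
\]
and $\sum_{j}\sum_{i}C^{(i)}_j$ is at most the total complexity of all faces of $A(G_1),\ldots,A(G_t)$ containing some $p_j$, which is at most $2C+2L$ by \corref{input:complexity}. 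This gives the claimed bound $O\!\pth{\frac{\lambda_{s+2}(t)}{t}(C+L)}$. The hard part is the third paragraph: checking that steps (1) and (2) of the single-face argument transplant verbatim to the refined arrangements, which is exactly where the refinement does its job (it forces the $G_i$-subarcs along $\zeta$ to behave like the boundary of the single marked face $f_i^{(j)}$), together with the mild bookkeeping above that keeps the double counting over the $E_j$ down to a constant factor.
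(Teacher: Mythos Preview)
Your overall plan---pass to the refinements $G_i$ and carry out a single-face estimate for each $E_j$---matches the paper's, but the paper's execution is considerably shorter. After refinement the arcs of each $G_i$ are pairwise interior-disjoint (they were split at every vertex of $A^i$), so the set $V^r_j$ of arcs of $G_r$ lying on $\partial E_j$ already forms an arrangement of complexity $O(|V^r_j|)$. The paper therefore invokes \thmref{complex:Face:In:Overlay} as a black box on the $t$ arrangements $A(V^1_j),\ldots,A(V^t_j)$, obtaining $|\partial E_j|=O\bigl(\frac{\lambda_{s+2}(t)}{t}\sum_r|V^r_j|\bigr)$; it never re-opens steps~(1)--(2) or \thmref{restrict:DS}.

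More importantly, your final summation has a genuine gap. You assert that $\sum_{j}\sum_{i}C^{(i)}_j$ is bounded by the total complexity of the marked faces in $A(G_1),\ldots,A(G_t)$, hence by $2C+2L$ via \corref{input:complexity}. But $C^{(i)}_j$ is the complexity of the face of $A(G_i)$ containing $p_j$, and several marking points $p_{j_1},\ldots,p_{j_m}$ may well lie in the \emph{same} face $F$ of $A(G_i)$ (refinement does not change the faces of $A^i$). Then $F$ contributes $m\cdot|\partial F|$ to your double sum, whereas \corref{input:complexity} counts it once. So the inequality you claim can fail by an arbitrary factor. The observation you do make---``any subarc of $G_i$ borders at most two faces of the overlay''---is exactly the right one, but it controls the sum over \emph{different} $j$'s, not $\sum_{\zeta\subseteq\partial E_j}C^{(i)}_\zeta$ for a fixed $j$.

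The paper uses precisely that observation to close the argument: since each arc of $G_r$ touches at most two marked overlay faces, $\sum_j|V^r_j|\le 2|G_r|$, hence $\sum_j\sum_r|V^r_j|\le 2|G|=O(L+C)$. If you insist on your route through steps~(1)--(2), the fix is to abandon the intermediate quantity $C^{(i)}_j$ and instead bound $\sum_j\sum_\zeta|(S_\zeta)_{T_i}|$ directly in terms of $|G_i|$ using the two-sides property of the refined arcs; this is doable, but amounts to reproducing what the paper's black-box application of \thmref{complex:Face:In:Overlay} achieves in one line.
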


\begin{proof}
    Let $G_{1}, \ldots, G_{t}$ be the refinements of $\Gamma_{1}, \ldots, \Gamma_{t}$, respectively.  Let $G = \cup_{i=1}^{t} G_{i}$ be the union of the given refinements, and let $A_{G} = A( G )$ be the overlay arrangement of the $t$ arrangements $A(G_i)$, for $1 \leq i \leq t$. As noted above, the total complexity of the marked faces in $A_{G}$ is no less than the complexity of the marked faces in $A$.

    Since each arc of $G_{i}$, for $1 \leq i \leq t$, appears in the boundary of at most two marked faces (each side of the arc can appear on the boundary of at most a single face) in $A_{G}$, the result follows from \thmref{complex:Face:In:Overlay}.  Formally, let $V^{r}_{i}$, for $1 \leq r \leq t$, denote the set of arcs of $G_{i}$ that appear along the boundary of $E^{G}_{i}$ ($E^{G}_i$ is the face in $A_{G}$ that contains the point $p_{i}$).  $V^{r}_{i}$ contain edges that do not intersect in their interior. Thus the total complexity of the arrangement defined by the arcs in $V^{r}_{i}$ is $O(|V^{r}_{i}|)$.

    The face $E^{G}_{i}$ is the face, containing $p_{i}$, in the arrangement created from the overlay of the arrangements $A(V^{1}_{i}), \ldots, A(V^{t}_{i})$. Using \thmref{complex:Face:In:Overlay}, it follows that the total complexity of $E^{G}_{i}$ is
    \[
        O \pth{ \frac{\lambda_{s+2}(t)}{t} \sum_{r=1}^{t} |V^{r}_{i}| }.
    \]
    Thus the total complexity of all the faces $E^{G}_{i}$, for $1 \leq i \leq k$, is
    \[ O \left( \sum_{i=1}^{k} \left( \frac{\lambda_{s+2}(t)}{t} \sum_{r=1}^{t} |V^{r}_{i}| \right) \right) = O \left( \frac{\lambda_{s+2}(t)}{t} \sum_{i=1}^{k} \sum_{r=1}^{t} |V^{r}_{i}| \right) = O \left( \frac{\lambda_{s+2}(t)}{t} \sum_{r=1}^{t} \sum_{i=1}^{k} |V^{r}_{i}| \right).
    \]

    Since $\sum_{i=1}^{k} |V^{r}_{i}| \leq 2|G_{i}|$, the above expression is
    \[ O \left( \frac{\lambda_{s+2}(t)}{t} |G| \right) = O \left( \frac{\lambda_{s+2}(t)}{t} ( L + C ) \right).
    \]
\end{proof}

The following lemma bounds the splitting number $L$, using a planarity argument.

\begin{lemma}
    \lemlab{splitting:number:bound} The splitting number $L$ of $A$ is at most $2k t + 2C$, where $C$ is the total complexity of the marked faces in the $t$ separate arrangements, and $k$ is the number of marking points.
\end{lemma}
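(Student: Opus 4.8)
The plan is to classify every splitting point according to which of the two bullet events created it: a \emph{type-1} point, where the trace enters the boundary of a new marked overlay-face $E_j$ (one of the faces $E_1,\dots,E_k$), and a \emph{type-2} point, which is a vertex of $A^i$ lying on the traced boundary component of a marked face of $A^i$. I would bound the two types separately, aiming for at most $C$ type-2 points and at most $2kt+C$ type-1 points, which sum to the claimed $2kt+2C$.

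The type-2 count is immediate. When we trace a component $\zeta$ of $\partial F$ for a marked face $F$ of $A^i$ and pass a vertex $v$ of $A^i$, that occurrence of $v$ contributes one unit to the combinatorial complexity of $F$ in $A^i$ (a vertex that is visited $r$ times by the boundary of $F$ is counted $r$ times in its complexity). Summing over all boundary components of all marked faces of all the arrangements $A^1,\dots,A^t$, the number of type-2 points is at most $\sum_i C_i = C$.

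For the type-1 points I would run a planarity argument, with a topological preprocessing step to keep it from being circular. Fix $i$, a marked face $F$ of $A^i$ containing $m_F$ of the marking points, and a component $\zeta$ of $\partial F$. As we trace $\zeta$ the overlay face on the $F$-side is, at each moment, some face of $A$; the marked ones encountered are among the faces $E_j$ with $p_j\in F$, and these are pairwise-disjoint connected subregions of the Jordan domain bounded by $\zeta$. The key observation is that two disjoint connected subregions of that domain cannot have their contacts with $\zeta$ interleaved, so the cyclic sequence of maximal ``blocks'' of consecutive contacts with a single $E_j$ along $\zeta$ contains no alternation $a\cdots b\cdots a\cdots b$: it is a circular Davenport--Schinzel sequence of order $2$ on the $u_\zeta\le m_F$ symbols that actually appear, hence has $O(u_\zeta)$ blocks. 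Since the number of type-1 splits on $\zeta$ is at most the number of blocks, the total number of type-1 points is $O\!\left(\sum_\zeta u_\zeta\right)$, where $\zeta$ ranges over all boundary components of all marked faces of all the $A^i$. Reorganizing this sum by the overlay-face index and using that each $E_j$ lies in the single face $f_j^i$ of $A^i$ that contains $p_j$, it equals $\sum_j\sum_i b_{ij}$, where $b_{ij}$ is the number of boundary components of $f_j^i$ carrying a $\Gamma_i$-portion of $\partial E_j$. I would bound this by building, for each $i$, a planar graph whose vertices are the marking points in marked faces of $A^i$ plus the vertices of $A^i$ on those face boundaries, and whose edges join $p_j$ to each such boundary component, routed through $E_j$ and out across the incident $\Gamma_i$-arc, taking care to identify the far end of each routed edge either with the marking point of the (marked) face of $A^i$ on the other side of that arc or with a vertex of $A^i$ already counted in $C$. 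Euler's formula then gives $\sum_j b_{ij}=O(k+C_i)$ for each $i$, hence $\sum_\zeta u_\zeta=O(kt+C)$, and a direct accounting of constants yields the stated bound.

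The main obstacle is exactly this last planarity step. The naive auxiliary graph, in which $p_j$ is joined to one interior point of each boundary component it touches, makes all those points degree-one vertices, so Euler's inequality becomes vacuous and the estimate collapses back to the total overlay complexity---the very quantity the paper is trying to bound. Both ingredients above are what break the circularity: the Jordan-curve/order-$2$ reduction, which turns ``$E_j$ touches $\partial F$ along many arcs'' into ``$O(u_\zeta)$ blocks per component'', and the routing that charges the far end of every auxiliary edge to one of the $k\cdot t$ point–arrangement pairs or to a boundary vertex already paid for in $C$.
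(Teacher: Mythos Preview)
Your decomposition into type-2 (vertex) and type-1 (face-transition) splitting points is exactly the paper's, and your type-2 count is identical.  The difference is entirely in the type-1 argument.

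The paper handles type-1 in a \emph{single} bipartite-planarity step per marked face $F$ of $A^{r}$.  Its auxiliary graph $K$ has as vertices the marking points $p_{1},\dots,p_{k(F)}$ in $F$ together with one fresh point $q_{m}$ placed inside each complementary region $H_{m}$ of $\Re^{2}\setminus F$ (one $q_{m}$ per boundary component $\zeta_{m}$).  For every maximal portion $\delta\subset\zeta_{m}$ that lies on $\partial E_{j}$ it adds the edge $(p_{j},q_{m})$, drawn through a point of $\delta$: into $E_{j}$ on the $F$-side and into $H_{m}$ on the other.  This is a bipartite planar multigraph, and because two consecutive portions along $\zeta_{m}$ come from distinct $E_{j}$'s, every face of $K$ has at least four sides; Euler then gives at most $2(k(F)+l(F))$ portions, and summing over faces and arrangements yields the bound.

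Your route splits this into two steps: (a) a DS-order-2 argument on each $\zeta$ to bound the number of blocks by $O(u_{\zeta})$, and (b) a planarity argument for $\sum_{\zeta}u_{\zeta}$.  Step (a) is correct---indeed the non-interleaving of the $E_{j}$'s along $\zeta$ is precisely why every face of the paper's $K$ has length $\geq 4$---but it becomes redundant once the right auxiliary graph is in hand.  Step (b) is where your proposal has a genuine gap.  Routing each edge ``out across the $\Gamma_{i}$-arc'' to the marking point of the face on the far side fails whenever that face is unmarked, and falling back to ``a vertex of $A^{i}$ already counted in $C$'' is not specified well enough to guarantee planarity or to keep the vertex count under control.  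The degree-one worry you flag is real for that construction.

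The fix you are looking for is exactly the paper's: do not try to land on a pre-existing point.  Place one fresh vertex $q_{m}$ in each hole $H_{m}$ and route every edge crossing $\zeta_{m}$ to that $q_{m}$.  The vertex $q_{m}$ then has high degree, the graph is bipartite on $k(F)+l(F)$ vertices, and Euler applies directly---either to bound your $\sum_{\zeta}u_{\zeta}$ (simple graph, one edge per $(j,m)$ pair) or, more efficiently, to bound the number of blocks themselves (multigraph, one edge per block), eliminating step (a) altogether.
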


\begin{proof}
    Fix $1 \leq r \leq t$, let $G_{r}$ be the refinement of $A^{r}$, and let $Z_{r}$ denote the sets of marked faces of $A^{r}$. For a marked face $F$ of $A^r$, we let $k(F)$ denote the number of the marking points contained in $F$, let $C_{F}$ denote the complexity of $F$, and let $l(F)$ denote the number of boundary components of $F$.

    Fix a face $F \in Z_{r}$ and let the $k(F)$ points contained in $F$ be $p_{1}, \ldots, p_{k(F)}$.  Let $\zeta_{1}, \ldots, \zeta_{l(F)}$ be the distinct connected components of $\partial{F}$. For each of these points $p_{i}$ let $E_{i}$ denote the face in the arrangement $A$ that contains the point $p_{i}$. Traverse each $\zeta_{m}$ and partition it into connected portions $\delta$ so that each such portion intersects the boundary of only a single region $E_{i}$ (and so that two adjacent portions intersect distinct such regions); note that in general the endpoints of the portions $\delta$ are not uniquely defined. (This partitioning differs from the refinement $G_r$ in that here we do not split arcs at vertices of $A^{r}$.) We define, in a manner similar to that in \cite{GSS89}, a plane embedding of a planar bipartite graph $K$ as follows.  The vertices of $K$ are the points $p_{1}, \ldots, p_{k(F)}$ and $l(F)$ additional points $q_{1}, \ldots, q_{l(F)}$, so that $q_{i}$ lies inside the connected component $H_{i}$ of $\Re^{2} - F$ whose common boundary with $F$ is $\zeta_{i}$. For each subarc $\delta$ of $G_r$ lying on some $\zeta_{m}$ and intersecting some $\partial{E_{i}}$, we add the edge $(q_{m}, p_{i})$ to $K$, and draw it by taking an arbitrary point in $\delta \cap \partial{E_{i}}$, and connect it to $p_{i}$ within $E_{i}$ and to $q_{m}$ within $H_{m}$.  The connectedness of each $E_{i}$ and each $H_{m}$ implies, as in \cite{GSS89}, that we can draw all edges of $K$ so that they do not cross one another.  It follows from the definition of the portions $\delta$ that in this embedding of $K$ each face is bounded by at least four edges (even though $K$ may have multiple edges between a pair of vertices).  Thus, by {{Euler's}} formula, and because $K$ bipartite, the number of edges in $K$, and thus the number of portions $\delta$, is at most $2(k(F) + l(F))$.

    Applying this process to all marked faces in $A^{r}$, it follows that the splitting number of $A^{r}$ (which is at most equal to the number of portions $\delta$ created plus the number of vertices of the marked faces in $A^{r}$) is $2k + 2C_r$ where $C_r$ is the total complexity of the marked faces in $A^{r}$.  Summing this quantity over all the $t$ arrangements, the lemma follows.
\end{proof}

Substituting the bound of \lemref{splitting:number:bound} into \lemref{alt:Many:Arrange:Many:Faces}, we obtain the following more explicit version of \lemref{alt:Many:Arrange:Many:Faces}:

\begin{theorem}{\bf (General Combination Theorem for Arrangements of Curves)}
    \thmlab{many:Arrange:Many:Faces}%

    Given $k$ points $p_{1}, \ldots, p_{k}$ and $t$ arrangements $A^{1}, \ldots, A^{t}$, let $A$ denote the arrangement resulting from the overlay of these $t$ arrangements. For each point $p_{i}$, for $1 \leq i \leq t$, let $E_{i}$ denote the face of $A$ containing $p_{i}$, and let $C$ denote the total complexity of the faces containing at least one of the given points in each of the $t$ arrangements. If every pair of arcs in the given arrangements intersect at most $s$ times, then the total complexity of all the faces $E_{i}$, for $1 \leq i \leq k$, is $O \pth{ k \lambda_{s+2}(t) + \frac{\lambda_{s+2}(t)}{t} C }$.
\end{theorem}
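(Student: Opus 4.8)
The plan is essentially to substitute and simplify. Recall that \lemref{alt:Many:Arrange:Many:Faces} gives the bound $O\pth{\frac{\lambda_{s+2}(t)}{t}(C+L)}$ on the total complexity of all the faces $E_i$, where $L$ is the splitting number of $A$, and that \lemref{splitting:number:bound} bounds $L \leq 2kt + 2C$. So the first and essentially only step is to plug the latter into the former: the total complexity is $O\pth{\frac{\lambda_{s+2}(t)}{t}(C + 2kt + 2C)} = O\pth{\frac{\lambda_{s+2}(t)}{t}(3C + 2kt)}$.

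Next I would split this into two summands. The term $\frac{\lambda_{s+2}(t)}{t}\cdot 3C$ is clearly $O\pth{\frac{\lambda_{s+2}(t)}{t}C}$, which is one of the two terms claimed in the theorem. The term $\frac{\lambda_{s+2}(t)}{t}\cdot 2kt$ simplifies, since the factors of $t$ cancel, to $O\pth{k\lambda_{s+2}(t)}$, which is the other claimed term. Adding the two back together yields the asserted bound $O\pth{k\lambda_{s+2}(t) + \frac{\lambda_{s+2}(t)}{t}C}$, and the proof is complete.

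There is essentially no obstacle here: the theorem is purely a restatement of \lemref{alt:Many:Arrange:Many:Faces} with the explicit bound on $L$ from \lemref{splitting:number:bound} substituted in, followed by trivial algebraic simplification. The only thing to be careful about is that the constants absorbed by the $O(\cdot)$ notation are independent of $k$, $t$, $C$, and of the particular arrangements — but this is immediate from the corresponding statements in the two lemmas being invoked. One could also remark that since in applications one typically has $C \geq t$ (each marked face has at least one edge per arrangement in which it is nontrivial), the first term often dominates only when $k$ is large, but that observation is not needed for the statement itself.
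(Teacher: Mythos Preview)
Your proposal is correct and matches the paper's approach exactly: the paper's proof consists of the single word ``Obvious,'' with the preceding text explaining that the theorem is obtained by substituting the bound of \lemref{splitting:number:bound} into \lemref{alt:Many:Arrange:Many:Faces}. Your write-up merely spells out this substitution and the trivial algebra in more detail than the paper bothers to.
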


\begin{proof}
    Obvious.
\end{proof}

A simple but useful application of \thmref{many:Arrange:Many:Faces} is:

\begin{theorem}
    \thmlab{m:Faces:In:Arrangement} Given a collection $\Gamma = \{ \gamma_{1}, \ldots, \gamma_{n} \}$ of $n$ Jordan arcs, each pair of which have at most $s$ intersection points, the complexity of any $m$ faces in the arrangement $A(\Gamma)$ is at most $O \pth{ \sqrt{m} \lambda_{s+2}( n ) }$.
\end{theorem}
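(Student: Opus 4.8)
The plan is to derive this from the General Combination Theorem (\thmref{many:Arrange:Many:Faces}) by partitioning $\Gamma$ into subcollections of a carefully chosen size, and then to simplify the resulting estimate using the superadditivity of $\lambda_{s+2}$. First I would dispose of the degenerate range: if $m = \Omega(n^{2})$ (say $m \ge n^{2}/4$), or $n$ is bounded, the claim is immediate, since the whole arrangement $A(\Gamma)$ has complexity $O(n^{2})$ while $\sqrt{m}\,\lambda_{s+2}(n) \ge (n/2)\cdot n = \Omega(n^{2})$, using $\lambda_{s+2}(n)\ge n$. So assume from now on that $m < n^{2}/4$ and $n$ is large.

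Set $k = \ceil{\sqrt{m}}$ and $t = \ceil{n/k}$; one checks that $2\le t$, and partition $\Gamma$ arbitrarily into $t$ subcollections $\Gamma_{1},\ldots,\Gamma_{t}$, each of size at most $\ceil{n/t}\le k = O(\sqrt{m})$. Pick a marking point $p_{j}$ inside each of the $m$ given faces. Then $A(\Gamma)$ is exactly the overlay of $A(\Gamma_{1}),\ldots,A(\Gamma_{t})$, and the $m$ given faces are precisely the faces $E_{j}$ of this overlay that contain the points $p_{j}$. For each $i$, the total complexity $C_{i}$ of the faces of $A(\Gamma_{i})$ meeting $\{p_{1},\ldots,p_{m}\}$ is at most the complexity of the entire arrangement $A(\Gamma_{i})$, which is $O(|\Gamma_{i}|^{2}) = O(m)$; hence $C := \sum_{i=1}^{t} C_{i} = O(tm)$. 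Applying \thmref{many:Arrange:Many:Faces} with $m$ marking points and $t$ arrangements gives that the total complexity of the $m$ given faces is
\[
    O\pth{ m\,\lambda_{s+2}(t) + \frac{\lambda_{s+2}(t)}{t}\,C } = O\pth{ m\,\lambda_{s+2}(t) }.
\]

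It then remains to verify that $m\,\lambda_{s+2}(t) = O\pth{ \sqrt{m}\,\lambda_{s+2}(n) }$. For this I would use two standard facts about Davenport--Schinzel bounds: superadditivity of $\lambda_{s+2}$, which gives $j\,\lambda_{s+2}(x)\le \lambda_{s+2}(jx)$ for every positive integer $j$; and $\lambda_{s+2}(cn) = O(\lambda_{s+2}(n))$ for any constant $c$ (the same kind of inequality already used earlier in the paper). Since $kt = k\ceil{n/k} < n+k \le 2n$, the first fact yields $k\,\lambda_{s+2}(t)\le \lambda_{s+2}(kt)\le \lambda_{s+2}(2n)$, and the second gives $\lambda_{s+2}(2n) = O(\lambda_{s+2}(n))$. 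Hence $m\,\lambda_{s+2}(t)\le k^{2}\,\lambda_{s+2}(t) = k\cdot\pth{ k\,\lambda_{s+2}(t) } = O\pth{ k\,\lambda_{s+2}(n) } = O\pth{ \sqrt{m}\,\lambda_{s+2}(n) }$, as required.

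The only real idea in this argument is the choice $t\approx n/\sqrt{m}$, which balances the two terms of the Combination Theorem: it makes the per-face charge $\lambda_{s+2}(t)$ small enough to yield the target bound, while simultaneously keeping each $\Gamma_{i}$ small enough that $C_{i}$ is trivially $O(m)$, so that the second term $\frac{\lambda_{s+2}(t)}{t}C$ gets absorbed into the first. That is also the only place where anything can go wrong; the trivial range, the rounding in the ceilings, and the elementary properties of $\lambda_{s+2}$ are all routine bookkeeping.
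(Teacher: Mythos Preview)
Your proof is correct and follows essentially the same approach as the paper: partition $\Gamma$ into $t\approx n/\sqrt{m}$ groups of size about $\sqrt{m}$, bound each $C_i$ trivially by the full complexity $O(m)$ of $A(\Gamma_i)$, apply \thmref{many:Arrange:Many:Faces}, and then simplify $m\,\lambda_{s+2}(t)$ to $O(\sqrt{m}\,\lambda_{s+2}(n))$ via superadditivity of $\lambda_{s+2}$. The only cosmetic differences are that the paper writes the superadditivity step as $\lambda_{s+2}(n/\sqrt{m})\le \lambda_{s+2}(n)/\sqrt{m}$ and does not single out the degenerate range $m=\Omega(n^2)$.
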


\begin{proof}
    Let $P$ be a set of $m$ points, one inside each of the given faces.  Divide $\Gamma$ into $t = \ceil{ \frac{n}{\sqrt{m}} }$ sets, $\Gamma_{1}, \ldots, \Gamma_{t}$, so that each set contains at most $\sqrt{m}$ arcs of $\Gamma$. The complexity of the marked faces (by points from $P$) in any $A(\Gamma_{i})$, for $1 \leq i \leq t$, is at most $O(m)$ (this is the total complexity of the arrangement).  By \thmref{many:Arrange:Many:Faces}, the complexity of the marked faces in $A(\Gamma)$, regarded as the overlay of the arrangements $A(\Gamma_{1}), \ldots, A(\Gamma_{t})$, is
    \[ O \left( \frac{\lambda_{s+2}(t)}{t} m t + m \lambda_{s+2}(t) \right) = O \left( m \lambda_{s+2}(t) \right) = O \left( \lambda_{s+2} \left( \frac{n}{\sqrt{m}} \right) m \right)
    \]
    Since $\lambda_{s+2} \left( \frac{n}{\sqrt{m}} \right) \leq \frac{ \lambda_{s+2}(n) }{ \sqrt{m} }$, it follows that the complexity of the $m$ given faces is
    \[
        O \pth{ \frac{\lambda_{s+2}( n ) } {\sqrt{m}} m } = O \pth{ {\sqrt{m}} \lambda_{s+2}( n ) },
    \] as claimed.
\end{proof}

The bound in \thmref{m:Faces:In:Arrangement} has already been proved in \cite{EGPPSS92}, but the proof given here is much simpler than the previous proof.  The theorem provides the best currently known bound for the complexity of many faces in arrangements of general arcs. We note, however, that for the cases of lines, segments, or circles, better bounds are known (See \cite{cegsw-ccbac-90, EGS90}). %
For example, the complexity of $m$ faces in an arrangement of $n$ line segments is $O \pth{ m^{2/3} n^{2/3} + n \alpha(n) + n \log{m} }$; see \cite{cegsw-ccbac-90, AEGS92, EGS90}.

We also note that the approximation $\lambda_{s+2} \pth{ \frac{n}{\sqrt{m}} } \leq \frac{\lambda_{s+2}(n)}{\sqrt{m}}$ is fairly sharp only when $m \ll n^{2}$. When $m$ approaches its maximum value $\Theta( n^{2} )$, the non-approximated bound $O \pth{ \lambda_{s+2} \pth{ \frac{n}{\sqrt{m}} } m }$ approaches $O(m)$, which is the correct bound. This does not hold in the older proof of \cite{EGPPSS92}.

To demonstrate that \thmref{many:Arrange:Many:Faces} is almost tight in the worst case, we need the following lemma (See \cite[pp. 111--112]{e-acg-87}) :

\begin{lemma}
    The maximum complexity of $m$ faces in an arrangement of $n$ lines or line segments is $\Omega \pth{ m^{2/3}n^{2/3} +n }$.  \lemlab{lower:bound:many:faces}
\end{lemma}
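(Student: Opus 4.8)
The plan is to establish the two terms of the lower bound separately, since they arise from genuinely different constructions. For the second term, the bound $\Omega(n)$ on the complexity of $m$ faces in an arrangement of $n$ lines is essentially trivial: take any $m \ge 1$ faces and observe that in a simple arrangement of $n$ lines in general position every line contributes at least one edge to the boundary of some face we have chosen (more carefully, choose the $m$ faces so that they collectively touch all $n$ lines, which is possible as long as $m$ is not too small; when $m$ is a small constant one can still arrange, by choosing faces incident to many lines, that $\Omega(n)$ edges appear). So the real content is the $\Omega(m^{2/3} n^{2/3})$ term, and that is what I would focus on.

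For the $\Omega(m^{2/3}n^{2/3})$ term I would reproduce the classical point-line incidence-based construction (the reference to \cite[pp.\ 111--112]{e-acg-87} is exactly this). Start from a configuration of $N$ points and $N$ lines in the plane with $\Omega(N^{4/3})$ incidences — for instance the $\sqrt{N}\times\sqrt{N}$ integer grid together with the lines that are richest in grid points, which is the standard Erd\H{o}s construction. Each such incidence can be converted into an edge on the boundary of a small face: perturb each of the $N$ points into a tiny face (e.g.\ replace the point by a minuscule triangle, or simply take the face of the line arrangement immediately adjacent to the point), so that a line passing through (near) that point contributes an edge to the boundary of the corresponding small face. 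This yields an arrangement of roughly $n := N$ lines (or segments) and $m := N$ marked faces whose total boundary complexity is $\Omega(N^{4/3}) = \Omega(m^{2/3} n^{2/3})$ in the balanced case $m = \Theta(n)$.

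To get the full range $m \le n^2$ (really $m \lesssim n^2$, since that is the regime where the bound is meaningful), I would apply the standard scaling trick: take $\Theta(\sqrt{m})$ disjoint translated copies of a near-optimal configuration on $\Theta(n/\sqrt{m})$ lines each, placed far apart so the copies do not interact. Each copy has $\Theta(n/\sqrt{m})$ lines and contributes $\Theta(\sqrt{m})$ faces with total complexity $\Omega\bigl((n/\sqrt m)^{4/3}(\sqrt m)^{... }\bigr)$; choosing the parameters so that the per-copy point set has size $\Theta(\sqrt m)$ and the per-copy incidence count is $\Omega((\sqrt m)^{2/3}(n/\sqrt m)^{2/3})$, and multiplying by the $\Theta(\sqrt m)$ copies, gives a total of $\Omega(m^{2/3} n^{2/3})$ marked edges over $\Theta(m)$ faces and $n$ lines. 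One checks that this stays within the allowed ranges $m \le \binom{n}{2}$ and uses only that $\lambda_3(\cdot)$ plays no role here (these are lines, so $s=1$ and there are no Davenport--Schinzel subtleties). Adding the trivial $\Omega(n)$ contribution (which dominates when $m$ is small) completes the bound.

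The main obstacle, such as it is, is bookkeeping rather than mathematics: one must verify that the incidences really do manifest as distinct edges on the marked faces (a single line can pass near several perturbed points, but each incidence yields a separate edge on a separate small face, so there is no over-counting), and that the construction is robust under the perturbation needed to make the points into genuine faces of the arrangement. The point-line incidence lower bound $\Omega(N^{4/3})$ itself is classical and I would simply cite it rather than reprove it. Since the statement is presented here only as a tool (\lemref{lower:bound:many:faces}) whose role is to certify that \thmref{many:Arrange:Many:Faces} is near-tight, it would be entirely in keeping with the paper's style to give the citation-driven sketch above and defer the detailed incidence construction to \cite{e-acg-87}.
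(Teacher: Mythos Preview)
Your proposal is appropriate: the paper does not prove this lemma at all but simply cites \cite[pp.~111--112]{e-acg-87}, and your plan to sketch the classical incidence-based grid construction and then defer to that citation is exactly in the same spirit. One small caveat: the scaling step in your sketch is garbled (the exponent is left as ``$\ldots$'' and the arithmetic $\sqrt{m}\cdot(\sqrt m)^{2/3}(n/\sqrt m)^{2/3}$ actually gives $m^{1/2}n^{2/3}$, not $m^{2/3}n^{2/3}$); the standard construction handles general $m\le n^2$ directly via a $\sqrt m\times\sqrt m$ grid and $\Theta(n)$ lines of small rational slope, with no copying trick needed, and that is what the cited reference does. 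Since you are citing rather than proving, this does not affect the correctness of your proposal, but you should drop or fix the faulty scaling paragraph. Note also that the paper's subsequent use of the lemma relies on a specific feature of the cited construction---namely that the lines fall into $t = 2 + \sum_{s=1}^{f(n,m)}\phi(s) = O(n^{2/3}/m^{1/3})$ parallel classes---so it is worth pointing to the grid-with-rational-slopes construction specifically rather than a generic incidence bound.
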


We can restate the construction given in \lemref{lower:bound:many:faces} in terms of overlaying arrangements.  We partition the constructed set of lines into maximal sets, $\Gamma_1, \ldots, \Gamma_t$, each consisting of parallel lines, where
\[
    t = 2 + \sum_{s=1}^{f(n,m)} \phi(s) = O \left( \frac{n^{{2/3}}}{m^{{1/3}}} \right).
\]
Let $P$ be the set of $m$ marking points in $A = A(G)$. Clearly, the overall complexity of the marked faces in $A(\Gamma_1), \ldots, A(\Gamma_t)$ is $O( n)$ (this is the overall complexity of these entire arrangements).  Applying \thmref{many:Arrange:Many:Faces} in this setup, we conclude that the total complexity of the marked faces in $A$ is
\[
    O \left( \frac{\lambda_{3}( t )}{t} n + m \lambda_3(t) \right) = O \left( \left( n + m t \right)\alpha(t) \right) = O \left( \left( n + n^{{2/3}} m^{{2/3}} \right) \alpha(n) \right).
\]

Comparing this with \lemref{lower:bound:many:faces}, we can conclude that the bound of \thmref{many:Arrange:Many:Faces} is very close to tight (within a factor of $\alpha(n)$) in the worst case.

Nevertheless, we believe that the term $O( k \lambda_{s+2}(t) )$ in the bound of \thmref{many:Arrange:Many:Faces} is too large in many specific applications, and an interesting direction for further research is to improve the analysis of \lemref{splitting:number:bound}, so as to get better bounds on the splitting number in various specific cases.

\subsection{The Complexity of Many Faces %
   in Sparse Arrangements}
\seclab{sparse:arrangement}

In this section, we derive some bounds on the complexity of many faces in {\em sparse} arrangements, namely arrangements whose complexity is much smaller than quadratic (in the number of arcs).

\begin{defn}
   Let $\Gamma = \{ \gamma_1, \ldots, \gamma_n \}$ be a set of $n$ Jordan arcs in the plane, so that each pair of arcs from $\Gamma$ have at most $s$ intersection points. We call a function $f:\Gamma \rightarrow \{ 1, \ldots, t \}$ a {\em coloring} of $\Gamma$ by $t$ colors, if $f(\gamma_i) \ne f(\gamma_j)$ for each intersecting pair of arcs $\gamma_i, \gamma_j \in \Gamma$.  The smallest number $t$, for which there exists a coloring of $\Gamma$ with $t$ colors, is called the {\em chromatic number} of $\Gamma$, and is denoted by $\chi( \Gamma )$.

   Let $G(\Gamma) = (\Gamma, E)$ be the graph, such that $(\gamma_i,\gamma_j) \in E$ {iff} $\gamma_i \cap \gamma_j \ne \emptyset$. We call the graph $G(\Gamma)$ the {\em induced graph} of $\Gamma$. Clearly, the chromatic number $\chi(G(\Gamma))$ of $G(\Gamma)$ is equal to $\chi(\Gamma)$, and the complexity of $A(\Gamma)$ is $\Theta( |\Gamma| + |E| )$ (where the constant of proportionality is linear in $s$).
\end{defn}

The following result shows that we can bound the complexity of $k$ faces in an arrangement $A(\Gamma)$, in terms of the chromatic number of $\Gamma$:

\begin{lemma}
    \lemlab{chi:many:faces} Given a set $\Gamma$ of $n$ arcs, as above, with $t=\chi(\Gamma)$, then the complexity of any $k$ faces in $A(\Gamma)$ is
    \[
        O \left( \frac{\lambda_{s+2}(t)}{t} n + k \lambda_{s+2}(t) \right).
    \]
\end{lemma}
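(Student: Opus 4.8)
The plan is to apply the General Combination Theorem (\thmref{many:Arrange:Many:Faces}) to a decomposition of $A(\Gamma)$ induced by an optimal coloring. Fix a coloring $f:\Gamma\to\{1,\ldots,t\}$ realizing the chromatic number $t=\chi(\Gamma)$, and let $\Gamma_i = f^{-1}(i)$ for $1\le i\le t$. Then $\Gamma = \bigcup_{i=1}^t \Gamma_i$ is a partition into $t$ color classes, and $A(\Gamma)$ is the overlay of the $t$ arrangements $A(\Gamma_1),\ldots,A(\Gamma_t)$. The crucial point is that each $\Gamma_i$ consists of pairwise non-intersecting arcs (by definition of a proper coloring), so each arrangement $A(\Gamma_i)$ is a union of disjoint arcs and hence has total complexity $O(|\Gamma_i|)$. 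In particular, the total complexity of the marked faces in $A(\Gamma_i)$ is at most $O(|\Gamma_i|)$, so summing over $i$ gives $C = \sum_i C_i = O\bigl(\sum_i |\Gamma_i|\bigr) = O(n)$, where $C$ is the quantity appearing in \thmref{many:Arrange:Many:Faces}.

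Now I would invoke \thmref{many:Arrange:Many:Faces} directly: choosing one marking point inside each of the $k$ given faces of $A(\Gamma)$, the total complexity of those $k$ faces is
\[
    O\pth{ k\,\lambda_{s+2}(t) + \frac{\lambda_{s+2}(t)}{t}\, C } = O\pth{ k\,\lambda_{s+2}(t) + \frac{\lambda_{s+2}(t)}{t}\, n },
\]
which is exactly the claimed bound. The only subtlety is the standard one already handled in the proofs of \lemref{lower_env} and \thmref{complex:Face:In:Overlay}: if some color classes do not actually appear along the boundary of any marked face, we may discard them and argue with $t' \le t$ classes, using the monotonicity $\lambda_{s+2}(t')/t' \le \lambda_{s+2}(t)/t$ to recover the stated bound.

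The main obstacle — really the only thing requiring care — is verifying that the hypotheses of \thmref{many:Arrange:Many:Faces} are met, namely that the pairwise intersection bound $s$ is inherited by the subarrangements (immediate, since $\bigcup_i \Gamma_i = \Gamma$ and any pair of arcs in $\Gamma$ meets at most $s$ times), and that $C = O(n)$ as argued above. Everything else is a one-line application of the already-proved combination theorem, so the proof is short; the content of the lemma is entirely in recognizing the chromatic decomposition as the right way to view $A(\Gamma)$ as an overlay with cheap pieces.
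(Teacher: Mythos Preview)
Your proof is correct and follows essentially the same approach as the paper: take an optimal coloring, partition $\Gamma$ into the $t$ color classes $\Gamma_i=f^{-1}(i)$ of pairwise disjoint arcs so that each $A(\Gamma_i)$ has complexity $O(|\Gamma_i|)$ and hence $C=O(n)$, and apply \thmref{many:Arrange:Many:Faces}. The extra remark about discarding unused color classes is harmless but not needed here, since the bound from \thmref{many:Arrange:Many:Faces} already absorbs it.
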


\begin{proof}
    Let $f:\Gamma \rightarrow \{1, \ldots, t \}$ be a coloring of $\Gamma$ by $t$ colors. Let $\Gamma_i = f^{-1}(i)$, for $i = 1, \ldots, t$. By definition, no pair of arcs from the same $\Gamma_i$ intersect. Thus the complexity of $A(\Gamma_i)$ is $O( |\Gamma_i| )$. Applying \thmref{many:Arrange:Many:Faces} to the overlay of $A(\Gamma_1), \ldots, A(\Gamma_t)$, it follows that the complexity of any $k$ faces in $A(\Gamma)$ is
    \[
        O \left( \frac{\lambda_{s+2}(t)}{t} n + k \lambda_{s+2}(t) \right).
    \]
\end{proof}

\begin{remark}
    Again, the bound in \lemref{chi:many:faces} is almost tight in the worst case. Indeed, the chromatic number of the arrangement $A$ constructed in \lemref{lower:bound:many:faces} is exactly the number of subarrangements used. Thus, \lemref{chi:many:faces} implies a bound of $O \pth{ \pth{ n + n^{{2/3}} k^{{2/3}} } \alpha(n) }$ on the complexity of $k$ faces in $A$.  Comparing this with \lemref{lower:bound:many:faces}, we can conclude that the bound of \lemref{chi:many:faces} is very close to tight (within a factor of $\alpha(n)$) in the worst case.
\end{remark}

Since $\chi(G)$ is at most the maximal degree of a vertex of $G$ plus one, we obtain:

\begin{lemma}
    \lemlab{limited:intersect:many:faces} Given a set $\Gamma$ of $n$ arcs as above, such that any arc from $\Gamma$ intersects at most $t$ others arcs from $\Gamma$. Then, the complexity of any $k$ faces in $A(\Gamma)$ is
    \[
        O \left( \frac{\lambda_{s+2}(t)}{t} n + k \lambda_{s+2}(t) \right).
    \]
\end{lemma}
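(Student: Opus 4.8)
The plan is to derive this as an immediate corollary of \lemref{chi:many:faces}, using the elementary graph‑coloring fact quoted just above the statement. First I would translate the hypothesis into a bound on the induced graph $G(\Gamma) = (\Gamma, E)$: since each arc $\gamma_i$ intersects at most $t$ others, every vertex of $G(\Gamma)$ has degree at most $t$. By the standard greedy argument, any graph of maximum degree $\Delta$ is $(\Delta+1)$‑colorable, so $\chi(\Gamma) = \chi(G(\Gamma)) \le t+1$.

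Next I would apply \lemref{chi:many:faces} with its chromatic‑number parameter equal to (an upper bound of) $t+1$: the complexity of any $k$ faces in $A(\Gamma)$ is $O\pth{ \frac{\lambda_{s+2}(t+1)}{t+1}\, n + k\,\lambda_{s+2}(t+1) }$. It remains only to replace $t+1$ by $t$ in the two occurrences of $\lambda_{s+2}$. For $t \ge 1$ we may clearly assume $t \le n$, so $t+1 \le 2t$, and then the standard near‑linear growth of Davenport–Schinzel bounds gives $\lambda_{s+2}(t+1) \le \lambda_{s+2}(2t) = O(\lambda_{s+2}(t))$ — exactly the kind of estimate used, e.g., as $\lambda_s(3t-2) = O(\lambda_s(t))$ in the proof of \lemref{lower_env}. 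Consequently also $\frac{\lambda_{s+2}(t+1)}{t+1} = O\pth{\frac{\lambda_{s+2}(t)}{t}}$, and substituting both estimates yields the asserted bound $O\pth{ \frac{\lambda_{s+2}(t)}{t}\, n + k\,\lambda_{s+2}(t) }$. (In the degenerate regime $t = O(1)$ one can alternatively observe directly that both the hypothesis and the claimed bound reduce to $\Theta(n+k)$.)

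There is essentially no obstacle here; this is a one‑line reduction. The only point that warrants a word of care — and the only place an over‑hasty argument could slip — is the passage from $t+1$ back to $t$ inside $\lambda_{s+2}$, which is legitimate precisely because $\lambda_{s+2}$ is nondecreasing and grows almost linearly, so that a multiplicative constant ($2t$ versus $t$) in its argument costs only a constant factor.
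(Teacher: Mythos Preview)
Your proposal is correct and matches the paper's approach exactly: the paper states this lemma immediately after the sentence ``Since $\chi(G)$ is at most the maximal degree of a vertex of $G$ plus one, we obtain:'' and gives no further proof, so it is intended precisely as the one-line reduction to \lemref{chi:many:faces} that you carry out. Your explicit handling of the $t+1$ versus $t$ discrepancy via $\lambda_{s+2}(2t)=O(\lambda_{s+2}(t))$ is the right justification for what the paper silently absorbs into the big-$O$.
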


If $A(\Gamma)$ is a sparse arrangement, we can obtain the following improved bound on the complexity of many faces in it, using the following simple observation (see \cite{a-hwcn-85}):

\begin{lemma}
    The chromatic number $\chi(G)$ of a graph $G = (V, E)$ satisfies\\
    $\chi(G) = O \pth{ \sqrt{|E|} }$.  \lemlab{chromatic:bound}
\end{lemma}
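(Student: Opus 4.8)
The plan is to reduce the statement to a fact about color\emph{-}critical subgraphs and then do a one\emph{-}line edge count. Write $k = \chi(G)$. The first step is to pass to a minimal witness of the chromatic number: choose a subgraph $H$ of $G$ with $\chi(H) = k$ such that deleting any vertex or any edge of $H$ lowers its chromatic number below $k$. Such an $H$ exists because $G$ has only finitely many subgraphs; it is a \emph{$k$-critical} graph.

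The key structural fact I would invoke is that every vertex of a $k$-critical graph $H$ has degree at least $k-1$ in $H$. To see this, suppose some $v \in V(H)$ has $\deg_H(v) \le k-2$. By criticality, $H - v$ admits a proper $(k-1)$-coloring; since $v$ has at most $k-2$ neighbors, at least one of the $k-1$ colors is unused on them, and assigning that color to $v$ yields a proper $(k-1)$-coloring of $H$ — contradicting $\chi(H) = k$.

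With this in hand the conclusion is immediate. Since $\chi(H) = k$, the graph $H$ has at least $k$ vertices, each of degree at least $k-1$, so $2\abs{E(H)} = \sum_{v \in V(H)} \deg_H(v) \ge k(k-1)$, i.e. $\abs{E(H)} \ge \binom{k}{2}$. As $H$ is a subgraph of $G$, we get $\abs{E} \ge \binom{k}{2}$, hence $k(k-1) \le 2\abs{E}$ and therefore $k \le \tfrac{1}{2}\pth{1 + \sqrt{1 + 8\abs{E}}} = O\pth{\sqrt{\abs{E}}}$, as claimed.

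I do not expect a genuine obstacle here: the only point needing an argument is the minimum\emph{-}degree property of critical graphs, which is the short greedy argument sketched above. (Alternatively, one could obtain the same bound by a degeneracy argument — repeatedly removing a minimum\emph{-}degree vertex — but routing through a critical subgraph keeps the edge count cleanest, so that is the route I would take.)
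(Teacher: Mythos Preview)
Your proof is correct. The paper does not actually prove this lemma: it merely states it as a ``simple observation'' with a citation, so there is no argument in the paper to compare against. Your route via a $k$-critical subgraph --- minimum degree at least $k-1$, hence at least $\binom{k}{2}$ edges, hence $k = O(\sqrt{|E|})$ --- is the standard textbook proof and goes through exactly as you describe.
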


As an application of \lemref{chromatic:bound}, we obtain:

\begin{lemma}
    \lemlab{bad:sparse:many:faces} Given a set $\Gamma$ of $n$ arcs as above, such that the total complexity of $A(\Gamma)$ is $w$, then the complexity of any $m$ faces in $A(\Gamma)$ is
    \[ O \left( \frac{\lambda_{s+2}(\sqrt{w})}{\sqrt{w}} n + m \lambda_{s+2}(\sqrt{w}) \right).
    \]
\end{lemma}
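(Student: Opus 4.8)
The plan is to read off the result by combining the two immediately preceding lemmas, \lemref{chromatic:bound} and \lemref{chi:many:faces}. Recall from the definition preceding \lemref{chi:many:faces} that the complexity of $A(\Gamma)$ equals $\Theta(|\Gamma| + |E|)$, where $G(\Gamma) = (\Gamma, E)$ is the induced intersection graph; since this complexity is $w$, we have $|E| = O(w)$. Hence \lemref{chromatic:bound} gives $t := \chi(\Gamma) = \chi(G(\Gamma)) = O\pth{\sqrt{|E|}} = O\pth{\sqrt{w}}$.

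First I would apply \lemref{chi:many:faces} with this particular $t$ (the ``$k$'' there being our $m$): the complexity of any $m$ faces of $A(\Gamma)$ is $O\pth{\frac{\lambda_{s+2}(t)}{t} n + m\lambda_{s+2}(t)}$. Then I would replace $t$ by $\sqrt{w}$. Writing $t \le c\sqrt{w}$ for an absolute constant $c$, and using that $\lambda_{s+2}$ is nondecreasing together with the standard Davenport--Schinzel estimate $\lambda_{s+2}(cx) = O(\lambda_{s+2}(x))$ (already used in the paper, e.g., $\lambda_s(3t-2) = O(\lambda_s(t))$), we get $\lambda_{s+2}(t) = O\pth{\lambda_{s+2}(\sqrt{w})}$. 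For the other term, I would invoke the monotonicity of $x \mapsto \lambda_{s+2}(x)/x$ (the same fact used in the proof of \lemref{lower_env}, where $\frac{\lambda_s(l)}{l} \le \frac{\lambda_s(t)}{t}$ for $l < t$): then $\frac{\lambda_{s+2}(t)}{t} \le \frac{\lambda_{s+2}(c\sqrt{w})}{c\sqrt{w}} = O\pth{\frac{\lambda_{s+2}(\sqrt{w})}{\sqrt{w}}}$. Substituting both estimates yields the claimed bound $O\pth{\frac{\lambda_{s+2}(\sqrt{w})}{\sqrt{w}} n + m\lambda_{s+2}(\sqrt{w})}$.

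There is no substantive obstacle here; the one point that needs care --- the ``hard part,'' such as it is --- is the passage from $t = O(\sqrt{w})$ to $t = \sqrt{w}$ inside the two Davenport--Schinzel expressions, for which one must be sure that the $O(1)$ factor coming out of \lemref{chromatic:bound} can be absorbed, and this is exactly why the growth and monotonicity properties of $\lambda_{s+2}$ are invoked. Everything else is a verbatim citation of \lemref{chi:many:faces} and \lemref{chromatic:bound}, so the proof will be only a couple of lines.
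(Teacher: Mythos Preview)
Your proposal is correct and follows essentially the same approach as the paper: apply \lemref{chromatic:bound} to get $\chi(\Gamma)=O(\sqrt{w})$ and then invoke \lemref{chi:many:faces}. The paper's proof is just the two-line version of what you wrote; your extra care with the constant hidden in $t=O(\sqrt{w})$ via the monotonicity of $\lambda_{s+2}(x)/x$ is exactly the right justification for the step the paper leaves implicit.
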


\begin{proof}
    By \lemref{chromatic:bound}, we have $\chi(\Gamma) = O \pth{ \sqrt{w} }$, so the bound is an immediate consequence of \lemref{chi:many:faces}.
\end{proof}

Note that the bound of \lemref{bad:sparse:many:faces} is rather weak. For example, when $w= O(n^2)$, we only get a bound of $O \pth{m \lambda_{s+2}(n) }$, which is much weaker than the bound given in \thmref{m:Faces:In:Arrangement}. We now improve the bound of \lemref{bad:sparse:many:faces}, using a more careful and more involved analysis.

\begin{theorem}
    \thmlab{sparse:m:faces} Let $\Gamma = \{ \gamma_1, \ldots, \gamma_n \}$ be a set of $n$ Jordan arcs in the plane, such that each arc of $\Gamma$ has at most $\nu$ points of local $x$-extremum, and each pair of arcs of $\Gamma$ have at most $s$ intersection points, for some constants $\nu$ and $s$.  Let $w$ be the total complexity of the arrangement $A(\Gamma)$. Then the complexity of any $m$ distinct faces in $A(\Gamma)$ is
    \[
        O \left( (n + \sqrt{m}\sqrt{w}) \frac{\lambda_{s+2}(n)}{n} \right)\,.
    \]
\end{theorem}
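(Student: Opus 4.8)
The plan is to present $A(\Gamma)$ as the overlay of a partition of $\Gamma$ into a carefully chosen number of classes and to apply the General Combination Theorem (\thmref{many:Arrange:Many:Faces}); the improvement over \lemref{bad:sparse:many:faces} comes from picking the number of classes so as to balance the two terms of that bound against $w$, together with the observation that the total complexity of the \emph{whole} subarrangements $A(\Gamma_j)$ can be kept small \emph{on average}. Fix a point $p_i$ in the interior of each of the $m$ given faces. If $m \ge w$, the combined complexity of the $m$ faces is at most the total complexity $w$ of $A(\Gamma)$, and $w \le \sqrt{mw} \le \sqrt{m}\,\sqrt{w}\,\tfrac{\lambda_{s+2}(n)}{n}$ since $\lambda_{s+2}(n) \ge n$; so the bound holds, and we may assume $m < w$ from now on.

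Set $t = \ceil{\sqrt{w/m}}$; then $t \ge 2$, and $t = O(n)$ because $w = O(n^2)$. The crucial step is to partition $\Gamma = \Gamma_1 \cup \dots \cup \Gamma_t$ so that $\sum_{j=1}^{t} w_j = O(n + w/t)$, where $w_j$ denotes the complexity of $A(\Gamma_j)$. This follows from a routine averaging argument: place each arc of $\Gamma$ into one of the $t$ classes independently and uniformly at random; a fixed intersecting pair of arcs then falls into a common class with probability $1/t$, and as the number of intersecting pairs is $O(w)$ (each pair has at most $s$ intersections), the expected number of intersecting pairs lying in a common class is $O(w/t)$. Since $w_j$ is proportional to $|\Gamma_j|$ plus the number of intersecting pairs inside $\Gamma_j$, and $\sum_j |\Gamma_j| = n$, the expectation of $\sum_j w_j$ is $O(n + w/t)$, so some partition achieves this value.

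Now regard $A(\Gamma)$ as the overlay of $A(\Gamma_1), \dots, A(\Gamma_t)$ and invoke \thmref{many:Arrange:Many:Faces} with the marking points $p_1, \dots, p_m$. The resulting bound on the total complexity of the $m$ faces is $O\!\pth{m\,\lambda_{s+2}(t) + \tfrac{\lambda_{s+2}(t)}{t}\,C}$, where $C$ is the total complexity of the marked faces over the $t$ subarrangements. Since the total complexity of \emph{all} faces of $A(\Gamma_j)$ is $O(w_j)$, we get $C = O\!\pth{\sum_j w_j} = O(n + w/t)$. It remains to substitute $t = \ceil{\sqrt{w/m}}$ and simplify, using the standard facts that $\lambda_{s+2}(O(x)) = O(\lambda_{s+2}(x))$ and that $\lambda_{s+2}(x)/x$ is nondecreasing: one has $w/t \le \sqrt{mw}$, so $\tfrac{\lambda_{s+2}(t)}{t}C = O\!\pth{\tfrac{\lambda_{s+2}(n)}{n}\pth{n + \sqrt{mw}}}$, and $m\,\lambda_{s+2}(t) = O\!\pth{m\cdot\sqrt{w/m}\cdot\tfrac{\lambda_{s+2}(n)}{n}} = O\!\pth{\sqrt{mw}\,\tfrac{\lambda_{s+2}(n)}{n}}$. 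Adding these two estimates gives exactly $O\!\pth{(n + \sqrt{m}\sqrt{w})\tfrac{\lambda_{s+2}(n)}{n}}$.

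The only step that needs any care is the balanced partition, and even there the probabilistic argument above is essentially immediate (it can be derandomized by the method of conditional expectations if a constructive statement is wanted). An alternative route, which is likely the one for which the hypothesis on the number of $x$-extrema becomes relevant, is to replace the random partition by a $(1/t)$-cutting of $A(\Gamma)$ into $O(t + wt^2/n^2)$ cells of constant description complexity, each crossed by at most $n/t$ arcs, and to sum the face complexities cell by cell; the $x$-extrema bound is what keeps the number of cutting cells under control in the sparse regime. I do not anticipate any obstacle in the arithmetic — the content of the proof is the single observation that one should feed \thmref{many:Arrange:Many:Faces} a partition into $\Theta(\sqrt{w/m}\,)$ classes that are individually sparse on average.
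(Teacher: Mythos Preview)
Your proof is correct and takes a genuinely different route from the paper's. The paper proceeds by random sampling \`a la \cite{AEGS92}: it draws a random sample $R\subset\Gamma$ of size $r=\lceil n^2/w\rceil$, builds the vertical (pseudo-trapezoidal) decomposition of $A(R)$ --- this is where the bound $\nu$ on the number of $x$-extrema per arc is actually used --- and then bounds the marked faces cell by cell, invoking \thmref{m:Faces:In:Arrangement} inside each cell and the Zone Theorem for the coastal faces, with the Clarkson--Shor moment bounds controlling $\sum n_i$ and $\sum\sqrt{m_i}\,n_i$. Your argument instead stays entirely within the paper's own Combination-Lemma framework: a random partition of $\Gamma$ into $t=\lceil\sqrt{w/m}\rceil$ color classes keeps $\sum_j w_j = O(n+w/t)$ on average, and a single application of \thmref{many:Arrange:Many:Faces} finishes the job. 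This is strictly simpler --- no cuttings, no Clarkson--Shor, no Zone Theorem --- and, as you observe, it does not need the $x$-extremum hypothesis at all. What the paper's approach buys is a template that also feeds the sharper per-cell bounds of \tblref{best:bounds} to yield \thmref{sparse:everything}; your random-coloring argument does not obviously give those refinements, since it never localizes the problem to small subarrangements where the pseudoline/circle bounds can bite.
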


\begin{proof}
    The proof is similar to the proof of Theorem 3.2 in \cite{AEGS92}.

    Let $P$ be a set of $m$ marking points, one point in the interior of each of the $m$ faces of $A(\Gamma)$ under consideration.  Choose a random sample $R \subset \Gamma$ of size $r$, where $r$ is a parameter to be specified later. We decompose $A(R)$ into pseudo-trapezoidal subcells by drawing a vertical line segment upwards and downwards from each endpoint of an arc of $R$, each intersection point between arcs of $R$, and each locally $x$-extremal point on any edge of $A(R)$, and extend each of these segments until it hits another arc of $R$ or, failing this, all the way to $\pm \infty$.  Let $\tau$ denote the number of resulting pseudo-trapezoids.  Clearly, the expected value of $\tau$ is proportional to the total expected complexity of the arrangement $A(R)$.

    Since each intersection point of $A(R)$ is also an intersection point of $A( \Gamma)$, it follows that the expected size of $A(R)$ is proportional to the expected number of intersection points of $A(\Gamma)$ appearing in $A(R)$, plus the sample size $r$. Since each intersection point of $A(\Gamma)$ has probability $\frac{r(r-1)}{n(n-1)}$ to appear in $A(R)$, it follows that the expected number of intersection points in $A(R)$ is $O \pth{ \frac{r(r-1)}{n(n-1)}w }$. %
    Thus the expected complexity of $A(R)$ is $O \pth{ r + \frac{r(r-1)}{n(n-1)}{w} } = O \pth{ r + \frac{r^2}{n^2} w }$.

    Let $f_1, \ldots, f_{\tau}$ denote the resulting pseudo-trapezoids.  Let $n_i$ be the number of arcs of $\Gamma$ that intersect $f_i$, and let $m_i = |P \cap f_i|$ be the number of points contained in $f_i$. We can perturb the points, if necessary, so as to assume that none of them lies on any of the vertical segments added in the above decomposition. Hence we have $\sum_{i=1}^{\tau} m_i = m$.

    Let $A_i$ denote the subdivision of $f_i$ defined by the $n_i$ arcs intersecting $f_i$. We call a face of $A_i$ {\em coastal}, if it is incident to a vertical edge of the boundary of $f_i$. For each $i=1,\ldots,\tau$, we bound separately the complexity of the marked faces that are fully contained in $A_i$, and add up the resulting bounds. This may not yield an overall bound on the complexity of the marked faces, because (a) it ignores costal faces and (b) a coastal face may be contained in a marked face of $A(\Gamma)$, such that the marking point lies outside $f_i$.  To overcome these problems, we also add to our overall bound the total complexity of all coastal faces in each of the subarrangements $A_i$. It is easily seen that the resulting bound is indeed an upper bound on the complexity of the $m$ given faces.

    By \thmref{m:Faces:In:Arrangement}, the total complexity of the marked faces in $A_i$ is $O \left( \sqrt{m_i} \lambda_{s+2}(n_i) \right) \;$.  The complexity of the coastal faces in $A_i$ is $ O \left( \lambda_{s+2}(n_i) \right), $ by the Zone Theorem of \cite{EGPPSS92}.  Thus the total complexity of the $m$ marked faces in $A(\Gamma)$ is

    \begin{equation}
        \eqlab{bound}%
        O \pth{ \sum_{i=1}^{\tau} (1+\sqrt{m_i}) \lambda_{s+2}(n_i) } = O \pth{ \sum_{i=1}^{\tau} (1+\sqrt{m_i}) n_i } \cdot \frac{\lambda_{s+2}(n)}{n} \,.
    \end{equation}

    In the following analysis, we will use Theorem 3.6 of Clarkson and Shor \cite{cs-arscg-89}, which shows that the expected value of expressions of the form $\sum_{i=1}^{\tau} W \left( \binom{n_i}{d} \right)$, where $W$ is an arbitrary concave non-negative function and $d$ is a positive integer, satisfies
    \[
        E \pth{ \sum_{i=1}^{\tau} W \pth{ \binom{n_i}{d} } } \leq E( \tau ) \cdot W \pth{ D \pth{ \frac{n}{r} }^{d} },
    \]
    for some constant $D$.  We apply this theorem with $d=1$ and $W(x)\equiv x$, to obtain
    \[
        E \left( \sum_{i=1}^{\tau} n_i \right) \leq E(\tau) \cdot \frac{D n}{r},
    \]
    for some constant $D$. Hence the expected value of $\sum_{i=1}^{\tau} n_i$ is
    \[
        O \left( E( \tau) \cdot \frac{n}{r} \right) = O \left( ( r + \frac{r^2}{n^2} w ) \cdot \frac{n}{r} \right) = O \left( n + \frac{r}{n} w \right) \,.
    \]
    We next bound the expected value of the other sum, $\sum_{i=1}^{\tau} \sqrt{m_i} n_i$, appearing in \Eqref{bound}.  Using the Cauchy-Schwarz inequality, the expected value is

    \begin{eqnarray*}
      E \pth{ \sum_{i=1}^{\tau} \sqrt{m_i} n_i }
      &\le&
            E \pth{ \sqrt{
            \sum_{i=1}^{\tau} m_i } \sqrt{ \sum_{i=1}^{\tau} n_i^2
            } } = \sqrt{m} \cdot E \pth{ \sqrt{ \sum_{i=1}^{\tau}
            n_i^2 } }
            \leq
            \sqrt{m} \cdot \sqrt{ E \pth{
            \sum_{i=1}^{\tau} n_i^2 } },
    \end{eqnarray*}
    where the last inequality follows from the inequality $E \pth{ \sqrt{ X } } \leq \sqrt{E \pth{ X } }$.  Since $n_i^2 = 2 \choosex{n_i}{2} + n_i$, we obtain, using Theorem 3.6 of \cite{cs-arscg-89} with $W(x)\equiv x^2$, that

    \begin{eqnarray*}
      E \pth{ \sum_{i=1}^{\tau} \sqrt{m_i} n_i }
      & \leq
      & \sqrt{m}
        \sqrt{ 2 E \pth{ \sum_{i=1}^{\tau} \choosex{n_i}{2} } + E \pth{
        \sum_{i=1}^{\tau} n_i } } \\
      & = & O \pth{ \sqrt{m} \sqrt{ ( r + \frac{r^2}{n^2} w )
            \pth{ \frac{n}{r} }^2} } \\
      & = & O \pth{ \sqrt{m} \sqrt{ \frac{n^2}{r} + w } } \,.
    \end{eqnarray*}

    Thus, substituting in \Eqref{bound}, the expected complexity of the marked faces is

    \begin{equation}
        \eqlab{bound2}%
        O \left( \sum_{i=1}^{\tau} (1+\sqrt{m_i}) n_i \right) \cdot \frac{\lambda_{s+2}(n)}{n} = O \left( n + \frac{r}{n} w + \sqrt{m} \sqrt{ \frac{n^2}{r} + w } \right) \cdot \frac{\lambda_{s+2}(n)}{n} \,.
    \end{equation}

    If we choose $r= \ceil{ \frac{n^{2}}{w} }$, then \Eqref{bound2} becomes
    \[
        O \pth{ \pth{ n+\sqrt{m}\sqrt{w} } \cdot \frac{\lambda_{s+2}(n)}{n} }\,,
    \]
    as asserted.
\end{proof}

\begin{table}[ht]
    \centering%
    \begin{tabular}{|l||c|c|}
     \hline Case
     & Upper bound
     & Lower Bound
     \\
     \hline \hline
     Lines /
     &
       $O \pth{ m^{2/3} n^{2/3} + n } \MakeBigger$
     &
       $\Theta \pth{ m^{2/3}n^{2/3} + n }$
     \\
     \ \ \  Pseudolines
     &
       \cite{cegsw-ccbac-90}
     & \cite{e-acg-87}
     \\
     \hline
         Segments
     &
       $O \pth{ m^{2/3} n^{2/3} + n \alpha(n) + n \log{m} } \MakeBigger$ &
         $\Omega \pth{ m^{2/3}n^{2/3} + n\alpha(n) }$ \\
     & \cite{AEGS92}
     & \cite{EGS90}
     \\
     \hline
     Unit-circles
     &
       $O \pth{ m^{2/3} n^{2/3} \frac{\lambda_{4}(n)}{n} +n } \MakeBigger$
     &
       $\Omega \pth{ m^{2/3}n^{2/3} + n }$
     \\
     &
       \cite{cegsw-ccbac-90}
     &
       \cite{e-acg-87}
     \\
        \hline
     Circles /
     &
       $O \pth{ m^{3/5} n^{4/5} \frac{\lambda_{4}(n)}{n} +n } \MakeBigger$
     &
       $\Omega \pth{ m^{2/3}n^{2/3} + n }$
     \\
     \ \ \ Pseudocircles
     & \cite{cegsw-ccbac-90}
     & \cite{e-acg-87}
     \\
        \hline
         General Curves
       &
         $O \pth{ \sqrt{m} \lambda_{s+2}(n) } \MakeBigger$
       &
         $\Omega \pth{ m^{2/3}n^{2/3} + \lambda_{s+2}(n) }$
     \\
     & \cite{EGPPSS92}, this paper & \cite{EGPPSS92} \\
     \hline
    \end{tabular}

    \caption{The known bounds on the complexity of $m$ faces in arrangements of $n$ curves of some special types.}
   \tbllab{best:bounds}
\end{table}

Note that we always have $w = O(n^2)$, so the above bound is always at most \\ $O \pth{ \sqrt{m} \lambda_{s+2}(n) }$. Thus \thmref{sparse:m:faces} is a strengthening of \thmref{m:Faces:In:Arrangement}, for sparse arrangements.

For the case of segments, a better bound is known:

\begin{theorem}[\cite{AEGS92}]
    The combinatorial complexity of $m$ faces in the arrangement of a set of $n$ line segments with a total of $w$ intersecting pairs is at most
    \[
        O \pth{ m^{2/3} w^{1/3} + n\alpha \pth{ \frac{w}{n} } + n \min \brc{ \log{m}, \log{\frac{w}{n} } }}.
    \]
    \thmlab{sparse:segments}
\end{theorem}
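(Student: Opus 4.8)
The plan is to run the same random-sampling argument used in the proof of \thmref{sparse:m:faces}, but to feed into it the sharper bounds that are available for segments rather than the general-arc bounds. Let $P$ be a set of $m$ points, one in the interior of each face under consideration. We may assume $w \geq n$ (otherwise $A(\Gamma)$ has total complexity $O(n + w) = O(n)$ and the claimed bound is trivially $\Omega(n)$). Pick a random sample $R \subseteq \Gamma$ of size $r = \ceil{ n^2 / w }$ and form the vertical (trapezoidal) decomposition of $A(R)$; since segments have no local $x$-extrema, the number $\tau$ of resulting pseudo-trapezoids is $O\pth{ r + X_R }$ where $X_R$ is the number of intersection points among $R$, and each intersecting pair survives in $R$ with probability $\Theta\pth{ r^2/n^2 }$, so $E(\tau) = O\pth{ r + \tfrac{r^2}{n^2} w } = O\pth{ n^2/w }$ for our choice of $r$.

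For each pseudo-trapezoid $f_i$ let $n_i$ be the number of segments of $\Gamma$ crossing $f_i$ and $m_i = \abs{ P \cap f_i }$, so $\sum_i m_i = m$ (after a generic perturbation of the points off the added vertical edges). Exactly as in \thmref{sparse:m:faces}, the complexity of the $m$ given faces is bounded by the sum over $i$ of (a) the complexity of the marked faces of $A(\Gamma)$ fully contained in $f_i$, for which the known segment bound $O\pth{ \mu^{2/3}\nu^{2/3} + \nu\alpha(\nu) + \nu\log\mu }$ on $\mu$ faces in $\nu$ segments (\cite{cegsw-ccbac-90, AEGS92, EGS90}) gives $O\pth{ m_i^{2/3} n_i^{2/3} + n_i\alpha(n_i) + n_i\log m_i }$, plus (b) the complexity of the coastal faces of the subarrangement $A_i$ inside $f_i$, which is $O\pth{ n_i \alpha(n_i) }$ by the zone theorem of \cite{EGPPSS92} applied to the $O(1)$ arcs and vertical edges bounding $f_i$.

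It remains to bound these sums in expectation. Applying Theorem 3.6 of \cite{cs-arscg-89} with $W(x)\equiv x$ and $d = 1, 2$, and using $n_i^2 = 2\binom{n_i}{2} + n_i$, we get $E\pth{ \sum_i n_i } = O\pth{ E(\tau)\cdot\tfrac{n}{r} } = O(n)$ and $E\pth{ \sum_i n_i^2 } = O\pth{ E(\tau)\pth{\tfrac{n}{r}}^2 } = O\pth{ \tfrac{n^2}{r} + w } = O(w)$. Hölder's inequality with exponents $3/2$ and $3$ gives $\sum_i m_i^{2/3} n_i^{2/3} \le \pth{ \sum_i m_i }^{2/3}\pth{ \sum_i n_i^2 }^{1/3} = m^{2/3}\pth{ \sum_i n_i^2 }^{1/3}$, so by concavity of $x^{1/3}$, $E\pth{ \sum_i m_i^{2/3} n_i^{2/3} } = O\pth{ m^{2/3} w^{1/3} }$. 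For $\sum_i n_i\alpha(n_i)$ one partitions the cells into dyadic bands $n_i \in [\,2^j n/r,\, 2^{j+1} n/r\,)$, uses the Clarkson--Shor exponential tail estimate (the number of cells in band $j$ decays like $2^{-\Omega(j)}$) together with the fact that $\alpha$ increases only by a term subpolynomial in $j$ across bands, to conclude $E\pth{ \sum_i n_i\alpha(n_i) } = O\pth{ \alpha(n/r)\, E\pth{\sum_i n_i} } = O\pth{ n\,\alpha(w/n) }$, since $n/r = \Theta(w/n)$. Finally $\sum_i n_i\log m_i \le \log m \cdot \sum_i n_i$, giving $O\pth{ n\log m }$. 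Combining, the expected complexity of the $m$ faces is $O\pth{ m^{2/3} w^{1/3} + n\alpha(w/n) + n\log m }$.

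To sharpen the last term to $n\min\brc{ \log m, \log(w/n) }$, note that when $m \le w/n$ the bound already reads $n\log m = n\min\brc{ \log m, \log(w/n) }$, so only the case $m > w/n$ remains; there one either runs a case analysis exploiting that $m = O(n + w)$ restricts the range of $m$, or, as in \cite{AEGS92}, applies the theorem recursively inside each $f_i$ with $n_i$ segments and $w_i$ internal intersecting pairs, tracking $E\pth{ \sum_i w_i }$ via Clarkson--Shor. I expect this last point --- extracting the precise $\min\brc{ \log m, \log(w/n) }$ term, and more generally ensuring that the arguments of $\alpha$ and of the logarithm come out as $w/n$ rather than $n$ --- to be the main technical obstacle; everything else is a routine instantiation of the scheme already used for \thmref{sparse:m:faces}, with the general-arc ingredients replaced by their segment-specific counterparts.
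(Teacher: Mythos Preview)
The paper does not actually prove this theorem: it is stated with attribution to \cite{AEGS92} and no proof is given. The only hint the paper offers is the sentence immediately following the statement, which says that the bounds for pseudolines, circles, etc.\ are obtained ``by plugging in the bounds stated in \tblref{best:bounds} \ldots\ into the bound (1) in the proof of \thmref{sparse:segments}, and by manipulating the resulting inequalities using H\"older's inequality.'' So there is nothing in the paper to compare your proposal against beyond that remark.

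That said, your proposal is exactly in the spirit of what the paper suggests and of the original argument in \cite{AEGS92}: random sample of size $r\approx n^2/w$, vertical decomposition, the segment many-faces bound $O(m_i^{2/3}n_i^{2/3}+n_i\alpha(n_i)+n_i\log m_i)$ inside each cell, the zone bound for coastal faces, and Clarkson--Shor plus H\"older to aggregate. Your handling of $\sum m_i^{2/3}n_i^{2/3}$ and $\sum n_i$ is correct, and you honestly flag the two genuinely delicate points: pulling $\alpha(n_i)$ out as $\alpha(w/n)$ rather than $\alpha(n)$, and sharpening $n\log m$ to $n\min\{\log m,\log(w/n)\}$. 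Both of these are where the work in \cite{AEGS92} actually lies, and your sketch of how to handle them (dyadic banding with exponential tails for the first, recursion tracking $\sum w_i$ for the second) is the right direction, though not carried out here. In short: the approach is correct and matches the intended one, but what you have written is an outline rather than a complete proof of the stated bound.
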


In fact, better bounds hold for sparse arrangements of pseudolines\footnote{A collection of pseudo-lines is a collection of unbounded $x$-monotone Jordan arcs, each pair of which intersect in a single point.}, circles, unit-circles and pseudocircles\footnote{A collection of pseudoCircles is a collection of closed Jordan curves, each pair of which intersect in at most two points, and also each curve intersects any vertical line in at most two points.}. The proof is obtained by plugging in the bounds stated in \tblref{best:bounds} (see \cite{cegsw-ccbac-90}) into the bound (1) in the proof of \thmref{sparse:segments}, and by manipulating the resulting inequalities using Holder`s inequality. We omit the easy technical details and just state the results:

\begin{theorem}
    Let $\Gamma$ be a set of $n$ arcs. Let $w$ be the total complexity of the arrangement $A(\Gamma)$. The complexity of $m$ faces in $A(\Gamma)$ is
    \begin{itemize}
        \item $O \pth{ n \alpha(n) + m^{2/3}w^{1/3} }$ if $\Gamma$ is a collection of pseudolines.

        \item $O \pth{ \lambda_{4}(n ) + m^{2/3}w^{1/3} \frac{\lambda_{4}(n)}{n} }$ if $\Gamma$ is a collection of unit-circles.

        \item $O \pth{ \lambda_{4}(n) + m^{3/5}w^{2/5} \frac{\lambda_{4}(n)}{n} }$ if $\Gamma$ is either a collection of circles, or a collection of pseudocircles.
    \end{itemize}
    \thmlab{sparse:everything}
\end{theorem}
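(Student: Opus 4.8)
\emph{Proof plan.} The strategy is to rerun, almost verbatim, the random‑sampling argument used to prove \thmref{sparse:m:faces} (which itself follows Theorem~3.2 of \cite{AEGS92}), but to feed into it the sharper, type‑specific $m$‑faces bounds of \tblref{best:bounds} instead of the general bound of \thmref{m:Faces:In:Arrangement}, and correspondingly to replace the Cauchy--Schwarz step by the appropriate instance of H\"older's inequality. Concretely, I would pick a random sample $R\subseteq\Gamma$ of size $r$, decompose $A(R)$ into $\tau$ pseudo‑trapezoidal cells $f_1,\ldots,f_\tau$ with vertical walls exactly as in that proof, so that $E(\tau)=O\pth{r+\frac{r^2}{n^2}w}$, and for each cell $f_i$ let $n_i$ be the number of arcs of $\Gamma$ meeting $f_i$ and $m_i=|P\cap f_i|$, with $P$ a set of $m$ marking points and $\sum_i m_i=m$. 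As before, the decomposition reduces the task to bounding, inside each $f_i$, the complexity of the marked faces entirely contained in the subarrangement $A_i$ induced by those $n_i$ arcs, plus the complexity of the coastal faces of $A_i$, and adding these bounds over $i$.

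Restricting the $n_i$ arcs to the cell $f_i$ creates no new intersections and preserves $x$‑monotonicity, so a family of pseudolines becomes a family of pseudo‑segments and a family of (unit‑)circles or pseudocircles becomes a family of bounded (pseudo‑)circular arcs; in each case the corresponding $m$‑faces bound from \tblref{best:bounds} applies inside $f_i$ with parameter $n_i$. Hence the total complexity of the marked faces contained in $A_i$ is $O\pth{m_i^{\rho}n_i^{\sigma}\cdot\frac{\lambda_{s+2}(n_i)}{n_i}}$, up to the lower‑order terms of those bounds, where $(\rho,\sigma)=(2/3,2/3)$ for pseudolines and unit‑circles and $(\rho,\sigma)=(3/5,4/5)$ for circles and pseudocircles (the near‑linear factor $\lambda_{s+2}(n_i)/n_i$ being absent in the pseudoline case, where $s=1$). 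The complexity of the coastal faces of $A_i$ is $O(\lambda_{s+2}(n_i))$ by the Zone Theorem of \cite{EGPPSS92}, exactly as in the proof of \thmref{sparse:m:faces}. Using that $\lambda_{s+2}(k)/k$ is non‑decreasing in $k$, all the coastal contributions, together with the lower‑order terms from the $m$‑faces bounds, sum to $O\pth{\frac{\lambda_{s+2}(n)}{n}\sum_i n_i}$.

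Two expectations must then be estimated. By Theorem~3.6 of \cite{cs-arscg-89} with $W(x)\equiv x$, $E\pth{\sum_i n_i}=O\pth{E(\tau)\cdot\frac{n}{r}}=O\pth{n+\frac{r}{n}w}$, as in the proof of \thmref{sparse:m:faces}. For the main term I would apply H\"older's inequality with exponents $\pth{\frac1\rho,\frac1{1-\rho}}$, i.e.\ $\pth{\frac32,3}$ when $\rho=\frac23$ and $\pth{\frac53,\frac52}$ when $\rho=\frac35$; since in both of our cases $\sigma/(1-\rho)=2$ (equivalently the table exponents satisfy $\rho+\sigma/2=1$), this gives $\sum_i m_i^{\rho}n_i^{\sigma}\le\pth{\sum_i m_i}^{\rho}\pth{\sum_i n_i^2}^{1-\rho}=m^{\rho}\pth{\sum_i n_i^2}^{1-\rho}$, and taking expectations with Jensen's inequality yields $E\pth{\sum_i m_i^{\rho}n_i^{\sigma}}\le m^{\rho}\pth{E\pth{\sum_i n_i^2}}^{1-\rho}$. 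Applying Theorem~3.6 of \cite{cs-arscg-89} once more, now with $W(x)\equiv x^2$, gives $E\pth{\sum_i n_i^2}=O\pth{E(\tau)\cdot\pth{\frac{n}{r}}^2}=O\pth{\frac{n^2}{r}+w}$. Combining, the expected total complexity of the $m$ faces is
\[
   O\pth{ m^{\rho}\pth{\frac{n^2}{r}+w}^{1-\rho}\cdot\frac{\lambda_{s+2}(n)}{n} \;+\; \frac{\lambda_{s+2}(n)}{n}\pth{n+\frac{r}{n}w} },
\]
where in the pseudoline case the factor $\frac{\lambda_{s+2}(n)}{n}$ multiplying the first term is absent (the relevant $m$‑faces bound for pseudo‑segments being $O(m^{2/3}n^{2/3}+n\alpha(n))$).

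Finally, choosing $r=\ceil{n^2/w}$ — a legitimate sample size since $n\le w=O(n^2)$ — makes $\frac{r}{n}w=\Theta(n)$ and $\frac{n^2}{r}=O(w)$, so the displayed bound collapses to $O\pth{\lambda_{s+2}(n)+m^{\rho}w^{1-\rho}\frac{\lambda_{s+2}(n)}{n}}$ in the circle cases and to $O\pth{n\alpha(n)+m^{\rho}w^{1-\rho}}$ in the pseudoline case (using $\lambda_3(n)=O(n\alpha(n))$). Substituting $(\rho,1-\rho)=(2/3,1/3)$ with $s=1$ yields the pseudoline bound $O\pth{n\alpha(n)+m^{2/3}w^{1/3}}$, with $s=2$ the unit‑circle bound $O\pth{\lambda_4(n)+m^{2/3}w^{1/3}\frac{\lambda_4(n)}{n}}$, and $(\rho,1-\rho)=(3/5,2/5)$ with $s=2$ the circle/pseudocircle bound $O\pth{\lambda_4(n)+m^{3/5}w^{2/5}\frac{\lambda_4(n)}{n}}$. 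The only genuinely fiddly points are (i) checking that the families obtained by restricting to the cells really do inherit the $m$‑faces and zone estimates of the corresponding global families — routine once one observes that restriction destroys rather than creates intersections — and (ii) carrying the various near‑linear factors $\lambda_{s+2}(\cdot)/(\cdot)$, $\alpha(\cdot)$ and the lower‑order terms correctly through the sums; the H\"older step and the optimization over $r$ are mechanical, and I do not expect a real obstacle.
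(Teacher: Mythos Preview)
Your proposal is correct and follows essentially the same approach the paper sketches: rerun the random-sampling framework of the proof of \thmref{sparse:m:faces} (i.e., bound~\Eqref{bound}), substitute the type-specific $m$-faces bounds from \tblref{best:bounds} in place of \thmref{m:Faces:In:Arrangement}, replace Cauchy--Schwarz by the appropriate H\"older inequality, and optimize with $r=\ceil{n^{2}/w}$. The paper explicitly omits the details you have supplied, so your write-up is in fact more complete than the paper's own proof.
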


For convenience, the bounds for sparse arrangements are summarized in \tblref{sparse:best:bounds}.

\begin{table}[ht]
    \centering%
   \begin{tabular}{|l||c|c|}
     \hline Case & Upper bound & Lower Bound
     \\
      \hline \hline

      Pseudolines &
      $O \pth{ m^{2/3} w^{1/3} + n \alpha(n) } \MakeBigger$ &
      $\Omega \pth{ m^{2/3}w^{1/3} + n }$ \\
      &
      \thmref{sparse:everything} &
      \remref{sparse:pseudo:lines} \\ \hline

      Segments &
      \begin{tabular}{c}
         $O \left( m^{2/3} w^{1/3} + n \alpha\pth{\frac{w}{n}}
         \right.\MakeBigger\;\;\;\;$ \\
         $\;\;\;\;\;\;\;\;+ \left. n \log{m} + n \log{\frac{w}{n}} \right)$
      \end{tabular} &
      $\Omega \pth{ m^{2/3}w^{1/3} + n\alpha \pth{\frac{w}{n}} }$ \\
      & \cite{AEGS92} & \cite{AEGS92} \\
      \hline

      Unit-circles &
      $O \pth{ m^{2/3} w^{1/3} \frac{\lambda_{4}(n)}{n} +
         \lambda_{4}(n) } \MakeBigger$ &
      $\Omega \pth{ m^{2/3}w^{1/3} + n }$ \\ &
      \thmref{sparse:everything} &
      \remref{sparse:pseudo:lines} \\ \hline

      Circles / &
      $O \pth{ m^{3/5} w^{2/5} \frac{\lambda_{4}(n)}{n} +
         \lambda_{4}(n) } \MakeBigger$ &
      $\Omega \pth{ m^{2/3}w^{1/3} + n }$ \\
     \ \ \ \ Pseudocircles
                  &
                    \thmref{sparse:everything}
                                &
                                   \remref{sparse:pseudo:lines}
     \\%
     \hline

      General Curves &
      $O \left( (n + \sqrt{m}\sqrt{w}) \frac{\lambda_{s+2}(n)}{n}
      \right) \MakeBigger$ &
      $\Omega \pth{ m^{2/3}w^{1/3} + \lambda_{s+2}(n) }$ \\ &
      \thmref{sparse:m:faces} &
      \remref{sparse:pseudo:lines} \\ \hline
   \end{tabular}

   \caption{The known bounds on the complexity of $m$ faces, in a sparse arrangement.}
   \tbllab{sparse:best:bounds}
\end{table}

\begin{remark}
    \remlab{sparse:pseudo:lines}%
    The lower bounds stated in \tblref{sparse:best:bounds} follow by simple modifications to the lower bound construction of \cite{AEGS92}.
\end{remark}

\paragraph*{Acknowledgments.}

I wish to thank Micha Sharir, my thesis advisor, for his help in preparing the paper. He has also contributed to the proof of \thmref{sparse:m:faces}.  My thanks also go to David Zheng for spotting typos in the arXiv version---and for proving me wrong in my assumption that no one would ever look at it.

\BibLatexMode{\printbibliography}

\end{document}